\documentclass[11pt]{article}
\usepackage{style}
\addbibresource{0.Main.bib}

\def\COUNT{\text{\sf COUNT}}

\begin{document}

\title{The complexity of counting planar graph homomorphisms of domain size 3}
\author{
  Jin-Yi Cai\thanks{Department of Computer Sciences, University of Wisconsin-Madison.
  Partial support for this research was provided by the Office of VCRGE at UW-Madison
  with funding from WARF, and a Simons Fellowship.}\\
  \texttt{jyc@cs.wisc.edu}
  \and
  Ashwin Maran\footnotemark[1]\\
  \texttt{amaran@wisc.edu}
}
\date{}
\maketitle

\begin{abstract}
    We prove a complexity dichotomy theorem for counting planar graph homomorphisms of domain size 3.
    Given any 3 by 3 real valued symmetric matrix $H$ defining a
    graph homomorphism from all planar graphs $G \mapsto Z_H(G)$, we
    completely classify
    the computational complexity of this problem according to
    the matrix $H$. We show that for every $H$, the problem
    is either polynomial time computable or \#P-hard.
    The P-time computable cases consist of
    precisely those that are  P-time computable
    for general graphs (a complete classification is known~\cite{goldberg2010complexity}) or computable by
    Valiant's holographic algorithm via matchgates.
    We also prove several results about planar graph homomorphisms  for general domain size $q$.
    The proof uses mainly analytic arguments.
\end{abstract}

\section{Introduction}\label{sec: full_introduction}

Given graphs $G$ and $H$, a mapping from $V(G)$ to $V(H)$is called
a \textit{homomorphism} if edges of $G$ are mapped to edges of $H$.
This is put in a more general or quantitative setting by the notion of a \textit{partition function}.
Let  $M= (m_{i, j})$ be a symmetric $q \times q$ matrix.
In this paper we consider arbitrary real entries $m_{i, j} \in \mathbb{R}$; 
if $m_{i, j} \in \{0, 1\}$ (or $m_{i, j} \ge 0$), then $M$ is 
the unweighted (or nonnegatively weighted) adjacency matrix of a graph
$H = H_M$. Given $M$,
the partition function  $Z_{M}(G)$ for any input undirected multi-graph $G = (V, E)$
is defined as
$$Z_{M}(G) = \sum_{\sigma: V \rightarrow [q]} \prod_{(u, v) \in E} m_{\sigma(u), \sigma(v)}.$$
Obviously isomorphic graphs $G \cong G'$ have the same value $Z_{M}(G) = Z_{M}(G')$, and thus every $M$
defines a graph property $Z_{M}(\cdot)$. For a 0-1 matrix $M$,
$Z_{M}(G)$ counts the number of homomorphisms from $G$ to $H$.
Graph homomorphism (GH) encompasses a great deal of graph properties~\cite{lovasz2012large}.

Each $M$ defines a
computational problem, denoted by $\EVAL(M)$: given an input graph $G$ compute $Z_{M}(G)$.
The  complexity of 
$\EVAL(M)$ has been a major focus of research.
A number of increasingly general complexity dichotomy theorems have been achieved.
Dyer and Greenhill \cite{dyer2000complexity} proved that for any 0-1 symmetric matrix $M$, computing $Z_{M}(G)$ is either in P-time or is $\#$P-complete. Bulatov and Grohe \cite{bulatov2005complexity}  found a complete classification for $\EVAL(M)$ for all nonnegative matrices $M$. Goldberg, Grohe, Jerrum and Thurley \cite{goldberg2010complexity} then proved a dichotomy for all real-valued matrices $M$. Finally, Cai, Chen, and Lu \cite{cai2013graph} established a dichotomy for all complex valued matrices $M$.
We also note that graph homomorphism can be viewed as a special case of
counting CSP, with one binary constraint function. For counting CSP,
a series of results established a complexity dichotomy for any set of constraint functions
${\cal F}$, going from 
0-1 valued~\cite{bulatov2013complexity,dyer2010complexity,dyer2011decidability,dyer2013complexity} to nonnegative rational valued~\cite{bulatov2012complexity}, to
nonnegative real valued~\cite{cai2016nonnegative}, to all complex valued functions~\footnote{The
 last counting CSP dichotomy~\cite{cai2017complexity} is not known to have a decidable dichotomy criterion, 
while the dichotomy criterion on GH~\cite{cai2013graph} is polynomial-time decidable. So, \cite{cai2017complexity} does not strictly supersede \cite{cai2013graph}.}.

Parallel to this development, Valiant~\cite{valiant2008holographic} introduced
\emph{holographic algorithms}. It is well known that counting the number of perfect
matchings (\#PM) is \#P-complete~\cite{valiant1979complexity}. On the other hand,
since the 60's, there has been a famous FKT algorithm~\cite{kasteleyn1961statistics,temperley1961dimer,kasteleyn1963dimer,kasteleyn1967graph} that can
compute \#PM on planar graphs in P-time. Valiant's holographic algorithms
greatly extended its reach, in fact so much so that a most intriguing question arises: Is this   a \emph{universal} algorithm that 
every counting problem (expressed as a sum-of-products)
that \emph{can be solved} in P-time on planar graphs
(but \#P-hard in general) \emph{is solved} by this method alone?
Such a universality statement must appear to be extraordinarily, if not overly, ambitious.

To attack this problem, Holant problems are introduced~\cite{cai2009holant}.
Holant problems are edge-models,
where for an input graph, constraint functions
are attached to vertices, and the edges serve as variables.
Typical examples are \#PM, counting proper edge colorings, or cycle covers, etc.
It can be shown that counting CSP can be 
expressed as Holant problems, but not conversely~\cite{freedman2007reflection}.
It is in the framework of Holant problems, a classification of
such counting problems can be studied, and the power of holographic algorithms
be understood.

After a series of work~\cite{cai2009holant,cai2016complete,cai2019holographic, backens2017new, backens2018complete, yang2022local, fu2019blockwise, fu2014holographic, cai2019holographic}
it was established that for every set of complex valued constraint functions ${\cal F}$
on the Boolean domain (i.e., $q=2$) there is a 3-way
classification for  
\#CSP(${\cal F}$): 
 (1) P-time solvable, 
(2) P-time solvable over planar graphs but \#P-hard over general graphs, 
(3) \#P-hard over planar graphs.
Moreover,
category (2) consists of precisely  those problems that can be solved by
Valiant's holographic algorithm using FKT.
Note that, curiously, 
this is a reduction to \#PM, which is a Holant problem, but not a \#CSP problem.
More mysteriously, it is further proved that for the broader class of
Holant problems on the Boolean domain, Valiant's holographic algorithm is 
\emph{not} universal for category (2)~\cite{cai2015holant}.
So far we have very limited knowledge on this universality question
for higher domain problems ($q >2$). Before this work, no complexity
classification was known for the \textit{planar} version of  $\EVAL(M)$ for $q=3$, even for 0-1 matrices $M$.

On the other hand, the  \textit{planar} version of $\EVAL$
(for general $q$) has been found to be intimately related
to quantum information theory~\cite{banica2005quantum,cleve2017perfect,atserias2019quantum,musto2019morita,mancinska2014graph,lupini2020nonlocal}.
 Man{\v{c}}inska and Roberson~\cite{manvcinska2020quantum}
showed that
two graphs $H$ and $H'$ are \emph{quantum isomorphic} iff for every
\emph{planar} input graph $G$, the partition function $Z_H(G) = Z_{H'}(G)$.
This is in contrast to a classical result by Lov\'asz~\cite{lovasz1967operations}
that $H$ and $H'$ are \emph{isomorphic} iff  $Z_H(G) = Z_{H'}(G)$ for \emph{every} graph $G$.
They further proved that it is in general undecidable for two graphs $H$ and $H'$,
whether  $Z_H(G) = Z_{H'}(G)$ for all \emph{planar}  graphs $G$~\cite{manvcinska2020quantum,atserias2019quantum}. ($H$ and $H'$ need not be planar.)

Our goal is strictly on the complexity question. 
What is the computational complexity of $Z_M(G)$ from  planar input graphs $G$?
This paper marks the beginning of this quest.
Let $\PlEVAL(M)$ denote the problem $\EVAL(M)$ 
when the input graphs are restricted to planar graphs.
(Again, the underlying graph $H_M$ is not restricted to planar graphs.)
For domain size $q=3$, we  give a complete classification of the complexity of
$\PlEVAL(M)$ for all real valued matrices $M$.
We prove that an exact classification according to the three categories above hold
for this class,
and a holographic reduction to FKT remains 
a \emph{universal} algorithm for category (2).
As in previous work, generalizing a dichotomy 
from domain size 2 to domain size 3 
has to overcome significant difficulty 
and can lead to future progress~\cite{bulatov2002dichotomy}.
We also prove several results about  $\PlEVAL(M)$  for general $q$.
For example, we give a generic criterion that leads to \#P-hardness of
$\PlEVAL(M)$ for non-negative real matrices (\cref{theorem: full_qTimesqNonBipLatticeNonNegative,theorem: full_qTimesqBipartitieLatticeNonNegative}),
and also prove that  $\PlEVAL(M)$ is $\#$P-hard for almost all
 $M$ (\cref{theorem: full_mesaure0}).

Now we give some highlights of the proof  for the $q=3$ case.
First, we use the Boolean domain dichotomy to handle certain
$3 \times 3$ matrices. This includes both the  \emph{reducible} matrices
as well as a subtler case called \emph{twinned} matrices.
For the latter, we can transform the problem
to a version of the partition function
with degree dependent vertex weights. Then
we use a gadget construction from~\cite{govorov2020dichotomy} 
and interpolation to get rid of
the dependency on vertex degree.

We then formulate a \emph{lattice condition} on the eigenvalues
of $M$, which if satisfied, would allow us to carry out a successful
\#P-hardness proof using Vandermonde systems. After some work,
%
it boils down to proving that  $\PlEVAL(M(p))$  is \#P-hard 
for some real $p>1$, where $M(p)$ is a family of
matrices of the form $(p^{x_{ij}})$
for some integers $x_{ij} \ge 0$.  We have very little control
of $x_{ij}$ except that $M(p)$ has full rank in some small interval
$p \in I_{\epsilon} = (1, 1+\epsilon)$ for some $\epsilon > 0$. Let $\lambda_i = \lambda_i(p)$ be the eigenvalues of $M(p)$
ordered by $|\lambda_1|  \le |\lambda_2| \le |\lambda_3|$.
By the Perron theorem we have $|\lambda_2| < |\lambda_3|$
is strict and there is a well defined and unique  $t(p) \in (0, 1]$
such that $|\lambda_2| = |\lambda_1|^{t(p)} |\lambda_3|^{1-t(p)}$.

The crux of the proof is to show the following: ({\sc i})
$\lim_{p \rightarrow 1^+}t(p)$ exists, and is either 1/2 or 1.
({\sc ii}) $t(p)$ cannot be identically 1/2 in $I_{\epsilon}$.
({\sc iii}) If $t(p)$ is identically 1 in $I_{\epsilon}$, then  $\PlEVAL(M)$  is \#P-hard.
From ({\sc i}), if $t(p)$ is constant in $I_{\epsilon}$ it can only be   1/2 or 1.
From ({\sc ii}) and ({\sc iii}), we may assume neither case holds,
and so  $t(p)$ is not constant.
Thus, by the intermediate value theorem there exists
some $p \in I_{\epsilon}$ such that $t(p)$ is \emph{irrational}. 
This irrational  $t(p)$ will fulfill the lattice condition!

Now a perceptive reader may object that the $p$ that
produces an irrational $t(p)$ may not even be algebraic, and
the usual definition for the complexity
of partition functions requires that the real numbers be
algebraic so that strict bit complexity in terms of Turing machines can apply.

This is a serious quandary. Our proof is intrinsically  analytic.
Also
there are indeed non-constant continuous functions $f(\cdot)$  that map
all algebraic $p$ to rational 
$f(p)$ (see \cref{sec: full_appendixB}). 
Furthermore, it seems hopeless to prove that our $t(p)$ 
is not
 such a function (although probably true). 

We resolve this difficulty by a bold approach---we will allow 
\emph{all} real $M$ for $\EVAL(M)$ and still stay within strict bit complexity  of Turing machines.
This uses the theorem of unique transcendence degree~\cite{jacobson1985basic}.
Details are in \cref{sec: full_modelComputation}.
We remark that this makes it 
non-constructive.
For instance, it is unknown whether $\mathbb{Q}(e, \pi)$
has transcendence degree 1 or 2. If $M$ has  both $e$
and $\pi$ our formal definition of $\EVAL(M)$ treats this degree
as ``known'' (existentially); $M$ is a fixed constant
for the computational problem $\EVAL(M)$, and the complexity statements refer to the
\emph{existence} of either P-time algorithms or reductions
(but not how to get them).

\section{Preliminaries}\label{sec: full_preliminaries}

Let $M = (m_{i, j})$ be a symmetric $q \times q$ real valued matrix, $i, j \in [q]$. Given a planar, undirected multi-graph $G = (V, E)$, we can perform certain elementary operations (that preserve planarity) on the graph $G$ to transform it into a new graph $G'$, such that $Z_{M}(G') = Z_{M'}(G)$ for some matrix $M'$. For most of  this paper we will use  two such operations, thickening and stretching.

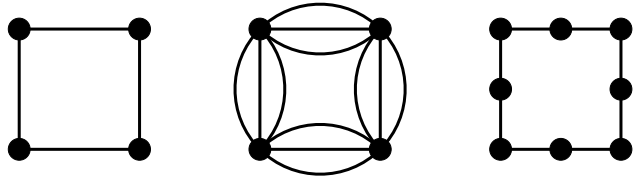
\begin{figure}[b]
	\centering
	\scalebox{0.8}{\begin{tikzpicture}[line join=miter, draw opacity=1]

\node[circle, draw=black, fill=black] (c) at (-1, -1){};
\node[circle, draw=black, fill=black] (c) at (-1, 1){};
\node[circle, draw=black, fill=black] (c) at (1, 1){};
\node[circle, draw=black, fill=black] (c) at (1, -1){};

\draw[line width=0.5mm, black] (-1, -1) -- (-1, 1);
\draw[line width=0.5mm, black] (-1, 1) -- (1, 1);
\draw[line width=0.5mm, black] (1, 1) -- (1, -1);
\draw[line width=0.5mm, black] (1, -1) -- (-1, -1);

\begin{scope}[xshift = 4cm]

\node[circle, draw=black, fill=black] (c) at (-1, -1){};
\node[circle, draw=black, fill=black] (c) at (-1, 1){};
\node[circle, draw=black, fill=black] (c) at (1, 1){};
\node[circle, draw=black, fill=black] (c) at (1, -1){};

\draw[line width=0.5mm, black] (-1, -1) -- (-1, 1);
\draw[line width=0.5mm, black] (-1, 1) -- (1, 1);
\draw[line width=0.5mm, black] (1, 1) -- (1, -1);
\draw[line width=0.5mm, black] (1, -1) -- (-1, -1);

\draw[line width=0.5mm, black] (-1, -1) to [in=225, out=135] (-1, 1);
\draw[line width=0.5mm, black] (-1, -1) to [in=-45, out=45] (-1, 1);

\draw[line width=0.5mm, black] (-1, 1) to [in=135, out=45] (1, 1);
\draw[line width=0.5mm, black] (-1, 1) to [in=225, out=-45] (1, 1);

\draw[line width=0.5mm, black] (1, 1) to [in=45, out=-45] (1, -1);
\draw[line width=0.5mm, black] (1, 1) to [in=135, out=225] (1, -1);

\draw[line width=0.5mm, black] (1, -1) to [in=45, out=135] (-1, -1);
\draw[line width=0.5mm, black] (1, -1) to [in=-45, out=225] (-1, -1);

\end{scope}

\begin{scope}[xshift = 8cm]

\node[circle, draw=black, fill=black] (c) at (-1, -1){};
\node[circle, draw=black, fill=black] (c) at (-1, 1){};
\node[circle, draw=black, fill=black] (c) at (1, 1){};
\node[circle, draw=black, fill=black] (c) at (1, -1){};

\draw[line width=0.5mm, black] (-1, -1) -- (-1, 1);
\draw[line width=0.5mm, black] (-1, 1) -- (1, 1);
\draw[line width=0.5mm, black] (1, 1) -- (1, -1);
\draw[line width=0.5mm, black] (1, -1) -- (-1, -1);

\node[circle, draw=black, fill=black] (c) at (-1, 0){};
\node[circle, draw=black, fill=black] (c) at (0, 1){};
\node[circle, draw=black, fill=black] (c) at (1, 0){};
\node[circle, draw=black, fill=black] (c) at (0, -1){};

\end{scope}

\end{tikzpicture}}
	\caption{A graph $G$, the thickened graph $T_{3}G$, and the stretched graph $S_{2}G$.}
	\label{fig:thickeningStretching}
\end{figure}

From any planar multi-graph $G = (V, E)$, and a positive integer $k$, we can construct the planar multi-graph $T_{k}G$, by replacing every edge in $G$ with $k$ parallel edges between the same vertices. This process is called \emph{thickening}.
Clearly $Z_{M}(T_{k}G) = Z_{T_{k}M}(G)$, where $T_{k}M$ is the $q \times q$ matrix with entries $\big((m_{i, j})^{k}\big)$ for $i, j \in [q]$. In particular, $\PlEVAL(T_{k}M) \leq \PlEVAL(M)$ for all $k \geq 1$.

Similarly, from any planar multi-graph $G = (V, E)$, and a positive integer $k$, we can construct the planar multi-graph $S_{k}G$ by replacing every edge $e \in E$ with a path of length $k$. This process is called \emph{stretching}.
It is also easily seen that  $Z_{M}(S_{k}G) = Z_{S_{k}M}(G)$, where $S_{k}M = M^{k}$, the $k$-th power of $M$. So, we also have  $\PlEVAL(S_{k}M) \leq \PlEVAL(M)$ for all $k \geq 1$.

\subsection{Model of Computation}\label{sec: full_modelComputation}

The Turing machine model is naturally suited to the study of computation over discrete structures such as integers or graphs. 
When  $M \in \mathbb{R}^{q \times q}$, for  $\PlEVAL(M)$ 
one usually restricts $M$ to be a matrix
 with only algebraic numbers. This is strictly for the consideration
 of the model of computation, even though allowing all
 real-valued matrices would be more natural. 
 
 There is a formal (albeit nonconstructive) method to treat  $\PlEVAL(M)$ 
 for arbitrary real-valued matrices $M$ and yet stay strictly
 within the Turing machine model in terms of  bit-complexity.
In this paper, because our proof depends heavily on analytic
argument with continuous functions on the real line, this
logical formal view becomes necessary.

To begin with, we recall a theorem from field theory:
Every extension field ${\bf F}$ over  ${\mathbb Q}$
by a finite set of real numbers is  a finite algebraic extension ${\bf E}'$
of a certain  purely transcendental   extension field ${\bf E}$
over ${\mathbb Q}$,
which has the form ${\bf E} = {\mathbb Q}(X_1,
\ldots, X_m)$ where $m \ge 0$ and $X_1,
\ldots, X_m$ are algebraically independent~\cite{jacobson1985basic} (Theorem 8.35, p.~512).
${\bf F}$ is said to have
a finite transcendence degree $m$ over ${\mathbb Q}$.
It is known that $m$ is uniquely defined for ${\bf F}$.
Since 
$\rm{char}~{\mathbb Q} =0$,
the finite algebraic extension ${\bf E}'$ over ${\bf E}$
is actually simple, ${\bf E}' = {\bf E}(\beta)$ for some $\beta$,
and it is specified by a
minimal polynomial in ${\bf E}[X]$.
Now given a real matrix $M$, let ${\bf F} = {\mathbb Q}(M)$ 
be the extension field by adjoining the entries of $M$.
We  consider $M$ is fixed for the problem $\PlEVAL(M)$,
and thus we may assume (nonconstructively) that the form
${\bf F} = {\bf E}(\beta)$
 and ${\bf E} = {\mathbb Q}(X_1,
\ldots, X_m)$ are given. (This means, among other things,
that the minimal polynomial of $\beta$ over ${\bf E}$ is given,
and all arithmetic operations can be performed on ${\bf F}$.)

Now, the computational problem $\PlEVAL(M)$ is the following:
Given a planar 
$G$, compute $Z_{M}(G)$ as an element in ${\bf F}$
(which is expressed as a polynomial in $\beta$ with coefficients in ${\bf E}$).
More concretely, we can show that this is equivalent to the following problem $\COUNT(M)$:
The input  is a pair $(G,x)$,
  where $G=(V,E)$ is a planar graph and $x\in {\bf F}$.
The output is\vspace{-0.1cm}
$$
\text{\#}_{M}(G,x)= \Big|\big\{\sigma:V\rightarrow
  [q]\hspace{0.08cm}: \hspace{0.08cm} \prod_{(u, v) \in E} m_{\sigma(u), \sigma(v)}=x\big\}\Big|,\label{full_COUNTM}
  $$
a non-negative integer. Note that, in this definition,
we are basically combining terms with the same 
product value in the definition of $Z_{M}(G)$.

Let $n=|E|$.
Define  $X$ to be the  set of all possible product values
appearing in $Z_{M}(G)$:
\begin{equation}\label{full_definitionreuse}
X=\left\{\prod_{i,j\in [q]}m_{ij}^{k_{ij}}\hspace{0.08cm}\Big|\hspace{0.1cm}
  \text{integers $k_{ij}\ge 0$ and $\sum_{i,j\in [q]}k_{ij}=n$}
\right\}.
\end{equation}
There are $\binom{n+q^2-1}{q^2-1} = n^{O(1)}$ many integer sequences
$(k_{i,j})$ such that  $k_{i,j}\ge 0$ and $\sum_{i,j\in [q]}k_{i,j}=n$.
$X$
  is defined as a set, not a multi-set.
After removing repeated
elements the cardinality $|X|$ is also polynomial in $n$.
 For fixed and given ${\bf F}$
 the elements in $X$ can be enumerated in polynomial time in $n$.
 (It is important that ${\bf F}$ and $q$ are all treated as fixed constants.)
It then follows from the definition that
  $\text{\#}_{M}(G,x)= 0$ for any $x\notin X$.
This gives us the following relation:
\[Z_M(G)= \sum_{x\in X} x \cdot \text{\#}_{M}(G,x),\ \ \ \text{for any 
  graph $G$,}\]
and thus, $\PlEVAL(M)\le \COUNT(M).$

For the other direction,
  we construct, for any $p\in [|X|]$ (recall that $|X|$ is polynomial in $n$),
  a  planar graph $T_pG$ from $G$ by replacing
  every edge of $G$ with $p$ parallel edges.
Then,
$$
Z_M(T_pG)= \sum_{x\in X} x^p \cdot \text{\#}_{M}(G,x),\ \ \ \text{for any 
  graph $G$.}
$$
This is a Vandermonde system; it has full rank since
elements in $X$ are distinct by definition. So by
querying $\PlEVAL(M)$ for the
  values of $Z_{M}(T_pG)$,
  we can solve it in polynomial time
  and get $\text{\#}_{M}(G,x)$ for\vspace{0.0015cm} every non-zero $x\in X$.
To obtain $\text{\#}_{M}(G,0)$ (if $0\in X$), we note that
$$
\sum_{x\in X} \text{\#}_{M}(G,x) =q^{|V|}.
$$
This gives us a polynomial-time reduction and thus, $\COUNT(M)\le \PlEVAL(M)$.
%
We have proved 
\begin{lemma}\label{lemma: full_count}
For any fixed  $M \in \mathbb{R}^{q \times q}$,  
   $\PlEVAL(M)\equiv \COUNT(M)$.
\end{lemma}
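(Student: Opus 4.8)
The plan is to establish the equivalence $\PlEVAL(M)\equiv\COUNT(M)$ by exhibiting polynomial-time Turing reductions in both directions, the key device being the thickening operation $T_p$ together with the fact that, with $q$ and the ambient field ${\bf F}={\mathbb Q}(M)$ fixed, the set $X$ of possible product values on an $n$-edge graph has polynomially bounded cardinality and is polynomial-time enumerable. So the first step is to fix the model of computation precisely: since $M$ is fixed, we may assume (nonconstructively, using uniqueness of transcendence degree) that a presentation ${\bf F}={\bf E}(\beta)$ with ${\bf E}={\mathbb Q}(X_1,\ldots,X_m)$ is given, so that elements of ${\bf F}$ are represented as polynomials in $\beta$ with coefficients in ${\bf E}$ and all field arithmetic, as well as equality testing, can be performed in polynomial time.

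For the direction $\PlEVAL(M)\le\COUNT(M)$, I would observe that grouping the $q^{|V|}$ terms of $Z_M(G)=\sum_{\sigma}\prod_{(u,v)\in E}m_{\sigma(u),\sigma(v)}$ according to their common product value gives
\[
Z_M(G)=\sum_{x\in X} x\cdot\text{\#}_M(G,x),
\]
where every $x\notin X$ contributes nothing. Since $|X|=n^{O(1)}$ and $X$ can be listed in polynomial time, a single pass over the oracle answers $\text{\#}_M(G,x)$, one for each $x\in X$, followed by a polynomial number of field operations, computes $Z_M(G)$ exactly in ${\bf F}$. This direction is essentially immediate once the enumeration of $X$ is in hand.

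For the reverse direction $\COUNT(M)\le\PlEVAL(M)$, the idea is interpolation via a Vandermonde system. From $G$ and each $p\in[\,|X|\,]$ form the thickened planar graph $T_pG$; by the identity $Z_M(T_pG)=Z_{T_pM}(G)$ noted earlier, and since replacing an edge by $p$ parallel edges raises each edge-contribution to the $p$-th power, we get
\[
Z_M(T_pG)=\sum_{x\in X} x^{p}\cdot\text{\#}_M(G,x).
\]
Querying the $\PlEVAL(M)$ oracle on each planar graph $T_pG$ yields a linear system in the unknowns $\text{\#}_M(G,x)$ whose coefficient matrix is Vandermonde in the distinct values $\{x:x\in X\}$, hence invertible; solving it (over ${\bf F}$, in polynomial time) recovers $\text{\#}_M(G,x)$ for every nonzero $x\in X$. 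The single missing value $\text{\#}_M(G,0)$, if $0\in X$, is then pinned down by the total count $\sum_{x\in X}\text{\#}_M(G,x)=q^{|V|}$. Combining the two reductions gives the lemma.

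The step I expect to require the most care is not the algebra but the bookkeeping around the model of computation: one must be confident that, with ${\bf F}$ and $q$ treated as fixed, the elements of $X$ can be \emph{generated} and \emph{compared for equality} in time polynomial in $n$ (so that the Vandermonde matrix is genuinely built over distinct nodes and can be inverted exactly), and that the reduction's running time does not secretly depend on the bit-size of the transcendental presentation of $M$. Once the ``$q$, ${\bf F}$ fixed'' convention is invoked consistently, these concerns dissolve, and everything else is routine linear algebra over a field.
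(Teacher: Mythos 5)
Your proposal is correct and follows essentially the same route as the paper: the forward direction by enumerating the polynomially-many product values in $X$ and summing $x\cdot\text{\#}_M(G,x)$, and the reverse direction by thickening $T_pG$ for $p\in[\,|X|\,]$ to set up a full-rank Vandermonde system, with $\text{\#}_M(G,0)$ recovered from $\sum_{x\in X}\text{\#}_M(G,x)=q^{|V|}$. Your remarks on the fixed presentation of $\mathbf{F}$ match the paper's treatment in the model-of-computation discussion.
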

Thus, $\PlEVAL(M)$ can be identified with the 
problem of producing those polynomially many integer coefficients
in the canonical expression for $Z_M(G)$ as a sum
of (distinct) terms from $X$.

This  formalistic view has 
the advantage that we can treat the complexity 
of  $\PlEVAL(M)$ for  general $M$, and not restricted to algebraic numbers. 
 Thus, numbers such as $e$ or $\pi$
need not be excluded.
More importantly, in this paper this generality is essential, due to the proof technique that we employ.
Furthermore, once freed from this restriction we in fact explicitly use
transcendental numbers as a tool in our proof (see \cref{lemma: full_latticeHardnessDelta}).
In short, in this paper, treating the complexity of  $\PlEVAL(M)$ for  general real $M$
is not a \emph{bug} but a \emph{feature}.

 However, we note that this treatment 
 has the following subtlety.  
For the computational problem $\PlEVAL(M)$
the formalistic view demands that
${\bf F}$ be specified in the form ${\bf F} = {\bf E}(\beta)$.
Such a form exists, and its specification is of
constant size when  measured
in terms of the size  of the input graph $G$. 
However, 
in reality many basic questions for transcendental numbers
are unknown.
For example, it is still unknown whether $e + \pi$ or $e \pi$ are
rational, algebraic irrational or transcendental,
and it is open whether ${\mathbb Q}(e, \pi)$ has  transcendence degree
2 (or 1) over  ${\mathbb Q}$, i.e., whether $e$ and $\pi$ are algebraically
independent.
The formalistic view here non-constructively
assumes this information is given for ${\bf F}$.
A  polynomial time reduction $\Pi_1  \le \Pi_2$  from one problem
to another  in this setting merely
implies that the \emph{existence} of a polynomial time algorithm
for $\Pi_2$ logically implies the 
\emph{existence} of  a  polynomial time algorithm
for $\Pi_1$. We do not actually obtain such
an algorithm constructively. 

This logical detour not withstanding, if a reader
is  interested only in the complexity of $\PlEVAL(M)$ 
for integer matrices $M$, then the
complexity dichotomy proved in this paper holds
according to the standard definition of $\PlEVAL(M)$ 
for integral $M$ in terms of
the model of computation; the fact that this is proved 
in a broader setting for all real matrices $M$ 
is irrelevant. This  is akin to the situation
in analytic number theory, where one might be 
interested in a question strictly about the ordinary
integers, but the theorems are proved
in a broader setting of analysis.

\section{Reduction to Boolean domain matrices}\label{sec: full_booleanDomain}

In this section, we handle those $3 \times 3$ matrices for which the planar graph homomorphism problem is equivalent to the same
problem on a $2 \times 2$ matrix. We first state the
following theorem~\cite{guo2020complexity}:

\begin{theorem}\label{theorem: full_domain2Hardness}
	The problem $\PlEVAL(M)$ 	is $\#$P-hard, for
	$M = \left( \begin{smallmatrix}
	x & y\\
	y & z\\
	\end{smallmatrix} \right)$,
 unless one of the following conditions holds, in which case $\PlEVAL(M)$ is polynomially tractable:
 \[   \mbox{(1)~~} xz = y^2,~~~~
	\mbox{(2)~~} y = 0,~~~~
	\mbox{(3)~~} x = z, ~~~~\mbox{or}~~~~
	\mbox{(4)~~} xz = -y^2 ~~\&~~ x = -z. \]
\end{theorem}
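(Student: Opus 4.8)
The plan is to treat the polynomial-time cases and the $\#$P-hard cases separately; the hardness side is where essentially all the work lies. On the tractable side one simply exhibits algorithms. If $xz=y^2$ (case 1) then $M$ has rank at most $1$, so $M=\varepsilon\,\mathbf a\mathbf a^{\mathsf T}$ for a real vector $\mathbf a=(a_1,a_2)$ and a sign $\varepsilon\in\{\pm 1\}$, whence $Z_M(G)=\varepsilon^{|E|}\prod_{v\in V}\bigl(a_1^{\deg v}+a_2^{\deg v}\bigr)$ is computable in polynomial time. If $y=0$ (case 2) then only assignments constant on each connected component contribute, so $Z_M(G)=\prod_C\bigl(x^{e(C)}+z^{e(C)}\bigr)$ over the components $C$ of $G$. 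For $x=z$ (case 3), and for $M$ proportional to $\left(\begin{smallmatrix}1&1\\1&-1\end{smallmatrix}\right)$ (case 4, after the scaling that the hypothesis $xz=-y^2$, $x=-z$ permits), I would invoke Valiant's holographic method: the holographic transformation by the $2\times 2$ Hadamard matrix sends every vertex equality signature to a matchgate-realizable (even-parity) signature and sends the edge matrix to a diagonal matrix in case 3 and to a scaled permutation matrix in case 4, so the transformed planar problem is solved by the FKT algorithm.

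For the hardness side, assume none of (1)--(4) holds. Since $\PlEVAL$ depends on $M$ only up to the equivalences furnished by \cref{lemma: full_count} --- in particular $\PlEVAL(cM)\equiv\PlEVAL(M)$ for any fixed nonzero $c$, and relabeling the two colors gives $\PlEVAL\left(\begin{smallmatrix}x&y\\y&z\end{smallmatrix}\right)\equiv\PlEVAL\left(\begin{smallmatrix}z&y\\y&x\end{smallmatrix}\right)$ --- scaling by $1/y$ (legitimate since $y\neq 0$) puts $M$ into the normal form $\left(\begin{smallmatrix}x&1\\1&z\end{smallmatrix}\right)$, where the standing hypotheses become $xz\neq 1$, $x\neq z$, and $\{x,z\}\neq\{1,-1\}$. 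The core claim is that $\PlEVAL\left(\begin{smallmatrix}x&1\\1&z\end{smallmatrix}\right)$ is $\#$P-hard for every such $(x,z)$, and I would prove it by a polynomial-time reduction from a fixed $\#$P-hard \emph{planar} partition function --- a natural anchor being counting independent sets in planar bounded-degree graphs, which is $\#$P-hard and equals $\PlEVAL\left(\begin{smallmatrix}1&1\\1&0\end{smallmatrix}\right)$, or, more flexibly, the planar Ising partition function with a uniform external field, also known to be $\#$P-hard.

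From $M$ one builds a family of planarity-preserving two-terminal gadgets using thickening ($T_kM=(m_{ij}^k)$), stretching ($S_kM=M^k$), and small ``star''/``triangle'' gadgets, which realize Hadamard products of $M$ with powers $M^k$; these gadgets have effective binary signatures of a common shape, parameterized by an eigenvalue ratio of $M$ raised to the size of the gadget, and this parameter takes infinitely many distinct values. Substituting the $n$-th gadget for the edges of an arbitrary planar instance of the anchor problem and collecting terms expresses the resulting value of $Z_M$ as a fixed, polynomial-size linear combination of values of the anchor partition function over sub-instances; the coefficient matrix is a generalized Vandermonde matrix in the varying parameter, hence invertible provided that parameter is not a root of unity, so one solves the system in polynomial time and recovers the $\#$P-hard anchor value. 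An equivalent route, which parallels the treatment of twinned matrices described in the introduction, first rewrites $Z_M$ as a planar Ising partition function with \emph{degree-dependent} vertex weights and then uses the gadget of \cite{govorov2020dichotomy} together with interpolation to strip away the degree dependence, landing on a $\#$P-hard planar Ising-with-field instance.

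The main obstacle is uniformity over this two-parameter family together with the planarity constraint. In several thin subfamilies the simple gadgets degenerate: when $x+z=0$ the matrix $M^2$ is scalar and the chain gadgets collapse; the cases $x=0$ or $z=0$ behave differently again; and whenever the two eigenvalues of $M$ have equal modulus, or their ratio is a root of unity, the Vandermonde system above is singular. Each such subfamily then requires its own tailored planar gadget --- for example a stretch combined with a thickening, or a fixed planar gadget on a few vertices --- chosen so that the effective signature is once more generic, and one must verify that no new tractable case is created. Moreover the entire reduction has to be carried out without introducing any crossing: the planar-tractable cases (3) and (4) show that no universal planar ``crossover gadget'' exists, so one cannot simply planarize the general-graph hardness proof, and the substitution into the anchor instance must be arranged to keep the graph planar. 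Finally one must confirm that (1)--(4) are exactly the boundary of the hard region. I expect this exhaustive, delicate analysis of the degenerate subfamilies to be where most of the effort goes.
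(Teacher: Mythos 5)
This theorem is not proved in the paper at all: it is quoted verbatim from \cite{guo2020complexity} as an external ingredient, so there is no in-paper argument to compare yours against. Judged on its own terms, your tractability direction is essentially right (for case (4) note that $M$ is, up to sign and scaling, the Hadamard matrix $\left(\begin{smallmatrix}1&1\\1&-1\end{smallmatrix}\right)$, and $Z_M$ is then tractable even on general graphs, so the matchgate machinery is only genuinely needed for the Ising case (3)), and your normalization of the hard region to $\left(\begin{smallmatrix}x&1\\1&z\end{smallmatrix}\right)$ with $xz\neq 1$, $x\neq z$, $\{x,z\}\neq\{1,-1\}$ is the correct reduction of the problem.

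The hardness direction, however, is a plan rather than a proof, and the gap sits exactly where you say you expect the effort to go. No concrete gadget is exhibited whose effective edge weight is computed and shown to take infinitely many values; the nonsingularity of the interpolation system is asserted only ``provided that parameter is not a root of unity''; and the subfamilies where that proviso fails --- $x+z=0$ with $x\neq\pm1$ (equal-modulus eigenvalues, so $M^k$ oscillates between two matrices and stretching yields nothing new), $x=0$ or $z=0$ (the vertex-cover-like regime), and the sign-degenerate cases where thickening $T_2M$ lands back in a tractable family --- are precisely the cases that make the planar $2\times 2$ dichotomy nontrivial, since the absence of a planar crossover gadget (which you correctly note) blocks the generic interpolation argument there. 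Declaring that ``each such subfamily then requires its own tailored planar gadget \ldots chosen so that the effective signature is once more generic'' names the obstacle without overcoming it. As it stands the argument establishes hardness only for generic $(x,z)$ and leaves the boundary strata, which are the substance of the theorem, unproved; to complete it you would need either to carry out those case analyses explicitly or to reduce to an established planar Boolean-domain classification, which is in effect what the paper does by citing \cite{guo2020complexity}.
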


Case $\mbox{(3)}$ (the Ising model) is
precisely the problems for which $\PlEVAL(M)$ is P-time solvable,
but $\EVAL(M)$ is $\#$P-hard; these are also
exactly the ones solvable by a holographic algorithm using matchgates.

\subsection{Reducible matrices}\label{sec: full_domainSeparable}

\begin{definition}\label{definition: full_domainSeparableMatrices}
    A $q \times q$ symmetric matrix $M$ is \emph{reducible}
    if there exists a permutation matrix $P$, such that
    $P M P^{\tt T}$ is a direct sum $A \oplus B 
    =\begin{pmatrix} A & \mathbf{0} \\ \mathbf{0} & B \end{pmatrix}$
    for some (nonempty) matrices $A$ and $B$. A symmetric matrix $M$ is \emph{irreducible} if
    it is not reducible.
    Specialized to $q=3$, 
    a symmetric matrix $M$ is reducible
    if it has the form 
    \begin{equation}\label{eqn: full_disconnectedMatrix}
        M^\pi = \begin{pmatrix}
    		x & y & 0\\
    		y & z & 0\\
    		0 & 0 & w\\
    	\end{pmatrix}
	\end{equation}
	after its rows and columns are permuted by a permutation $\pi$.
\end{definition}

As $M$ is symmetric, the permutation $\pi$ on the
rows and columns must be the same. Clearly $Z_M(G) =
Z_{M^\pi}(G)$ for all $G$.
For a reducible $q \times q$ matrix $M^{\pi} = A \oplus B$, 
it is known from \cite{cai2013graph} (Lemma 4.3) 
that $\EVAL(M)$ is $\#$P-hard iff
at least one of  $\EVAL(A)$ or $\EVAL(B)$ is $\#$P-hard,
and $\EVAL(M)$ is P-time tractable iff both  $\EVAL(A)$ or $\EVAL(B)$ are P-time tractable.
The P-time tractablility statement holds for $\PlEVAL$ as well.
Also, it can be checked that the proof 
for $\#$P-hardness  in \cite{cai2013graph}  (Lemma 4.3) also works in the planar setting.
So, we have the following

\begin{lemma}\label{lemma: full_connectedHard}
    For a $q \times q$ reducible symmetric matrix such that $M = A \oplus B$, $\PlEVAL(M)$ is $\#$P-hard iff at least one of  $\PlEVAL(A)$ or $\PlEVAL(B)$ is $\#$P-hard, and
    $\PlEVAL(M)$ is P-time tractable iff both  $\PlEVAL(A)$ or $\PlEVAL(B)$ are P-time tractable.
\end{lemma}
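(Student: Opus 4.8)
The plan is to reduce everything to the corresponding statement for $\EVAL$ established in \cite{cai2013graph} (Lemma 4.3), checking only that both directions of that argument respect planarity. Write $M = A \oplus B$ after permuting rows and columns (which, as noted, does not change $Z_M$). For the tractability direction, the key observation is that $Z_M$ decomposes along connected components of the input graph. If $G$ is planar with connected components $G_1, \dots, G_c$, then $Z_M(G) = \prod_{i=1}^c Z_M(G_i)$, and the connected components of a planar graph are themselves planar and can be found in polynomial time. So it suffices to compute $Z_M(G_i)$ for each connected $G_i$. But for connected $G_i$, every homomorphism $\sigma : V(G_i) \to [q]$ must map $V(G_i)$ entirely into the index set of $A$ or entirely into the index set of $B$ (since there are no nonzero entries of $M$ bridging the two blocks, and a connected graph forces a consistent choice along any spanning tree). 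Hence $Z_M(G_i) = Z_A(G_i) + Z_B(G_i)$, and if both $\PlEVAL(A)$ and $\PlEVAL(B)$ are P-time tractable we can compute each summand, multiply over components, and obtain $Z_M(G)$ in polynomial time.

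For the hardness direction, suppose without loss of generality that $\PlEVAL(A)$ is $\#$P-hard; we must reduce $\PlEVAL(A)$ to $\PlEVAL(M)$ in a planarity-preserving way. The idea, following \cite{cai2013graph}, is to use thickening to separate out the $A$-block contribution. Using the identity $\PlEVAL(T_k M) \le \PlEVAL(M)$ from the Preliminaries, note $T_k M = T_k A \oplus T_k B$, where $T_k A = (a_{ij}^k)$ etc. For a connected planar input $G'$ (and $\PlEVAL(A)$ restricted to connected graphs is still $\#$P-hard, since disconnected instances factor as above), we have $Z_{T_k M}(G') = Z_{T_k A}(G') + Z_{T_k B}(G')$. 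By choosing $k$ large enough (or using a Vandermonde/interpolation argument over several values of $k$, exactly as in \cite{cai2013graph}), one isolates $Z_{T_k A}(G')$; and $\PlEVAL(A)$ reduces to $\PlEVAL(T_k A)$ by a separate interpolation on thickening within the $A$-block. All graphs appearing in these reductions are obtained from $G'$ by thickening edges, which preserves planarity. Thus $\PlEVAL(A) \le \PlEVAL(M)$, giving $\#$P-hardness of $\PlEVAL(M)$. Conversely, if neither $\PlEVAL(A)$ nor $\PlEVAL(B)$ is $\#$P-hard — which, once the dichotomy is in place, means both are P-time tractable — then by the first paragraph $\PlEVAL(M)$ is tractable, so $\PlEVAL(M)$ $\#$P-hard implies at least one of $\PlEVAL(A), \PlEVAL(B)$ is $\#$P-hard. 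Combining the two directions yields the biconditionals in the statement.

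The only real point requiring care — and the step I expect to be the main obstacle — is verifying that the $\#$P-hardness reduction in \cite{cai2013graph} (Lemma 4.3) uses only gadget constructions that preserve planarity. The decomposition-over-components argument for tractability is entirely routine. For hardness, the worry is whether the original proof ever glues the two blocks together using a non-planar gadget (e.g. a complete bipartite connector between many vertices); if so, we would need to replace it with a planar substitute or argue directly via thickening as sketched above. Since thickening and stretching are the only operations invoked, and both manifestly preserve planarity, I expect the reduction goes through verbatim in the planar setting, and the lemma follows. A careful reader should simply re-examine that the edge-thickening-plus-interpolation argument of \cite{cai2013graph} never leaves the class of planar graphs, which is immediate.
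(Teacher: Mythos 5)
Your overall strategy is the same as the paper's: the paper proves this lemma by citing Lemma 4.3 of \cite{cai2013graph} and asserting that its hardness proof ``can be checked'' to work in the planar setting, and your tractability direction is exactly the right argument, carried out correctly and completely (split the planar input into connected components, note that on a connected component every nonzero-weight assignment lands entirely in the index set of $A$ or entirely in that of $B$, so $Z_M(G_i)=Z_A(G_i)+Z_B(G_i)$, and multiply over components).

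There is, however, a genuine gap in the self-contained mechanism you propose for the hardness direction. Thickening does not isolate $Z_{T_kA}(G')$ from $Z_{T_kB}(G')$. The identity $Z_{T_kM}(G')=\sum_{x\in X}x^k\,\text{\#}_M(G',x)$ shows that interpolating over $k$ recovers only the coefficients $\text{\#}_M(G',x)$, and each such coefficient is the sum $\text{\#}_A(G',x)+\text{\#}_B(G',x)$ of the counts of $A$-assignments and $B$-assignments attaining the \emph{same} product value $x$; no choice of ``$k$ large enough'' and no Vandermonde system in $k$ separates these two contributions (each new value of $k$ adds one equation but the quantities $Z_{T_kA}(G')$ and $Z_{T_kB}(G')$ are themselves new unknowns). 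The device actually used in \cite{cai2013graph} to prove Lemma 4.3 is not thickening but the (First) Pinning Lemma, which forces a distinguished vertex of the connected input to take values in the index set of $A$; its proof attaches constant-size connected gadgets at that single vertex and interpolates over them. That is the construction whose planarity must be verified: a one-vertex amalgamation of a planar gadget with a planar graph is planar, so the check succeeds once one confirms that the distinguishing gadgets can be taken planar, which is the case for separating the two blocks of a direct sum (connectivity of the assignment on a connected gadget already detects which block a vertex maps into). So your instinct about \emph{where} care is required is right, and your conclusion matches the paper's, but your claim that ``thickening and stretching are the only operations invoked'' in the cited proof is false, and the thickening-based substitute you offer in their place would not work.
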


Specifically for the $q = 3$ case, we see that if $M$ is reducible,
$M$ is in form (\ref{eqn: full_disconnectedMatrix}), and $\PlEVAL(B)$
is trivially tractable, when $B$ is a $1 \times 1$ matrix.
Therefore, we have the following lemma.

\begin{lemma}\label{lemma: full_trivialCase}
    For $M$ in  form (\ref{eqn: full_disconnectedMatrix}),
    $\PlEVAL(M) \equiv \PlEVAL(M')$, where
    $M' = \left( \begin{smallmatrix}
        x & y\\
        y & z\\
    \end{smallmatrix} \right)$.
\end{lemma}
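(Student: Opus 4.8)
The plan is to prove the two directions of the equivalence $\PlEVAL(M) \equiv \PlEVAL(M')$ directly, by exhibiting polynomial-time Turing reductions that exploit the block-diagonal structure of $M$. Throughout I will use the elementary fact that $Z_M(\cdot)$ is multiplicative over connected components: if a planar graph $G$ has connected components $G_1,\ldots,G_c$ then $Z_M(G) = \prod_{i=1}^c Z_M(G_i)$, each $G_i$ is again planar, and the same holds for $M'$. Hence it suffices to handle connected inputs, with only polynomial overhead for recombining the product; a component consisting of a single isolated vertex contributes the constant $3$ to $Z_M$ and $2$ to $Z_{M'}$ and needs no oracle call.

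The key observation is the following. Fix a connected planar graph $G$ with at least one edge, and consider any assignment $\sigma: V(G) \to [3]$ in the sum defining $Z_M(G)$. If some vertex $v$ has $\sigma(v)=3$ while an adjacent $u$ has $\sigma(u)\in\{1,2\}$, then $m_{\sigma(u),\sigma(v)}=0$ since $M$ has the form (\ref{eqn: full_disconnectedMatrix}), so the whole term vanishes. Because $G$ is connected, the only assignments that can contribute nonzero terms are those mapping all of $V(G)$ into $\{1,2\}$ — these are precisely the assignments counted in $Z_{M'}(G)$, with identical edge products — together with the single assignment mapping all of $V(G)$ to $3$, which contributes $m_{3,3}^{|E(G)|} = w^{|E(G)|}$. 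Therefore
\[
Z_M(G) = Z_{M'}(G) + w^{|E(G)|}.
\]

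Both reductions follow immediately. For $\PlEVAL(M) \le \PlEVAL(M')$: given $G$, decompose it into connected components; for each component $G_i$ containing at least one edge, make one oracle call to obtain $Z_{M'}(G_i)$ and add the directly computable quantity $w^{|E(G_i)|}$ to get $Z_M(G_i)$; handle isolated-vertex components by the constant $3$; and multiply. For $\PlEVAL(M') \le \PlEVAL(M)$: symmetrically, for each component $G_i$ with an edge query $Z_M(G_i)$ and set $Z_{M'}(G_i) = Z_M(G_i) - w^{|E(G_i)|}$; handle isolated-vertex components by the constant $2$; and multiply. Each reduction makes at most $|V(G)|$ oracle queries on planar graphs and performs polynomially many arithmetic operations in the field ${\bf F} = \mathbb{Q}(M)$ (in particular computing the powers $w^{|E(G_i)|}$ by repeated squaring), so it runs in polynomial time within the model of \cref{sec: full_modelComputation}.

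There is no genuinely hard step here; the only points requiring care are (i) that the claim ``all vertices of a component receive indices from the same block'' really does use connectedness, which is why the component decomposition is carried out first, and (ii) that the auxiliary term $w^{|E(G_i)|}$ is a legitimate polynomial-time computation in the formal model — both are routine. As an alternative, the $\#$P-hardness and P-time-tractability halves of the statement also follow from \cref{lemma: full_connectedHard} together with the trivial tractability of $\PlEVAL$ on the $1\times 1$ matrix $B=(w)$; the direct argument above has the advantage of additionally establishing the full bidirectional equivalence $\equiv$.
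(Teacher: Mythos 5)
Your proof is correct, and it takes a different route from the paper. The paper obtains this lemma as an immediate corollary of \cref{lemma: full_connectedHard} (the general statement for reducible matrices $M = A \oplus B$, adapted from Lemma 4.3 of \cite{cai2013graph}): since the $1\times 1$ block $B=(w)$ is trivially tractable, the complexity of $\PlEVAL(M)$ coincides with that of $\PlEVAL(M')$. You instead prove the identity $Z_M(G) = Z_{M'}(G) + w^{|E(G)|}$ for connected $G$ with at least one edge, using that any assignment straddling the two blocks must place weight $0$ on some edge of a connecting path, and then assemble both Turing reductions via multiplicativity over components. Your argument is more elementary and self-contained, and it has a genuine advantage: it literally establishes the interreducibility $\PlEVAL(M)\equiv\PlEVAL(M')$ asserted in the lemma, whereas the paper's route through \cref{lemma: full_connectedHard} directly yields only the matching of the classifications (\#P-hard iff \#P-hard, tractable iff tractable), which is all the paper needs but is formally weaker than $\equiv$. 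What the paper's approach buys is generality -- it handles an arbitrary direct sum $A\oplus B$ with both blocks nontrivial, where no such clean closed-form correction term is available -- but for a $1\times 1$ block that generality is unnecessary. All the details you flag (connectedness being essential, planarity of components, $w^{|E(G_i)|}$ being computable in the field $\mathbf{F}$ within the formal model of \cref{sec: full_modelComputation}) check out.
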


Thus,  \cref{theorem: full_domain2Hardness} already classifies $\PlEVAL(M)$
for reducible  matrices.

\subsection{Twinned matrices}\label{sec: full_twinned}
	
$\PlEVAL(M)$  is equivalent to a problem on $2 \times 2$ matrices
for another set of matrices $M$.
\begin{definition}\label{definition: full_twinnedMatrices}
	A $q \times q$ symmetric matrix $M$ is a twinned matrix if any of its rows (columns) is a multiple of another row (column). Specialized to $q = 3$, a symmetric matrix $M$ is a twinned matrix if it has the form
    \begin{equation}\label{eqn: full_twinnedMatrix}
        M^\pi = \begin{pmatrix}
		x & cx & y\\
		cx & c^{2}x & cy\\
		y & cy & z\\\end{pmatrix}
	\end{equation}
    after its rows and columns are permuted by a permutation $\pi$.
\end{definition}

Let $M' =  \begin{pmatrix}
	x & y\\
	y & z\\
	\end{pmatrix}$,
	and  $\mathcal{D} = \{D^{[r]}\}_{r \in \mathbb{N}}$, where $D^{[r]}$ denotes vertex weights
$D^{[r]} = \begin{pmatrix}
1 + c^r & 0\\
0 & 1\\
\end{pmatrix}$
for vertices of degree $r$.
Define $\PlEVAL(M', \mathcal{D})$ to
be the problem of evaluating
$$Z_{M', \mathcal{D}}(G) = \sum_{\sigma: V \rightarrow [2]} \prod_{\{u, v\} \in E}m'_{\sigma(u) \sigma(v)}\prod_{v \in V}D^{[\deg (v)]}_{\sigma(v) \sigma(v)},$$
for any planar graph $G = (V, E)$. 

Then for $M$ in  form (\ref {eqn: full_twinnedMatrix}),
\begin{align*}
     Z_{M}(G)
    &= \sum_{\sigma: V \rightarrow \{1, 2\}} \prod_{\{u, v\} \in E} m'_{\sigma(u) \sigma(v)} \left(\sum_{\tau: \sigma^{-1}(1) \rightarrow \{-, +\}}\left(\prod_{v \in \tau^{-1}(+)}(c^{\deg(v)}) \right)\right)\\
    &= \sum_{\sigma: V \rightarrow \{1, 2\}} \prod_{\{u, v\} \in E}m'_{\sigma(u) \sigma(v)}\prod_{v \in V}D^{[\deg (v)]}_{\sigma(v) \sigma(v)}\\
    &= Z_{M', \mathcal{D}}(G).
\end{align*}

Therefore,  $\PlEVAL(M) \equiv \PlEVAL(M', \mathcal{D})$. 
The following theorem is adapted from~\cite{govorov2020dichotomy}. 
(We give a proof for completeness in~\cref{sec: full_appendixA}.)

\begin{theorem}\label{theorem: full_degreeBoundedHardness}
For a full rank $M'$ and $\mathcal{D}$
given above, there exists $p_0 \ge 1$, such that  for all  $p\ge p_0$,
	 $\PlEVAL(N_{p}) \leq \PlEVAL(M', \mathcal{D})$
	 where
	$N_{p} = \begin{pmatrix}
	 c_p^{2}x &  c_py\\
	 c_p y & z
	\end{pmatrix}$ 
	and $c_p = \frac{1 + c^{2p + 1}}{1 + c^{2p}}$.
\end{theorem}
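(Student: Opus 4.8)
The plan is to realize $N_p$ (up to a harmless scaling, which does not affect complexity) as the matrix obtained from $M'$ by attaching a suitable degree-$2p$ or degree-$(2p+1)$ vertex-weighted gadget, so that the degree-dependent weights $D^{[r]}$ collapse to a single $2\times 2$ matrix. Concretely, I would build, for each fixed target degree, a planar gadget that takes an edge of the target graph $G$ and replaces it by a small subgraph whose internal vertices all have degree exactly $2p$ or $2p+1$; then on those internal vertices the weight $D^{[2p]}$ or $D^{[2p+1]}$ is a genuine constant matrix, and contracting the gadget yields an effective edge matrix. The arithmetic should produce exactly $c_p = \frac{1+c^{2p+1}}{1+c^{2p}}$ as the ratio that rescales the first coordinate, because a degree-$r$ vertex in state $1$ contributes the factor $1+c^r$ while state $2$ contributes $1$, and the two relevant degrees $2p$ and $2p+1$ are what appear when one subdivides or doubles edges in a gadget whose external connections are controlled.

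The key steps, in order, are: (i) specify the gadget—most naturally, take the graph $G$ whose $\PlEVAL(N_p)$ we wish to compute, and from it construct a planar graph $G'$ by replacing each edge $uv$ by a path (stretching) and/or parallel edges (thickening) so that every \emph{new internal} vertex has degree $2p$ (contributing weight $1+c^{2p}$ in state $1$) or $2p+1$ (contributing $1+c^{2p+1}$), while the original vertices $u,v$ retain whatever degree they had; (ii) carry out the sum over the internal vertices of $G'$, which factors edge-by-edge and produces, for each original edge, a $2\times 2$ matrix proportional to $N_p$; (iii) observe that the leftover vertex weights on the original vertices of $G'$ combine with the fact that $M'$ has full rank to let us absorb or interpolate away any residual degree dependence—this is where I expect to reuse the interpolation idea alluded to in the introduction (attaching parallel edges and solving a Vandermonde system), so that the factor $(1+c^{\deg v})$ on the original vertices can be handled uniformly; (iv) check planarity is preserved throughout (stretching and thickening both preserve planarity, as noted in the Preliminaries) and that the reduction is polynomial time. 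The requirement $p \ge p_0$ for some threshold $p_0 \ge 1$ should come out of needing $c_p \ne 0$ and needing the relevant gadget matrix to be nonsingular (so that the reduction is reversible / the interpolation system has full rank); for $|c|$ bounded away from the bad values this holds once $p$ is large enough.

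I expect the main obstacle to be step (iii): cleanly eliminating the degree-dependent weights that remain on the \emph{original} vertices of $G$ after the gadget substitution. The gadget naturally normalizes the internal vertices, but the external vertices of each gadget copy (the images of $V(G)$) still carry a factor depending on their degree in $G$, and $N_p$ as stated has no such weights. Removing this cleanly is exactly the kind of step that in \cite{govorov2020dichotomy} is done by a separate interpolation argument using the full rank of $M'$, so I would follow that template: construct an auxiliary family of gadgets parametrized by an integer $t$ (e.g. $t$ extra pendant structures at each vertex) giving a Vandermonde-type system in $t$, invert it, and thereby recover the weight-free partition function $Z_{N_p}(G)$. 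Verifying that this system has full rank—i.e., that the relevant base values are distinct, which forces the lower bound $p_0$ and uses $c \ne 0, \pm 1$ implicitly—will be the delicate point. Since the theorem is explicitly stated to be "adapted from \cite{govorov2020dichotomy}" with a proof deferred to \cref{sec: full_appendixA}, the real content is checking that the planar gadgets in that argument can be chosen planar and that the numerology yields precisely $c_p = \frac{1+c^{2p+1}}{1+c^{2p}}$.
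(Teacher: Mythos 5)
There is a genuine gap at exactly the step you flag as the obstacle, and your proposed fix for it is not the one that works. Your construction leaves the original vertices of $G$ carrying the weight $D^{[\deg_{G'}(u)]}$, i.e.\ a factor $1+c^{\deg_{G'}(u)}$ in state $1$, and you hope to strip this off afterwards by attaching $t$ pendant structures and inverting a Vandermonde system in $t$. But the quantity $\prod_{u:\sigma(u)=1}(1+c^{\deg(u)+t})$ does not factor as a fixed set of bases raised to the power $t$ (the bases depend on the individual degrees $\deg(u)$, which vary over the graph), so you do not get a Vandermonde system whose solution isolates the weight-free partition function. The missing idea in the paper's proof (\cref{sec: full_appendixA}) is a \emph{vertex} gadget, not just an edge gadget: each vertex $u$ is replaced by a cycle of $\deg(u)$ vertices $F_1(u),\dots,F_{\deg(u)}(u)$, joined by copies of the edge gadget $\mathcal{P}_{n,p}=S_2T_pS_n(e)$ and each carrying one dangling edge, so that \emph{every} vertex of the constructed graph has fixed degree $2p$ or $2p+1$ --- no vertex retains its original degree. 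The interpolation is then done in the stretch parameter $n$ of the edge gadget: the gadget realizes $L^{(n)}=B(D^{[2p]}B)^{n-1}$ with $B=T_p(M'D^{[2]}M')$ nondegenerate for large $p$ (this is where $p_0$ comes from, via \cref{lemma: full_govorovTechnical}), and solving the Vandermonde system in $n$ lets one \emph{extrapolate to} $n=0$, where $L^{(0)}=(D^{[2p]})^{-1}$ is diagonal and hence forces all $\deg(u)$ copies of $u$ to take the same spin, re-merging them into a single vertex.

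After this collapse, the residual weight at $u$ is $\bigl(D^{[2p+1]}_{\sigma(u)}/D^{[2p]}_{\sigma(u)}\bigr)^{\deg(u)}=\hat{D}_{\sigma(u)}^{\deg(u)}$ with the \emph{fixed} matrix $\hat{D}=\mathrm{diag}(c_p,1)$; being an exact $\deg(u)$-th power of a degree-independent matrix, it distributes over the edges as $\prod_u \hat D_{\sigma(u)}^{\deg(u)}=\prod_{\{u,v\}\in E}\hat D_{\sigma(u)}\hat D_{\sigma(v)}$, yielding $N_p=\hat D M'\hat D$ and the stated $c_p=(1+c^{2p+1})/(1+c^{2p})$. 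Your intuition about the numerology of $2p$ versus $2p+1$ is right, but without the cycle vertex gadget and the extrapolation to the diagonal $L^{(0)}$ the degree dependence never takes this absorbable power form, so the reduction as you describe it does not go through.
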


Before we prove the hardness for general twinned matrices, we do need to consider a special case where the problem is tractable.

\begin{definition}\label{definition: full_bipartiteMatrices}
	A $q \times q$ symmetric matrix $M$ is a bipartite matrix if it has the form $\begin{pmatrix}
	    \mathbf{0} & A\\
	    A^{\tt T} & \mathbf{0}\\
	\end{pmatrix}$ for some $r \times (q - r)$  matrix $A$, where $0<r<q$,
    after the rows and columns of $M$ are permuted by a permutation $\pi$.
\end{definition}

Specialized to $q = 3$, irreducible bipartite matrices happen to just be twinned matrices such that $x = z = 0$. It is known  that $\PlEVAL(M)$ is tractable in this case~\cite{bulatov2005complexity,goldberg2010complexity}.

\begin{lemma}\label{lemma: full_rankTwoMultipleNonOneC}
	Let $M$ be a twinned real-valued symmetric matrix
	in  form~(\ref{eqn: full_twinnedMatrix}). Assume $M$ is irreducible and non-bipartite.
	Then $\PlEVAL(M)$ is $\#$P-hard, unless $xz = y^{2}$, in which case, it is polynomial-time tractable.
\end{lemma}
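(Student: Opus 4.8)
The plan is to separate according to whether $xz=y^{2}$. If $xz=y^{2}$, then (for every $c$) $\PlEVAL(M)$ is tractable: irreducibility forces $y\neq 0$ (otherwise $M$ is, after a permutation, $\left(\begin{smallmatrix}x&cx\\cx&c^{2}x\end{smallmatrix}\right)\oplus(z)$), hence $x\neq 0$, and then $M=\tfrac1x\,ww^{\tt T}$ with $w=(x,cx,y)^{\tt T}$, because $w_{3}^{2}/x=y^{2}/x=z$ (here we use $xz=y^{2}$). Thus $M$ has rank one and
\[
Z_{M}(G)=x^{-|E|}\sum_{\sigma:V\to[3]}\;\prod_{v\in V}w_{\sigma(v)}^{\deg(v)}=x^{-|E|}\prod_{v\in V}\left(\sum_{i=1}^{3}w_{i}^{\deg(v)}\right),
\]
which is polynomial-time computable. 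From now on assume $xz\neq y^{2}$; then $M'=\left(\begin{smallmatrix}x&y\\y&z\end{smallmatrix}\right)$ is full rank and $\PlEVAL(M)\equiv\PlEVAL(M',\mathcal D)$. Irreducibility also rules out $c=0$ and $y=0$ (each makes $M$ a direct sum after a permutation), and non-bipartiteness rules out $x=z=0$.

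The main subcase is $c\notin\{1,-1\}$. Here \cref{theorem: full_degreeBoundedHardness} gives $\PlEVAL(N_{p})\le\PlEVAL(M)$ for all integers $p\ge p_{0}$, where $N_{p}=\left(\begin{smallmatrix}c_{p}^{2}x&c_{p}y\\c_{p}y&z\end{smallmatrix}\right)$ and $c_{p}=c+\frac{1-c}{1+(c^{2})^{p}}$. Since $c^{2}\notin\{0,1\}$, $(c^{2})^{p}$ is strictly monotone in $p$, so $c_{p}$ — and hence $c_{p}^{2}$ — takes infinitely many distinct values, and $c_{p}\neq 0$ for every $p$ (as $c_{p}=0$ would need $c^{2p+1}=-1$, impossible for real $c\neq -1$). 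Feeding $N_{p}$ into \cref{theorem: full_domain2Hardness}: its condition~(1) is equivalent to $xz=y^{2}$ and~(2) to $y=0$, both excluded; conditions~(3) $c_{p}^{2}x=z$ and~(4) ($xz=-y^{2}$ and $c_{p}^{2}x=-z$) can each hold for only finitely many $p$ (if $x\neq 0$ because $c_{p}^{2}$ is eventually outside $\{z/x,-z/x\}$; if $x=0$ then $z\neq 0$ and $y\neq 0$, so (3),(4) fail outright). Hence $N_{p}$ is $\#$P-hard for all but finitely many $p\ge p_{0}$, and so is $\PlEVAL(M)$.

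It remains to handle $c=\pm1$, where \cref{theorem: full_degreeBoundedHardness} yields only a trivial $N_{p}$; here I use thickening and stretching, recalling that twinning is preserved by $S_{k}$ (ratio stays $c$) and by $T_{k}$ (ratio becomes $c^{k}$), and that for any twinned matrix $N$ with ratio $1$ whose $2\times 2$ reduced matrix $R$ is full rank, \cref{theorem: full_degreeBoundedHardness} yields $\PlEVAL(R)\le\PlEVAL(R,\mathcal D)\equiv\PlEVAL(N)$. \textbf{Case $c=1$.} Let $R_{k}$ be the reduced matrix of $M^{k}=S_{k}M$, so $\PlEVAL(R_{k})\le\PlEVAL(M^{k})\le\PlEVAL(M)$; it suffices to find $k\in\{1,2,4\}$ for which $R_{k}$ avoids all four conditions of \cref{theorem: full_domain2Hardness}. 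Now $R_{1}=M'$ works unless $M'$ lies in condition~(3) ($x=z$) or~(4) ($z=-x,\ y=\pm x$); if~(3), one computes $R_{2}=\left(\begin{smallmatrix}2x^{2}+y^{2}&3xy\\3xy&2y^{2}+x^{2}\end{smallmatrix}\right)$, with $\det R_{2}=2(x^{2}-y^{2})^{2}>0$, which (using $x^{2}\neq y^{2}$ and $x,y\neq 0$) avoids all four conditions; if~(4), then $R_{2}=x^{2}\left(\begin{smallmatrix}3&\pm1\\\pm1&3\end{smallmatrix}\right)$ lies in~(3) while $R_{4}=x^{4}\left(\begin{smallmatrix}19&\pm9\\\pm9&11\end{smallmatrix}\right)$ avoids all four. \textbf{Case $c=-1$.} If $xz\neq\pm y^{2}$, then $T_{2}M$ is twinned with ratio $1$, irreducible and non-bipartite, with reduced matrix $\left(\begin{smallmatrix}x^{2}&y^{2}\\y^{2}&z^{2}\end{smallmatrix}\right)$ of full rank, so the case $c=1$ applies and $\PlEVAL(T_{2}M)\le\PlEVAL(M)$ finishes. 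If $xz=-y^{2}$ and $z\neq-2x$, then $M^{2}=S_{2}M$ is twinned with ratio $-1$, with reduced entries $x'=2x^{2}+y^{2}$, $y'=y(2x+z)$, $z'=2y^{2}+z^{2}$ satisfying $y'\neq 0$, $x'>0$, $x'z'-(y')^{2}=8y^{4}>0$ and $x'z'+(y')^{2}=2y^{2}(4x^{2}+z^{2})>0$, so $M^{2}$ falls under the previous ($c=-1$, $xz\neq\pm y^{2}$) case and $\PlEVAL(M^{2})\le\PlEVAL(M)$ finishes. If $xz=-y^{2}$ and $z=-2x$, then $T_{3}M$ is twinned with ratio $-1$, irreducible, with $x^{3}z^{3}=-(y^{3})^{2}$ but $z^{3}=-8x^{3}\neq-2x^{3}$, so it falls under the case just treated, and $\PlEVAL(T_{3}M)\le\PlEVAL(M)$ finishes.

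The heart of the proof is the main subcase, which follows at once from \cref{theorem: full_degreeBoundedHardness,theorem: full_domain2Hardness} once one observes that $c_{p}$ is a non-constant function of $p$. I expect the main obstacle to be the degenerate subcases $c=\pm1$: there \cref{theorem: full_degreeBoundedHardness} collapses, so one must extract hardness from thickening and stretching alone and verify by direct computation that a short list of explicit low-rank $2\times 2$ matrices ($R_{2}$, $R_{4}$, and the $c=-1$, $xz=-y^{2}$ residuals) escapes the exceptional set of \cref{theorem: full_domain2Hardness}.
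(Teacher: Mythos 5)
Your proof is correct and follows essentially the same route as the paper: tractability via rank one when $xz=y^{2}$; for $c\neq\pm1$, \cref{theorem: full_degreeBoundedHardness} plus the strict monotonicity of $c_{p}$ to escape the exceptional cases of \cref{theorem: full_domain2Hardness}; and for $c=\pm1$, explicit thickening/stretching gadgets ($T_{2}$, $S_{2}$, $T_{3}$, and compositions) checked against the same exceptional list. The only difference is cosmetic — you split by the sign of $c$ and use $S_{4}M$ where the paper uses $S_{2}T_{2}S_{2}M$ — and all the computations you rely on check out.
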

\begin{proof}
	If $xz = y^{2}$ then $M$ is block rank $1$,
	and $\PlEVAL(M)$ is polynomial-time tractable~\cite{bulatov2005complexity,goldberg2010complexity}.
	Now we assume that this does not occur. 
	Note that $y \neq 0$ and $c \neq 0$, for otherwise $M$ would be reducible.
	Also, $(x, z) \neq \mathbf{0}$, for otherwise $M$ would be bipartite.

	Then for the $M'$ and $\mathcal{D}$
    given above, \cref{theorem: full_degreeBoundedHardness}
    applies, and we only need to prove that  $\PlEVAL(N_{p})$
    is \#P-hard for a  large positive integer $p$.
	
	We first assume  $c \ne \pm 1$.
    Then $c_p y \neq 0$ for any $p$, and also
    $\det(N_{p}) \neq 0$.
    It is easy to verify that 
    \[c_{p+1} - c_p = \frac{c^{2p}(c + 1)(c - 1)^{2}}{(1 + c^{2p})(1 + c^{2p + 2})}.
    \]
    Since $c \ne \pm 1$, 
    we have a strictly monotonic sequence $\{c_p \mid p \ge 1\}$.
    Hence for a large $p$, $c_{p}^2 x \ne \pm z$.
    Then $\PlEVAL(N_{p})$ is $\#$P-hard 
    by \cref{theorem: full_domain2Hardness}. It follows that $\PlEVAL(M)$ is also $\#$P-hard. 

    Now suppose $ c= \pm 1$.
    We have several cases.
     
     \begin{description}
         \item{\textbf{Case 1: $x^{2}z^{2} \neq y^{4}$ and $x^{2} \neq z^{2}$}}
         
        In this case we use $T_2M$.
        From \cref{theorem: full_degreeBoundedHardness}, we have $\PlEVAL(N) \leq \PlEVAL(T_{2}M) \leq \PlEVAL(M)$, where
    	$N = \left(\begin{smallmatrix}
    		x^{2} & y^{2}\\
    		y^{2} & z^{2}\\
    	\end{smallmatrix}\right)$.
        It follows from \cref{theorem: full_domain2Hardness} that $\PlEVAL(N)$ is $\#$P-hard, and so is $\PlEVAL(M)$.

        \item{\textbf{Case 2: $x^{2}z^{2} \neq y^{4}$ and $x^{2} = z^{2}$}}
        
        In this case we use $S_2T_2M$.
    	From \cref{theorem: full_degreeBoundedHardness}, we have $\PlEVAL(N') \leq \PlEVAL(S_{2}T_{2}M) \leq \PlEVAL(M)$, where
    	$N' = \left(\begin{smallmatrix}
    		2x^{4} + y^{4} & 3x^{2}y^{2}\\
    		3x^{2}y^{2} & x^{4} + 2y^{4}\\
    	\end{smallmatrix}\right)$.
        Since $x^{2}z^{2} \neq y^{4}$, the rank of $T_{2}M$ is $2$, and therefore, so is the rank of $S_{2}T_{2}M$. Therefore, it follows that $N'$ is also a rank $2$ matrix, and so, $\det(N') \neq 0$. Moreover, we note that since $x^{2} = z^{2}$, $x^{2}z^{2} = x^{4} \neq y^{4}$. Therefore, we see that $2x^{4} + y^{4} \neq x^{4} + 2y^{4}$. So $\PlEVAL(N')$ is $\#$P-hard
        from \cref{theorem: full_domain2Hardness} and so is $\PlEVAL(M)$.

        \item{\textbf{Case 3: $x^{2}z^{2} = y^{4}$ and $2x^{2} \neq y^{2}$}}
    
        In this case we use $S_2M$.
        As $xz \neq y^{2}$, $x^{2}z^{2} = y^{4}$ gives  $xz = -y^{2}$. Since $y \neq 0$, we have $x \neq 0$ and $z \neq 0$.
        Clearly $M = C^{\tt T} M' C$
        where $C = \left( \begin{smallmatrix}
    	1 & c & 0\\
    	0 & 0 & 1\\
    	\end{smallmatrix} \right)$ and
    	$M' = \left( \begin{smallmatrix}
    	x & y\\
    	y & z\\
    	\end{smallmatrix} \right)$. Then
        $S_{2}M$ has the matrix $M^2 = C^{\tt T} M'' C$,
    	where $M'' = M'  \left( \begin{smallmatrix}
    	2 & 0\\
    	0 & 1\\
    	\end{smallmatrix} \right) M'
    	= \left( \begin{smallmatrix}
    	x' & y'\\
    	y' & z'\\
    	\end{smallmatrix} \right)$,
    	with $x' = 2x^{2} + y^{2}, y' = 2xy + yz$ and $z' = 2y^{2} + z^2$.
    
        Clearly $M^2$ has a $2 \times 2$ submatrix 
        $M''$ with $\det(M'') \ne 0$.
        Also $x'z' + (y')^{2} >0$ since $y^2 >0$.
    	Therefore,  $(x')^{2}(z')^{2} - (y')^{4}  \neq 0$.  Also  $(x', z') \neq (0, 0)$. Finally, since $2x^{2} \neq y^{2}$, we have $y' \neq 0$, so $S_{2}M$ is irreducible. Therefore, it follows from cases 1 and 2 that $\PlEVAL(S_{2}M)$ is $\#$P-hard. Since $\PlEVAL(S_{2}M) \leq \PlEVAL(M)$, so is $\PlEVAL(M)$.
	
        \item{\textbf{Case 4: $x^{2}z^{2} = y^{4}$ and $2x^{2} = y^{2}$}}
    
        In this case we use $T_3M$.
        Again we have $xz = -y^{2}$. So, $z = - \frac{y^{2}}{x} = - \frac{2x^{2}}{x} = -2x$. 
        Then
    	$T_{3}M = \left(\begin{smallmatrix}
    		1 & c & 0\\
    		0 & 0 & 1\\
    	\end{smallmatrix}\right)^{\tt T}
    	\left(\begin{smallmatrix}
    		x' & y'\\
    		y' & z'\\
    	\end{smallmatrix}\right) \left(\begin{smallmatrix}
    		1 & c & 0\\
    		0 & 0 & 1\\
    	\end{smallmatrix}\right)$,
    	where
    	$x' = x^{3}, y' = y^{3}$ and $ z' = -8x^{3}$.
        Therefore, $T_{3}M$ is irreducible, $(x', z') \neq (0, 0)$, and $x'z' \neq (y')^{2}$. Also,
    	$(y')^{2} = y^{6} = 8x^{6} \neq 2x^{6} = 2(x')^{2}$.
    	Therefore, this case is reduced to case 3, and $\PlEVAL(T_{3}M)$ is therefore $\#$P-hard. Since $\PlEVAL(T_{3}M) \leq \PlEVAL(M)$, it implies that $\PlEVAL(M)$ is $\#$P-hard.
    \end{description}\vspace*{-\baselineskip}
\end{proof}

Combining the results from \cref{lemma: full_rankTwoMultipleNonOneC}
and \cref{theorem: full_domain2Hardness} we have

\begin{theorem}\label{theorem: full_rankTwoMultipleC}
If $M$ is a twinned real-valued symmetric matrix
	in  form~(\ref{eqn: full_twinnedMatrix}), then	
	$\PlEVAL(M)$ is $\#$P-hard, except in the following cases where $\PlEVAL(M)$ is polynomial-time tractable:
	\[\mbox{(1)~} y = 0,~~
	\mbox{(2)~} x = z = 0,~~
	\mbox{(3)~} xz = y^{2},~~
	\mbox{(4)~} c = 0 ~\&~ x = z ~~\mbox{or}~~
	\mbox{(5)~} c = 0 ~\&~ x = -z ~\&~ xz = -y^2. \]
\end{theorem}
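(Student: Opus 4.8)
The plan is to prove \cref{theorem: full_rankTwoMultipleC} as an assembly of the two dichotomies already available: \cref{theorem: full_domain2Hardness} for $2\times 2$ matrices, and \cref{lemma: full_rankTwoMultipleNonOneC} for irreducible non‑bipartite twinned matrices. The first step is a structural case split according to whether $M$ is reducible, irreducible bipartite, or irreducible non‑bipartite. Writing $M^{\pi}$ in the form~(\ref{eqn: full_twinnedMatrix}), I would first record that $M$ is reducible \emph{iff} $y=0$ or $c=0$: in~(\ref{eqn: full_twinnedMatrix}) rows $1$ and $2$ are proportional, so any permutation that isolates a single coordinate from the other two forces either $y=0$ (isolating coordinate $3$) or $c=0$ (isolating coordinate $2$); conversely if $y=0$ then $M^{\pi}=\left(\begin{smallmatrix} x & cx\\ cx & c^{2}x\end{smallmatrix}\right)\oplus (z)$, and if $c=0$ then, after also transposing rows/columns $2$ and $3$, $M$ is in the reducible form~(\ref{eqn: full_disconnectedMatrix}) with diagonal block $M'=\left(\begin{smallmatrix} x & y\\ y & z\end{smallmatrix}\right)$ and $w=0$.

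\textbf{Case 1: $M$ reducible.} If $y=0$, then by \cref{lemma: full_connectedHard} $\PlEVAL(M)$ is tractable iff both $\PlEVAL\left(\begin{smallmatrix} x & cx\\ cx & c^{2}x\end{smallmatrix}\right)$ and the trivial $1\times 1$ problem are; the former is always tractable since $x\cdot c^{2}x=(cx)^{2}$ puts it in case (1) of \cref{theorem: full_domain2Hardness}. Hence $\PlEVAL(M)$ is tractable, matching listed case (1). If instead $c=0$ (with $y$ arbitrary), \cref{lemma: full_trivialCase} gives $\PlEVAL(M)\equiv\PlEVAL(M')$ with $M'=\left(\begin{smallmatrix} x & y\\ y & z\end{smallmatrix}\right)$, and \cref{theorem: full_domain2Hardness} makes this $\#$P‑hard unless $xz=y^{2}$, or $y=0$, or $x=z$, or ($xz=-y^{2}$ and $x=-z$); these are precisely listed cases (3), (1), (4), (5) specialized to $c=0$.

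\textbf{Case 2: $M$ irreducible and bipartite.} As noted just before \cref{lemma: full_rankTwoMultipleNonOneC}, for $q=3$ this means exactly $x=z=0$, and $\PlEVAL(M)$ is then tractable~\cite{bulatov2005complexity,goldberg2010complexity}; this is listed case (2). \textbf{Case 3: $M$ irreducible and non‑bipartite.} Here \cref{lemma: full_rankTwoMultipleNonOneC} applies directly: $\PlEVAL(M)$ is $\#$P‑hard unless $xz=y^{2}$, in which case it is tractable (listed case (3)). Assembling the three cases, the tractable twinned matrices are exactly those lying in $\{y=0\}\cup\{x=z=0\}\cup\{xz=y^{2}\}\cup\{c=0,\ x=z\}\cup\{c=0,\ x=-z,\ xz=-y^{2}\}$, and every other twinned $M$ is certified $\#$P‑hard by one of \cref{theorem: full_domain2Hardness} or \cref{lemma: full_rankTwoMultipleNonOneC}. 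I expect the only point requiring care to be the bookkeeping in Case 1 with $c=0$: one must verify that the four escape conditions from the $2\times 2$ dichotomy are quoted in the theorem with the correct hypotheses attached — $xz=y^{2}$ and $y=0$ unconditionally (cases (3) and (1), which subsume the others when they would recur elsewhere), and $x=z$, $(x=-z\ \&\ xz=-y^{2})$ only under $c=0$ (cases (4) and (5)) — so that the stated tractable region is neither over‑ nor under‑claimed. There is nothing analytically delicate in this proof; all of the real work has already been absorbed into \cref{theorem: full_degreeBoundedHardness} and the gadget/interpolation arguments behind \cref{lemma: full_rankTwoMultipleNonOneC}.
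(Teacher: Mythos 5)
Your proposal is correct and takes essentially the same route as the paper, whose entire proof of \cref{theorem: full_rankTwoMultipleC} is the one-line remark that it follows by combining \cref{lemma: full_rankTwoMultipleNonOneC} with \cref{theorem: full_domain2Hardness}. Your case split (reducible via $y=0$ or $c=0$, irreducible bipartite, irreducible non-bipartite) and the resulting bookkeeping of the five tractable conditions simply make explicit the assembly the paper leaves implicit, and they check out.
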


\section{Interpolation by Thickening}\label{sec: full_interpolationThickening}

We have successfully classified as polynomially tractable or $\#$P-hard, all problems $\PlEVAL(M)$ for which $\PlEVAL(M)$ was equivalent to a domain two problem. We will now consider the problems for which such equivalences do not hold.
In this section we will furthermore not restrict ourselves to $q = 3$, but instead
consider the problem more generally.

Let us now consider the thickening operation more closely. We note that
\begin{equation}\label{eqn: full_thickening}
    Z_{M}(T_{k}G) = Z_{T_{k}M}(G) = \sum_{x \in X}x^{k} \cdot \text{\#}_{M}(G, x)
\end{equation}
where $X$ is as in \cref{full_definitionreuse}. Note that $\text{\#}_{M}(G, x)$ does not depend on $k$, but depends on the entries of the matrix $M$. We will deal with this dependence now.

\subsection{Generating sets}\label{sec: full_generatingSets}

\begin{definition}\label{definition: full_generatingSet}
	Let $\mathcal{A}$ be a set of nonzero
	real numbers. A finite set of 
	positive real numbers $\{g_{t}\}_{t \in [d]}$, for some integer $d \geq 0$, is a generating set of $\mathcal{A}$ if
for every $a \in \mathcal{A}$, there exists a \emph{unique} $(e_{1}, \dots, e_{d}) \in {\mathbb{Z}}^{d}$ such that
		$a = \pm {g_{1}^{e_{1}} \cdots g_{d}^{e_{d}}}$.
\end{definition}

\begin{lemma}\label{lemma: full_generatingSet}
Every finite set $\mathcal{A}$ of nonzero real numbers has
a generating set.
\end{lemma}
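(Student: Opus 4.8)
## Proof Proposal

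The plan is to prove this by induction on $|\mathcal{A}|$, or more precisely by a careful reduction of the multiplicative structure. First I would observe that the claim is trivially true when $\mathcal{A} = \emptyset$ (take $d = 0$). For the inductive step, the natural idea is to consider the multiplicative group generated by $\mathcal{A}$ inside $\mathbb{R}^{\times}$; since $\mathcal{A}$ is finite, this is a finitely generated abelian group, and modulo the torsion subgroup $\{\pm 1\}$ it is a free abelian group of some finite rank $d$. A basis of this free part, chosen to consist of positive reals, would serve as the generating set, with the uniqueness of exponents being exactly the statement that the group is free of rank $d$ on that basis.

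Concretely, I would proceed as follows. Let $G$ be the subgroup of $(\mathbb{R}^{\times}, \cdot)$ generated by the elements of $\mathcal{A}$. This is finitely generated and abelian, hence by the structure theorem isomorphic to $\mathbb{Z}^{d} \times T$ where $T$ is finite. The only finite subgroups of $\mathbb{R}^{\times}$ are $\{1\}$ and $\{\pm 1\}$, so $T \subseteq \{\pm 1\}$. Write $G_{+} = G \cap \mathbb{R}_{>0}$, a subgroup of index at most $2$; it is torsion-free (a positive real of finite multiplicative order must be $1$) and finitely generated, hence free abelian of rank $d$. Pick a free basis $g_{1}, \dots, g_{d} \in G_{+}$. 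Then every $a \in \mathcal{A} \subseteq G$ is either in $G_{+}$ or is $-1$ times an element of $G_{+}$, and in either case can be written as $\pm g_{1}^{e_{1}} \cdots g_{d}^{e_{d}}$; the exponent vector $(e_{1}, \dots, e_{d}) \in \mathbb{Z}^{d}$ is unique because $g_{1}, \dots, g_{d}$ freely generate $G_{+}$ and the map $\mathbb{Z}^{d} \to G_{+}$ is injective. This gives exactly the required generating set.

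An alternative, more hands-on route that avoids invoking the structure theorem as a black box: induct on $|\mathcal{A}|$, peeling off one element $a$ at a time. If $a$ is multiplicatively independent from a generating set $\{g_{1}, \dots, g_{d}\}$ already built for $\mathcal{A} \setminus \{a\}$ (up to sign), adjoin $|a|$. Otherwise some power $|a|^{m}$ lies in the group generated by the $g_{i}$, and one replaces the $g_{i}$ and $|a|$ by a new basis of the enlarged group using the Smith normal form / elementary divisor argument for subgroups of $\mathbb{Q}$-vector spaces — i.e., rescale exponents by a common denominator and diagonalize. This is essentially reproving that subgroups and overgroups of finitely generated free abelian groups are again free, so the first route is cleaner.

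The main obstacle — really the only subtlety — is handling the sign: $\mathbb{R}^{\times}$ is not torsion-free, so one cannot directly take a basis of $G$ itself, and the generating set is required to consist of \emph{positive} reals. The fix is exactly the passage to $G_{+} = G \cap \mathbb{R}_{>0}$ and noting $-1$ is the only nontrivial torsion element, which is why the definition allows the $\pm$ sign out front. A secondary point to be careful about is that the $g_{i}$ must be genuinely positive and that the representation $a = \pm g_{1}^{e_{1}} \cdots g_{d}^{e_{d}}$ has its sign determined (it is $+$ iff $a > 0$), so uniqueness of the sign is automatic and uniqueness of the exponent tuple follows from freeness; I would state this explicitly to match Definition~\ref{definition: full_generatingSet}.
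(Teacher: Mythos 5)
Your overall strategy (pass to a finitely generated subgroup of $\mathbb{R}^{\times}$, quotient away the sign, and take a free basis) is the same as the paper's, but your main route contains a genuine error in how the sign is handled. You take $G = \langle \mathcal{A}\rangle$ and set $G_{+} = G \cap \mathbb{R}_{>0}$, then claim that every $a \in \mathcal{A}$ is either in $G_{+}$ or is $-1$ times an element of $G_{+}$. That claim is false in general: it requires $|a| \in G$, which holds only when $-1 \in G$ or $a>0$. Take $\mathcal{A} = \{-2\}$. Then $G = \{(-2)^{n} : n \in \mathbb{Z}\}$ does not contain $2$ (nor $-1$), so $G_{+} = \{4^{k} : k \in \mathbb{Z}\}$ with basis $\{4\}$, and $-2$ cannot be written as $\pm 4^{e}$ with $e \in \mathbb{Z}$. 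So the basis of $G_{+}$ is \emph{not} a generating set of $\mathcal{A}$ in the sense of Definition~\ref{definition: full_generatingSet}. The point is that $G_{+}$ can be a proper (finite-index) subgroup of the group you actually need.

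The fix is exactly what the paper does: work from the start with the subgroup $\mathcal{G}$ of $(\mathbb{R}_{>0},\cdot)$ generated by the absolute values $\{|a| : a \in \mathcal{A}\}$, rather than with $\langle\mathcal{A}\rangle \cap \mathbb{R}_{>0}$. Then $|a| \in \mathcal{G}$ for every $a$ by construction, $\mathcal{G}$ is finitely generated and torsion-free (hence free abelian of some rank $d$), and a basis of $\mathcal{G}$ gives the generating set, with uniqueness of the exponent vector following from freeness and uniqueness of the sign from the sign of $a$. Your secondary, hands-on route does adjoin $|a|$ rather than $a$, so it avoids this pitfall, but as written it is only a sketch; the clean repair of your primary argument is the one-line change above.
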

\begin{proof}
Consider the multiplicative group $\mathcal{G}$ 
generated by 
the positive real numbers 
$\{|a| : a \in \mathcal{A}\}$.
It is  a subgroup of the multiplicative group $(\mathbb{R}^+, \cdot)$.
Since  $\mathcal{A}$ is finite, and  $(\mathbb{R}^+, \cdot)$ is torsion-free,
the group $\mathcal{G}$ is  a finitely generated free Abelian group, 
and thus isomorphic to 
$\mathbb{Z}^d$ for some $d \ge 0$.
Let  $\{g_{t}\}_{t \in [d]}$ be a basis of this free Abelian group,
the lemma follows.
\end{proof}

We note that $\{g_{t}\}_{t \in [d]} \subset \mathcal{G}$ 
and $\{\log g_{t}\}_{t \in [d]}$ is linearly independent over $\mathbb{Q}$.



We now use \cref{lemma: full_generatingSet} to find a generating set 
for the entries $(m_{ij})_{i, j \in [q]}$ of any matrix $M \in \mathbb{R}^{q \times q}$ with no zero entries. 
Note that this generating set need not be unique. 
With respect to a generating set, for any $m_{ij}$, 
there are unique integers $e_{ij0} \in \{0, 1\}$, $e_{ij1}, \dots, e_{ijd}$, such that
\begin{equation}\label{eqn: full_generatingM}
    m_{ij} = (-1)^{e_{ij0}} \cdot g_{1}^{e_{ij1}} \cdots g_{d}^{e_{ijd}}.
\end{equation}

We also note that $\PlEVAL(M) \equiv \PlEVAL(cM)$ 
for any real $c \neq 0$, since $Z_{cM}(G) = c^{|E(G)|}Z_{M}(G)$. 
By choosing some $c = g_{1}^{e_{1}'} \cdots g_{d}^{e_{d}'}$
we may assume that 
$e_{ijt} \geq 0$ for all $i, j \in [q]$ and $t \in [d]$.

\begin{lemma}\label{lemma: full_thickeningInterpolation}
	Let $M \in \mathbb{R}^{q \times q}$ be  symmetric with no zero entries,
	with entries $(m_{ij})_{1 \leq i \leq j \leq q}$ 
	given  in \cref{eqn: full_generatingM}. 
	Define
	$\mathcal{M}: \mathbb{R}^{d} \rightarrow \mathbb{R}^{q \times q}$ where
	$\mathcal{M}(\mathbf{p})_{ij} = \mathcal{M}(p_{1}, \dots, p_{d})_{ij} = (-1)^{e_{ij0}} \cdot p_{1}^{e_{ij1}} \cdots p_{d}^{e_{ijd}}$ 
	for all $i, j \in [q]$. 
	Then, $\PlEVAL(\mathcal{M}(\mathbf{p})) \leq \PlEVAL(M)$ 
	for all $\mathbf{p} \in \mathbb{R}^{d}$.
\end{lemma}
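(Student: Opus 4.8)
The plan is to prove this by interpolation, using the thickening operation as the vehicle, exactly as the section title suggests and as foreshadowed by equation~(\ref{eqn: full_thickening}). Fix a planar graph $G$ and recall that $Z_M(T_kG) = \sum_{x \in X} x^k \cdot \#_M(G,x)$, where $X$ is the polynomial-size set of product values from~(\ref{full_definitionreuse}). Each $x \in X$ is a product of powers of entries $m_{ij}$, hence, via the generating-set representation~(\ref{eqn: full_generatingM}), each $x$ has the form $x = \pm g_1^{a_1(x)} \cdots g_d^{a_d(x)}$ for integer exponents $a_t(x)$ determined by how many times each entry appears in the product defining $x$. The key point is that $\#_M(G,x)$ depends only on the \emph{combinatorial type} of the product, i.e., on the multiset of matrix positions used, not on the numerical values $g_t$; so replacing each $g_t$ by a fresh positive real $p_t$ and each sign factor $(-1)^{e_{ij0}}$ by the same sign yields the matrix $\mathcal{M}(\mathbf p)$, and the quantity $\#_{\mathcal{M}(\mathbf p)}(G, \cdot)$ is supported on a set $X'$ in bijective correspondence with $X$, with the \emph{same} integer counts, i.e. $\#_{\mathcal M(\mathbf p)}(G, x') = \#_M(G, x)$ where $x' = \pm p_1^{a_1(x)}\cdots p_d^{a_d(x)}$ corresponds to $x$. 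Therefore, by \cref{lemma: full_count} (which identifies $\PlEVAL$ with $\COUNT$, i.e.\ with producing these integer counts), it suffices to recover the numbers $\#_M(G,x)$ for all $x \in X$ from an oracle for $\PlEVAL(M)$.

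First I would set up the interpolation. Querying the oracle for $\PlEVAL(M)$ on the thickened graphs $T_kG$ for $k = 1, 2, \ldots, N$ (where $N = |X| = n^{O(1)}$) gives the values $Z_M(T_kG) = \sum_{x \in X} x^k \#_M(G,x)$. This is a linear system in the unknowns $\#_M(G,x)$ with coefficient matrix $(x^k)_{k \in [N], x \in X}$, which is (a column-scaled) Vandermonde matrix; it is invertible precisely because the elements of $X$ are pairwise distinct, which holds by definition of $X$ as a set and because the $g_t$ are multiplicatively independent (equivalently $\{\log g_t\}$ are linearly independent over $\mathbb Q$, as remarked after \cref{lemma: full_generatingSet}), so distinct exponent vectors give distinct values. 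Hence in polynomial time we solve for every $\#_M(G,x)$, $x \in X$. By the bijection above these are exactly the counts $\#_{\mathcal M(\mathbf p)}(G, x')$ needed to reconstruct $Z_{\mathcal M(\mathbf p)}(G) = \sum_{x'} x' \#_{\mathcal M(\mathbf p)}(G, x')$, and this reduction is polynomial time and preserves planarity (thickening does). This gives $\PlEVAL(\mathcal M(\mathbf p)) \le \PlEVAL(M)$ for all $\mathbf p \in \mathbb R^d$; no positivity or genericity of $\mathbf p$ is required because the combinatorial counts are $\mathbf p$-independent.

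The one subtlety I would be careful about — and I expect it to be the main obstacle — is the bijection between $X$ and $X'$ and the claim $\#_{\mathcal M(\mathbf p)}(G,x') = \#_M(G,x)$. The issue is collisions: a priori two different exponent vectors $(k_{ij})$ could give the same value of $x \in \mathbb R$ for the specific numbers $g_t$ but different values for generic $p_t$, or vice versa. For $M$ itself, $X$ is a set, so terms with coinciding values are already merged when we write $Z_M(G) = \sum_{x \in X} x \#_M(G,x)$; when we pass to $\mathcal M(\mathbf p)$ with generic $\mathbf p$, terms can only \emph{split} further, never merge more, which is harmless — we would still recover $Z_{\mathcal M(\mathbf p)}(G)$ correctly, just as a finer sum. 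But to state the clean reduction I would phrase it at the level of the representation: the map sending the product $\prod m_{ij}^{k_{ij}}$ to the formal monomial $\pm \prod_t p_t^{\sum_{ij} k_{ij} e_{ijt}}$ is well-defined on exponent vectors, and since $\{\log g_t\}$ are $\mathbb Q$-linearly independent, two exponent vectors $(k_{ij})$, $(k'_{ij})$ give the same value in $X$ if and only if they give the same formal monomial (same total exponent of each $g_t$ up to sign), hence the same value of $x'$; so the correspondence $X \leftrightarrow X'$ is a genuine bijection and $\#_M(G, \cdot)$ transports exactly. With that lemma in hand the rest is the routine Vandermonde argument above, and the proof is complete.
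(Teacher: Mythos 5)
Your proposal is correct and follows essentially the same route as the paper: thicken, solve the full-rank Vandermonde system for the counts $\text{\#}_{M}(G,x)$, then re-evaluate each $x\in X$ with $g_{t}$ replaced by $p_{t}$ via the unique generating-set representation. One caveat: your claim that the correspondence $X\leftrightarrow X'$ is a genuine bijection with $\text{\#}_{\mathcal{M}(\mathbf{p})}(G,x')=\text{\#}_{M}(G,x)$ is false for non-generic $\mathbf{p}$ (e.g.\ $\mathbf{p}=\mathbf{1}$ or $\mathbf{p}$ with zero coordinates, both of which the paper later uses), since distinct formal monomials in the $p_{t}$ can evaluate to the same real number; the unique representation only gives a bijection from $X$ onto formal monomials, and the paper accordingly writes $\text{\#}_{\mathcal{M}(\mathbf{p})}(G,y)=\sum_{x\in X:\,y(x)=y}\text{\#}_{M}(G,x)$. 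This misstatement is harmless for your argument, because the quantity actually output is $\sum_{x\in X}y(x)\cdot\text{\#}_{M}(G,x)=Z_{\mathcal{M}(\mathbf{p})}(G)$, which is correct whether or not the map $y$ is injective on $X$.
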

\begin{proof}
    We already know that for any $k \geq 1$,
    $$Z_{M}(T_{k}G) = \sum_{x \in X}x^{k} \cdot \text{\#}_{M}(G, x),$$
    where
    $$X = \left\{\prod_{i,j\in [q]}m_{ij}^{k_{ij}}\hspace{0.08cm}\Big|\hspace{0.1cm}
    \text{integers $k_{ij}\ge 0$ and $\sum_{i,j\in [q]}k_{ij}=|E|$} \right\}.$$
    Since each $x \in X$ is distinct, and $|X| \leq |E|^{O(1)}$, 
    if we can compute $Z_{M}(T_{k}G)$ for $k \in [|X|]$, 
    we have a full rank Vandermonde system of linear equations, 
    which can be solved in polynomial time 
    to find $\text{\#}_{M}(G, x)$ for all $x \in X$.
    
    Now, let us consider the set $X$ more closely.
    Given any $x \in X$, we see that $x = \prod m_{ij}^{k_{ij}}$
    for some (not necessarily unique) integers $k_{ij} \geq 0$ such that $\sum k_{ij} = |E|$. 
    From \cref{eqn: full_generatingM}, we know that 
    each $m_{ij}$ is generated by the set $\{g_{t}\}_{t \in [d]}$.
    Therefore, any $x \in X$ can be represented as
    $$x = (-1)^{e^{x}_{0}}g_{1}^{e^{x}_{1}} \cdots g_{d}^{e^{x}_{d}}$$
    Moreover, the exponents $e^{x}_{0} \in \{0, 1\}$, 
    and $e^{x}_{1}, \dots, e^{x}_{d} \in \mathbb{Z}$ are unique, 
    since $\{g_{t}\}_{t \in [d]}$ is a generating set.
    
    Consider a fixed $\mathbf{p} = (p_{1}, \dots, p_{d}) \in \mathbb{R}^{d}$. We can now define the function $y: X \rightarrow \mathbb{R}$, such that $y(x) = (-1)^{e^{x}_{0}} \cdot p_{1}^{e^{x}_{1}} \cdots p_{d}^{e^{x}_{d}}$ for all $x \in X$.
    Now, let
    $$Y = \left\{\prod_{i,j\in [q]}(\mathcal{M}(\mathbf{p})_{ij}^{k_{ij}}\hspace{0.08cm}\Big|\hspace{0.1cm}
    \text{integers $k_{ij}\ge 0$ and $\sum_{i,j\in [q]}k_{ij}=|E|$} \right\}.$$
    
    We note that for any $y \in Y$, $$\text{\#}_{\mathcal{M}(\mathbf{p})}(G, y) = \sum_{x \in X: ~y(x) = y}\text{\#}_{M}(G, x).$$
    
    Since $\text{\#}_{M}(G, x)$ have already been computed for all $x \in X$, we can now compute
     in polynomial time,
    $$\sum_{x \in X}y(x) \cdot \text{\#}_{M}(G, x) = \sum_{y \in Y}y \cdot \text{\#}_{\mathcal{M}(\mathbf{p})}(G, y) = Z_{\mathcal{M}(\mathbf{p})}(G).$$
    Therefore, $\PlEVAL(\mathcal{M}(\mathbf{p})) \leq \PlEVAL(M)$.
\end{proof}


We need the following theorem from \cite{vertigan2005computational}:


\begin{theorem}\label{theorem: full_tutte}
	For $x, y \in \mathbb{C}$, evaluating the Tutte polynomial 
	at $(x, y)$ is $\#$P-hard over planar graphs unless 
	$(x - 1)(y - 1) \in \{1, 2\}$ or $(x, y) \in \{(1, 1), (-1,-1), (\omega,\omega^{2}), (\omega^{2}, \omega)\}$, 
	where $\omega = e^{\nicefrac{2\pi i}{3}}$. 
	In each exceptional case, the problem is 
	in polynomial time.
\end{theorem}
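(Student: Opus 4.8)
The statement is Vertigan's dichotomy for evaluating the Tutte polynomial over planar graphs, so the plan is to follow the template of the Jaeger--Vertigan--Welsh dichotomy for general graphs and to adapt each step so that all gadgets and reductions preserve planarity. First I would settle the tractable cases. On the hyperbola $(x-1)(y-1)=1$ the Tutte polynomial collapses to a monomial in elementary graph parameters (the number of edges and the cycle rank), hence is computable on all graphs; on the hyperbola $(x-1)(y-1)=2$ it equals, up to an easily computed prefactor, the partition function of the Ising model on $G$, which for planar $G$ is evaluated in polynomial time by the FKT method via a Pfaffian of a Kasteleyn-oriented planar expansion of $G$. The four exceptional points are handled by classical facts valid on all graphs: $(1,1)$ counts spanning trees (matrix-tree theorem), $(-1,-1)$ evaluates to $\pm 2$ raised to the dimension of the bicycle space, and at $(\omega,\omega^{2})$, $(\omega^{2},\omega)$ the value is a $\mathbb{Z}_{3}$-invariant of $G$ (related to nowhere-zero $3$-flows and tensions) computable in polynomial time.

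For the hard region I would anchor the reductions at one or two problems already known to be \#P-hard on planar graphs. A convenient anchor is counting proper $3$-colourings of planar graphs, which is \#P-hard and, via $P(G;3)=(-1)^{r(G)}3^{k(G)}T(G;-2,0)$, corresponds to the point $(-2,0)$, safely outside the tractable set; by planar duality $T_{G}(x,y)=T_{G^{\ast}}(y,x)$ the dual point $(0,-2)$ is then hard for free, and the same duality halves the remaining case analysis.

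The propagation engine is that the thickening and stretching operations of \cref{sec: full_preliminaries}, and more generally the substitution of a fixed two-terminal series--parallel gadget for every edge, preserve planarity and move the evaluation point by explicit rational maps. For a fixed $(x,y)$ and a family of such gadgets (say $k$ copies, $k=1,2,\dots$) the value $T(\widetilde{G};x,y)$ of the modified graph is a polynomial in $k$ (or in a weight built from the gadget) whose coefficients encode, up to known factors, the quantities one wants to recover from $G$; evaluating $T(\widetilde{G};x,y)$ for polynomially many $k$ gives a full-rank (generalized) Vandermonde system, solvable in polynomial time, which yields the value of $T(G;\cdot,\cdot)$ at an anchor point and hence a reduction from the anchor problem. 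Running this with all admissible planar gadget transforms, one shows that an oracle at essentially any $(x,y)$ off the tractable set can reach an anchor point, and a finite case analysis in a neighbourhood of the two hyperbolas and the four special points confirms that no further point is tractable.

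The hard part, and the reason this is a theorem beyond the general-graph dichotomy, is the planarity constraint on the reductions: the general-graph proof freely uses gadget substitutions and graph surgeries that destroy planarity, so one must pin down exactly which two-terminal gadgets and which moves of the $(x,y)$-plane are realizable planarly and verify that this restricted toolkit still covers the entire plane minus the tractable set. Tied to this is the appearance of the extra tractable hyperbola $(x-1)(y-1)=2$ (the planar Ising model): the delicate point of the case analysis is establishing \#P-hardness at points arbitrarily close to this hyperbola, which requires arranging the interpolation so that the Vandermonde system degenerates exactly on the hyperbola and nowhere in a punctured neighbourhood of it.
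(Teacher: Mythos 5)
The paper does not prove this statement at all: it is imported verbatim as a known result of Vertigan \cite{vertigan2005computational} (building on Jaeger--Vertigan--Welsh), and the authors only use it as a black box to derive the hardness of $\PlEVAL(\textsc{VC}_q)$. So there is no in-paper proof to compare against; what you have written is a reconstruction of the strategy of Vertigan's own (long, separate) paper.

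As such a reconstruction, your outline is faithful in its broad strokes: the tractable loci are correctly identified and correctly justified (the degenerate hyperbola $(x-1)(y-1)=1$, the planar Ising hyperbola via FKT, spanning trees at $(1,1)$, the bicycle space at $(-1,-1)$, the $\mathbb{Z}_3$ invariant at $(\omega,\omega^2)$), planar duality $T_G(x,y)=T_{G^*}(y,x)$ is the right symmetry, and series--parallel (stretch/thicken) substitutions plus Vandermonde interpolation along the hyperbolae $H_q:(x-1)(y-1)=q$ are indeed the planarity-preserving transfer mechanism. However, what you have submitted is a plan, not a proof: the entire content of Vertigan's theorem is the exhaustive case analysis you compress into ``running this with all admissible planar gadget transforms \ldots a finite case analysis confirms that no further point is tractable.'' The genuinely delicate cases are not ``points arbitrarily close to the Ising hyperbola'' --- hardness is a pointwise property of a fixed $(x,y)$, so proximity plays no role --- but rather the degenerate loci where the shift maps collapse: the lines $x=1$, $y=1$, $x=-1$, $y=-1$, the hyperbolae $H_q$ for small $q$ (especially $q=0$), and parameter values where the series/parallel orbit is finite (roots-of-unity phenomena), each of which needs its own gadget or anchor. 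Until those cases are actually worked, the argument has a genuine gap; for the purposes of the present paper the correct move is simply to cite \cite{vertigan2005computational}, as the authors do.
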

\begin{corollary}\label{corollary: hardness}
	$\PlEVAL(\textsc{VC}_{q})$ is $\#$P-hard for $q \geq 3$, 
	where $\textsc{VC}_{q}$ is the $q \times q$ matrix 
	with entries $(v_{ij})$ such that $v_{ij} = 1$ if $i \neq j$, and $v_{ij} = 0$ otherwise.
\end{corollary}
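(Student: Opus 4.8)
The plan is to reduce evaluating $\PlEVAL(\textsc{VC}_q)$ for general $q$ down to a Tutte polynomial evaluation that \cref{theorem: full_tutte} tells us is $\#$P-hard. The key observation is that $\textsc{VC}_q$ is (up to adding $1$ to every entry, i.e., a shift by the all-ones matrix $J_q$) essentially the adjacency matrix of $K_q$ with loops, or more usefully, $Z_{\textsc{VC}_q}(G)$ counts proper $q$-colorings of $G$, and proper colorings are a specialization of the Tutte polynomial. Concretely, the number of proper $q$-colorings of $G=(V,E)$ is $(-1)^{|V|-c(G)} q^{c(G)} T_G(1-q, 0)$ where $c(G)$ is the number of connected components and $T_G$ is the Tutte polynomial. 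So if we could query $Z_{\textsc{VC}_q}(G)$ on all planar $G$, and if $(1-q,0)$ is a point where planar Tutte evaluation is $\#$P-hard, we would be done essentially immediately.

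First I would check that the point $(x,y)=(1-q,0)$ avoids all the tractable cases in \cref{theorem: full_tutte}. We have $(x-1)(y-1) = (-q)(-1) = q$, so for $q\ge 3$ this is never in $\{1,2\}$; and $(x,y)=(1-q,0)$ is one of the four special points only if $q\le 2$ (checking $(1,1)$, $(-1,-1)$, $(\omega,\omega^2)$, $(\omega^2,\omega)$ — none has second coordinate $0$). Hence for $q\ge 3$, evaluating $T_G(1-q,0)$ over planar graphs is $\#$P-hard. The factor $(-1)^{|V|-c(G)} q^{c(G)}$ is a polynomial-time computable nonzero quantity (compute $c(G)$ directly from $G$), so counting proper $q$-colorings of planar $G$ is $\#$P-hard, and this is exactly $Z_{\textsc{VC}_q}(G)$. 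This already gives $\#$P-hardness of $\PlEVAL(\textsc{VC}_q)$ directly, without needing the interpolation machinery — so the corollary may well be proved by this route alone.

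Alternatively, and perhaps more in the spirit of the surrounding section, I would instead invoke \cref{lemma: full_thickeningInterpolation}: since $\textsc{VC}_q$ has zero entries it does not directly fit that lemma, but one can work with a matrix like $J_q + (t-1)I_q$ (the "$q$-state Potts" matrix with off-diagonal $1$ and diagonal $t$) which for $t>0$ has no zero entries and a trivial generating set, reduce $\PlEVAL$ of a whole family of such Potts matrices to a single one, identify them with Tutte evaluations along the hyperbola $(x-1)(y-1)=q$, and take a limit $t\to 0$ or use the $q$-colorings specialization at the endpoint. But the cleanest argument is the direct one above.

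The main obstacle is purely bookkeeping: making sure the Tutte-to-chromatic specialization is applied with the right normalization constant and the right sign, and confirming that $(1-q,0)$ really lies outside the exceptional set of \cref{theorem: full_tutte} for all $q\ge 3$ (including that it is never on the hyperbola $(x-1)(y-1)\in\{1,2\}$, which fails precisely because $q\ge 3$). There is no genuine analytic or combinatorial difficulty here — the corollary is essentially a translation of the known planar Tutte dichotomy into the language of graph homomorphisms, and once the point $(1-q,0)$ is pinned down the hardness is immediate.
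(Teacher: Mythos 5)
Your direct argument is correct and is exactly the intended derivation: $Z_{\textsc{VC}_q}(G)$ is the number of proper $q$-colorings of $G$, which equals $(-1)^{|V|-c(G)}q^{c(G)}T_G(1-q,0)$, and the point $(1-q,0)$ satisfies $(x-1)(y-1)=q\notin\{1,2\}$ for $q\ge 3$ and is none of the four exceptional points, so \cref{theorem: full_tutte} gives $\#$P-hardness. The paper states the corollary without proof precisely because this is the standard specialization; your alternative route via Potts matrices is unnecessary.
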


\cref{theorem: full_tutte} allows us to prove our first hardness result.

\begin{lemma}\label{lemma: full_thickeningBasic}
	Let $M$ be a $q \times q$ ($q \geq 3$) real-valued,
	symmetric matrix with no zero entries, as given in
	\cref{eqn: full_generatingM}. 
	Furthermore, assume for all $i \in [q]$ there exists some 
	(not necessarily distinct) $t(i) \in \{1, \dots, d\}$, 
	such that $e_{iit(i)} > 0, \text{ and } e_{jkt(i)} = 0$ for all $ j \neq k$.
	Then $\PlEVAL(M)$ is $\#$P-hard.
\end{lemma}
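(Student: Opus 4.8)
The goal is to reduce $\PlEVAL(\textsc{VC}_q)$, which is $\#$P-hard by \cref{corollary: hardness}, to $\PlEVAL(M)$. The hypothesis gives, for each $i \in [q]$, a coordinate $t(i)$ in the generating set such that $g_{t(i)}$ "sees" the diagonal entry $m_{ii}$ (with positive exponent) but does not appear in any off-diagonal entry $m_{jk}$, $j \ne k$. The key device is \cref{lemma: full_thickeningInterpolation}: I may replace each generator $g_t$ by an arbitrary real parameter $p_t$ and still get a reduction $\PlEVAL(\mathcal{M}(\mathbf p)) \le \PlEVAL(M)$. So it suffices to choose a good $\mathbf p \in \mathbb{R}^d$.

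\textbf{Choosing the parameters.} The idea is to send the parameters corresponding to the "diagonal-only" generators to $0$ and keep the others at $1$ — more precisely, take a limit. Set $p_t = \varepsilon$ if $t$ lies in the image of the map $i \mapsto t(i)$, and $p_t = 1$ otherwise; then let $\varepsilon \to 0^+$. Under $\mathcal{M}(\mathbf p)$, every off-diagonal entry $m_{jk}$ ($j \ne k$) becomes $(-1)^{e_{jk0}}$ times a product of the $p_t$'s over $t$ NOT in the image (since $e_{jk t(i)} = 0$), hence is $\pm 1$ — and in fact, after a global scaling by $\pm 1$ via $\PlEVAL(cM) \equiv \PlEVAL(M)$ applied suitably, or by the argument below, it stays bounded away from $0$. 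Meanwhile every diagonal entry $m_{ii}$ picks up a strictly positive power of $p_{t(i)} = \varepsilon$ (since $e_{iit(i)}>0$), so $m_{ii} \to 0$ as $\varepsilon \to 0^+$. Thus in the limit $\mathcal{M}(\mathbf p)$ degenerates toward a matrix with zero diagonal and nonzero off-diagonal — structurally a scaled $\textsc{VC}_q$-type matrix.

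\textbf{From the limit to hardness.} The limit itself is not directly usable (the off-diagonal entries may be $-1$, not $1$, and \cref{lemma: full_thickeningInterpolation} gives a reduction only for each fixed $\mathbf p$, not the limit). So instead I would argue as follows. For small $\varepsilon>0$, $\mathcal{M}(\mathbf p)$ is a real symmetric matrix with all entries nonzero whose associated partition function, suitably interpreted (e.g.\ via a further thickening $T_2$ to make all off-diagonal entries positive, turning $\pm 1$ into $1$, and bounding the now-tiny positive diagonal), realizes an evaluation of the Tutte polynomial / multivariate partition function of the complete-graph-like structure. Concretely: $\PlEVAL(T_2 \mathcal{M}(\mathbf p))$ corresponds to a matrix with strictly positive off-diagonal entries all equal (or proportional) and positive diagonal entries $\to 0$; such a $q \times q$ matrix with constant off-diagonal $1$ and common diagonal value $\delta \in (0,1)$ is exactly the partition function counting graph homomorphisms into $K_q$ with loops of weight $\delta$, which is a specialization of the Tutte polynomial along a curve, and by \cref{theorem: full_tutte} is $\#$P-hard for all but finitely many $\delta$. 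Since $\delta = \delta(\varepsilon)$ varies continuously and nonconstantly in $\varepsilon$, I can pick $\varepsilon$ avoiding the finite bad set, getting $\#$P-hardness of $\PlEVAL(T_2 \mathcal{M}(\mathbf p))$, hence of $\PlEVAL(\mathcal{M}(\mathbf p))$, hence of $\PlEVAL(M)$.

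\textbf{Main obstacle.} The delicate point is reducing the multi-parameter situation to a single clean Tutte-polynomial evaluation: the $\mathcal{M}(\mathbf p)$ matrix need not have all its off-diagonal entries equal, nor all its diagonal entries equal — the generators not in the image of $t(\cdot)$ survive, and different $m_{ii}$ may have different exponents on $p_{t(i)}$. I expect the real work is to show that by a combination of (i) choosing the scaling constant $c$ and the parameters $p_t$ cleverly (not all equal to a single $\varepsilon$, but $p_{t} = \varepsilon^{r_t}$ for suitable rational $r_t$, or grouping the generators by which $i$'s they serve), and (ii) further stretching/thickening operations, one can force the reduced matrix into the form $(\delta$ on the diagonal, $1$ off-diagonal$)$, or at least into a form already handled — e.g.\ invoking \cref{theorem: full_domain2Hardness} on a $2\times 2$ submatrix obtained after the off-diagonal entries dominate. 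A clean route is: after scaling and the parameter choice, the off-diagonal block is a fixed nonzero matrix while every diagonal entry tends to $0$ at its own polynomial rate in $\varepsilon$; pick any $2$-element subset $\{i,j\}$ of $[q]$ and restrict (via the standard "pinning"/vertex-weight trick on the other colors, or just noting the $q=3$ reduction chain) to get a $2\times2$ matrix $\bigl(\begin{smallmatrix} m_{ii} & m_{ij} \\ m_{ij} & m_{jj}\end{smallmatrix}\bigr)$ with $m_{ij}\ne 0$, $m_{ii},m_{jj}\to 0$; for all small $\varepsilon$ this violates cases (1)–(4) of \cref{theorem: full_domain2Hardness} (determinant $m_{ii}m_{jj}-m_{ij}^2 \ne 0$ since $m_{ij}\ne0$ and the diagonals are tiny; $m_{ii}\ne m_{jj}$ and $m_{ii}\ne -m_{jj}$ generically by the distinct polynomial rates), giving $\#$P-hardness. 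Making this pinning step rigorous in the planar setting — realizing the restriction to two colors by a planar gadget rather than by an external vertex weight — is where I'd concentrate the argument, and it may be cleanest to handle $q=3$ directly and note the general $q$ case follows by the same reductions used for \cref{corollary: hardness}.
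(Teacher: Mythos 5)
You have the right two ingredients --- \cref{lemma: full_thickeningInterpolation} and the hardness of $\PlEVAL(\textsc{VC}_{q})$ from \cref{corollary: hardness} --- but the limiting argument you build around them does not close. For small $\varepsilon>0$ the matrix $T_2\mathcal{M}(\mathbf p)$ you obtain has all off-diagonal entries equal to $1$ but diagonal entries $\varepsilon^{2\sum_{t} e_{iit}}$ that are in general \emph{different} positive powers of $\varepsilon$ for different $i$; this is not a specialization of the Tutte polynomial along a curve (that requires a common loop weight), and the fallback you sketch --- restricting to a $2\times 2$ principal submatrix and invoking \cref{theorem: full_domain2Hardness} --- needs a planar pinning gadget that you correctly flag as missing and that the paper nowhere supplies. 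So the ``main obstacle'' you name at the end is a genuine gap in the argument, not a routine verification.

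The fix is that no limit is needed: \cref{lemma: full_thickeningInterpolation} is stated and proved for \emph{every} $\mathbf p \in \mathbb{R}^{d}$, including vectors with zero coordinates (after the normalization in \cref{sec: full_generatingSets} all exponents $e_{ijt}$ are non-negative, so $\mathcal{M}(\mathbf p)$ is well defined there). Setting $p_{t(i)}=0$ for each $i\in[q]$ and $p_{t}=1$ for all other $t$ makes every diagonal entry exactly $0$ (since $e_{iit(i)}>0$) while leaving every off-diagonal entry exactly $\pm 1$ (since $e_{jkt(i)}=0$ for $j\ne k$). A single thickening then gives $T_{2}(\mathcal{M}(\mathbf p))=\textsc{VC}_{q}$ on the nose, and \cref{corollary: hardness} finishes the proof in one line. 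Your instinct to keep the parameters nonzero and take $\varepsilon\to 0^{+}$ is precisely what forced the entire analytic detour.
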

\begin{proof}
	We apply \cref{lemma: full_thickeningInterpolation}.
	Let $\mathbf{p} \in \mathbb{R}^{d}$,
	defined by  $p_{t(i)} = 0$ for $i \in [q]$,
	and $p_{i} = 1$ for all other $i \in [d]$.
	Then, it is easy to see that
	$T_{2}(\mathcal{M}(\mathbf{p})) = \textsc{VC}_{q}$.
	
	From \cref{lemma: full_thickeningInterpolation}, we get  $\PlEVAL(\mathcal{M}(\mathbf{p})) \leq \PlEVAL(M)$. Therefore,
	$$ \PlEVAL(\textsc{VC}_{q}) \leq \PlEVAL(\mathcal{M}(\mathbf{p})) \leq \PlEVAL(M).$$
	It follows from  \cref{corollary: hardness} that 
	$\PlEVAL(M)$ is  $\#$P-hard.
\end{proof}

\section{Interpolation by Stretching}\label{sec: full_interpolationStretching}

In this section we focus on full ranked matrices.
Using stretching,
we shall prove the hardness of a more
interesting class of matrices than we were able to
in \cref{lemma: full_thickeningBasic}.

Consider a $q \times q$ positive real valued, symmetric matrix $M$.
There exist real orthogonal matrix $H = (h_{ij})_{i, j \in [q]}$  and
 a diagonal matrix $D =  \text{diag}(\lambda_{1}, \dots, \lambda_{q}),$
such that
$$M = HDH^{\tt T},$$
where $\lambda_{1}, \dots, \lambda_{q}$ are the eigenvalues of $M$
and  the columns of $H$  are the corresponding eigenvectors.

\subsection{Lattice condition}\label{sec: full_latticeCondition}

From the decomposition  $M = HDH^{\texttt{T}}$, we have  $M^k = HD^kH^{\texttt{T}}$,
and
$$(M^{k})_{ij} = (h_{i1}h_{j1}) \lambda_{1}^k + \dots + (h_{iq}h_{jq}) \lambda_{q}^{k}.$$
It follows that
\begin{equation}\label{eqn: full_stretching}
    Z_{M^k}(G) = \sum_{\substack{x_{1}, \dots, x_{q} \ge 0\\ \sum_i x_{i} = |E|}}c_{(x_{i})_{i \leq q}} \cdot \left(\lambda_{1}^{x_{1}} \cdots \lambda_{q}^{x_{q}} \right)^{k},
\end{equation}
where
$$c_{(x_{i})_{i \leq q}} = \sum_{\sigma: V \rightarrow [q]} \left( \sum_{\substack{E_{1} \sqcup \dots \sqcup E_{q} = E\\|E_{i}| = x_{i}}} \left( \prod_{i \in [q]} \prod_{\{u, v\} \in E_{i}}h_{\sigma(u)i}h_{\sigma(v)i} \right) \right)$$
depends only on $G$ and the orthogonal matrix $H$, but  not  on $D$.

\begin{definition}\label{definition: full_lattice}
    A nonempty set of nonzero real numbers 
	$(r_{i})_{i \in [d]}$ satisfies the lattice condition, 
	if the only integer sequence $(n_{i})_{i \in [d]}$ 
	with the property $n_{1} + \dots + n_{d} = 0$ 
	and $r_{1}^{n_{1}} \cdots r_{d}^{n_{d}} = 1$ is 
	$(n_{i})_{i \in [d]} = \mathbf{0}$.
\end{definition}

\begin{lemma}\label{lemma: full_stretchingBasic}
	If $M$ is a $q \times q$ full rank, real valued, symmetric matrix, whose eigenvalues $(\lambda_{1}, \dots, \lambda_{q})$ satisfy the lattice condition 
	then $\PlEVAL(H\Delta H^{\tt{T}}) \leq \PlEVAL(M)$ 
	for any diagonal matrix $\Delta$.
\end{lemma}
\begin{proof}
	By the lattice condition, 
	for any integer sequences $(x_{i})_{i \leq q}$ and $(y_{i})_{i \leq q}$
	with 
	$\prod_{i \in [q]}\lambda_{i}^{x_{i}} = \prod_{i \in [q]}\lambda_{i}^{y_{i}}$ 
	and $\sum_{i \leq q}x_{i} =  \sum_{i \leq q}y_{i}$ we get  
	$x_{i} = y_{i}$ for all $i \in [q]$. 
	Therefore, from the values $Z_{M}(S_{k}G) = Z_{M^k}(G)$ 
	for $k \in \left[\binom{|E|+ q - 1}{q - 1}\right]$,
	we have a full-rank Vandermonde system of linear equations with
	unknowns $c_{(x_{i})_{i \leq q}}$. Solving this linear system
	in polynomial time,
	we can compute
	$$\sum_{\substack{x_{1}, \dots, x_{q}\\ \sum x_{i} = |E|}}c_{(x_{i})_{i \leq q}} \cdot \left(\alpha_{1}^{x_{1}} \cdots \alpha_{q}^{x_{q}} \right) = Z_{H\Delta H^{\texttt{T}}}(G)$$
	for any $\Delta = \text{diag}(\alpha_{1}, \dots, \alpha_{q})$.
	Therefore, $\PlEVAL(H\Delta H^{\texttt{T}}) \leq \PlEVAL(M)$.
\end{proof}

We now prove that there exists some $\Delta$ such that $\PlEVAL(H\Delta H^{\texttt{T}})$ is $\#$P-hard.

\begin{lemma}\label{lemma: full_latticeHardnessDelta}
	If $M = HDH^{\texttt{T}}$ is a $q \times q$ ($q \geq 3$) full rank, positive real valued, symmetric matrix, whose eigenvalues $(\lambda_{1}, \dots, \lambda_{q})$ satisfy the lattice condition, then there exists a diagonal matrix $\Delta$ such that $\PlEVAL(H\Delta H^{\texttt{T}})$ is $\#$P-hard.
\end{lemma}
\begin{proof}
    Let $(m_{ij})_{i, j \in [q]}$ be the entries of the matrix $M$. 
    By assumption, we know that these are positive reals.
	We pick some  
	$\kappa \in \mathbb{R}$ such that $\kappa + m_{ii} > 0$,
	and is	transcendental to the field 
	${\bf F} = \mathbb{Q}(\{m_{ij}\}_{i, j \in [q]})$.
	Such a 	transcendental real number exists because there are only a countable number of
	algebraic numbers over ${\bf F}$.
    Let $\{g_{i}\}_{i \in [d]}$ (respectively, $\{f_{i}\}_{i \in [d']}$) be a basis
    of the multiplicative free Abelian group generated by
    $\{m_{ij}\}_{i \neq j \in [q]}$ 
    (respectively, $\{\kappa + m_{ii}\}_{i \in [q]}$) as in the proof of 
    \cref{lemma: full_generatingSet}. 
    Finally, we  
    let $\Delta = D + \kappa I$.
	
	\begin{restatable}{claim}{generatingSetTrue}\label{claim: generatingSetTrue}
	    $\{g_{i}\}_{i \in [d]} \cup \{f_{i}\}_{i \in [d']}$ 
        is a generating set of the entries 
        of $H\Delta H^{\texttt{T}}$.
    \end{restatable}
    
    Clearly, every element of $H \Delta H^{\texttt{T}}$ 
    can be expressed as a product of
    integer powers of $\{g_{i}\}_{i \in [d]} \cup \{f_{i}\}_{i \in [d']}$, by construction.
    We now want to show uniqueness of such an expression.
 
    Being in the Abelian group generated by
    $\{\kappa + m_{jj}\}_{j \in [q]}$, there exist 
    integers $x_{i, j}$ 
    for $i \in [d']$ and $j \in [q]$, such that
    \begin{equation}\label{eqn: full_f_i-by-deltajj}
        f_{i} = (\kappa + m_{11})^{x_{i, 1}} \cdots ( \kappa + m_{qq})^{x_{i, q}}.
    \end{equation}
    Every element in 
    $\{g_{i}\}_{i \in [d]} \cup \{f_{i}\}_{i \in [d']}$ is
    positive. Suppose 
    for some 
    $(e_{1}, \dots, e_{d}, e'_{1}, \dots, e'_{d'}) \in \mathbb{Z}^{d + d'}$ 
    such that
    $$g_{1}^{e_{1}} \cdots g_{d}^{e_{d}} \cdot (f_{1})^{e'_{1}} \cdots (f_{d'})^{e'_{d'}} = 1.$$
    First, if $(e_{1}, \dots, e_{d}) = \mathbf{0}$ 
    then $\prod_{i \in [d]}g_{i}^{e_{i}}=1$, and since $\{f_{i}\}_{i \in [d']}$ is a generating set 
    we get $(e'_{1}, \dots, e'_{d'}) = \mathbf{0}$, 
    therefore $(e_{1}, \dots, e_{d}, e'_{1}, \dots, e'_{d'}) = \mathbf{0}$. 
    Now assume  
    $(e_{1}, \dots, e_{d}) \neq \mathbf{0}$.
    
    Substituting $f_i$ using \cref{eqn: full_f_i-by-deltajj}, we get
    $$\prod_{i \in [d]}g_{i}^{e_{i}} \cdot \prod_{j \in [q]} ( \kappa + m_{jj} )^{y_{j}} = 1,$$
    where $y_{j} = \sum_{i \in [d']}e'_{i}x_{i, j}$ 
    for $j \in [q]$.
    Since $(e_{1}, \dots, e_{d}) \neq \mathbf{0}$, 
    we see that $\prod_{i \in [d]}g_{i}^{e_{i}} \neq 1$. 
    Therefore, 
    $(y_{1}, \dots, y_{q}) \ne  \mathbf{0}$. 
    Separating out positive and negative $y_j$'s, we have
    \begin{equation}\label{eqn: full_generatingSetDecomp}
        \left(\prod_{i \in [d]}g_{i}^{e_{i}}\right) \cdot \prod_{j \in [q]: y_{j} > 0}(\kappa + m_{jj} )^{y_{j}} = \prod_{j \in [q]: y_{j} < 0}(\kappa + m_{jj} )^{-y_{j}}.
    \end{equation}
    Both sides of \cref{eqn: full_generatingSetDecomp} are polynomials in $\kappa$ over the field ${\bf F}$, with different leading coefficients. 
    This contradicts our assumption that $\kappa$ is 
    transcendental to ${\bf F}$.
    \cref{claim: generatingSetTrue} is thus proved.

    Now for any $i \in [q]$, 
    there exists some $t(i) \in [d']$, such that $e_{iit(i)} > 0$, 
    but $e_{jkt(i)} = 0$ for all $j \neq k$. 
    This is because 
    $\{f_{i}\}_{i \in [d']}$, without
    $\{g_{i}\}_{i \in [d]}$, is a generator set
    for $\{\kappa + m_{ii}\}_{i \in [q]}$,
    and 
    $\kappa + m_{ii} \ne 1$. Also,
    $\{g_{i}\}_{i \in [d]}$, without  $\{f_{i}\}_{i \in [d']}$, is a generator set
    for $\{m_{ij}\}_{i \neq j \in [q]}$.
    Therefore, from \cref{lemma: full_thickeningBasic}, 
    we conclude that $\PlEVAL(H\Delta H^{\texttt{T}})$ is $\#$P-hard.
\end{proof}

We have prove the following theorem:
\begin{theorem}\label{theorem: full_latticeHardness}
	If $M$ is a $q \times q$ ($q \geq 3$) full rank, positive real valued, symmetric matrix, whose eigenvalues $(\lambda_{1}, \dots, \lambda_{q})$ satisfy the lattice condition, then $\PlEVAL(M)$ is $\#$P-hard.
\end{theorem}
\subsection{Extensions of the hardness criterion}\label{sec: full_hardnessExtension}
	
The requirement in \cref{theorem: full_latticeHardness}
that the eigenvalues satisfy the lattice condition
is not entirely necessary. 
The following is a simple adaptation of \cref{lemma: full_stretchingBasic}.

\begin{lemma}\label{lemma: full_stretchingExtension}
    If $M$ is a $q \times q$ full rank, real valued, symmetric matrix, such that its set of eigenvalues $\{\lambda_{i}: i \in [q]\}$ (without duplicates as a set) satisfies the lattice condition,
	then for any function $f: \{\lambda_{i}: i \in [q]\} \rightarrow \mathbb{R}$, we have $\PlEVAL(H\Delta H^{\tt{T}}) \leq \PlEVAL(M)$ 
	where
	$\Delta = {\rm diag}(f(\lambda_{1}), \dots, f(\lambda_{q}))$.
\end{lemma}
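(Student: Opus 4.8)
The plan is to mimic the proof of \cref{lemma: full_stretchingBasic}, replacing the specific diagonal of eigenvalue-powers that arises from stretching with an arbitrary diagonal $\Delta = \mathrm{diag}(f(\lambda_1),\dots,f(\lambda_q))$, and to produce $\Delta$ by interpolation through the stretching powers $M^k = H D^k H^{\tt T}$. Concretely, let $\mu_1,\dots,\mu_r$ be the distinct eigenvalues of $M$ (so $r \le q$), and for each $k \ge 1$ set $D^{[k]} = \mathrm{diag}(\mu_1^k,\dots,\mu_r^k)$ on the distinct-eigenvalue index set, with $H$ grouping eigenvectors accordingly; then $S_k M$ realizes $Z_{H D^{[k]} H^{\tt T}}$. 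First I would observe that the family $\{H D^{[k]} H^{\tt T} : k \ge 1\}$ is, via the orthogonal change of basis by $H$, equivalent (for the purpose of $\PlEVAL$ reductions using gadgets) to the family of diagonal "vertex-weight" signatures $D^{[k]}$ sitting between copies of the edge signature defined by $H$; this is precisely the setup already exploited in the proof of \cref{theorem: full_latticeHardness} / \cref{lemma: full_stretchingBasic}.

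The key step is the interpolation. Using the lattice condition on $\{\mu_1,\dots,\mu_r\}$ (the distinct eigenvalues), the vectors $(\mu_1^k,\dots,\mu_r^k)$ for $k = 1,\dots,N$ span — after the usual normalization dividing out the dominant eigenvalue and taking logarithms — a lattice of full rank in the relevant sense, exactly as in \cref{lemma: full_stretchingBasic}; the lattice condition is what guarantees that the system of "moments" one extracts from querying $\PlEVAL(M)$ on stretched-and-bundled gadgets has full rank and is polynomial-time solvable. Having recovered the individual counts indexed by the eigenvalue-exponent multi-indices, one then re-weights each recovered count by the target values $f(\mu_i)$ in place of $\mu_i^k$ — this is the same "substitute the exponents" maneuver used in the proof of \cref{lemma: full_thickeningInterpolation}, just carried out multiplicatively in the eigenbasis rather than in the generating-set basis. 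Summing back up yields exactly $Z_{H\Delta H^{\tt T}}(G)$ for every planar $G$, giving $\PlEVAL(H\Delta H^{\tt T}) \le \PlEVAL(M)$. Since $f$ is constant on each eigenvalue-class, $\Delta$ is well-defined and the reduction respects the multiplicity structure.

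The main obstacle I anticipate is bookkeeping rather than conceptual: one must be careful that the lattice condition is imposed on the \emph{distinct} eigenvalues (the statement already says "without duplicates as a set"), and that the gadget used in \cref{lemma: full_stretchingBasic} to separate the eigenvalue contributions still has full rank when several original eigenvalues coincide — i.e., that collapsing equal eigenvalues does not destroy the invertibility of the interpolation matrix. This should follow because merging equal $\mu_i$ only reduces the dimension of the system while keeping it Vandermonde-like in the distinct ratios $\mu_i/\mu_j$, whose logarithms remain $\mathbb{Q}$-linearly independent by the lattice hypothesis. A secondary point to check is that $f$ taking arbitrary real (possibly transcendental) values causes no difficulty: by the model of computation set up in \cref{sec: full_modelComputation} and \cref{lemma: full_count}, $\PlEVAL(H\Delta H^{\tt T})$ is well-defined for any real matrix, and the reduction is a polynomial-time many-one reduction in the $\COUNT$ formulation, so no algebraicity of $f$ is needed.
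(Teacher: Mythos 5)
Your proposal is correct and follows essentially the same route as the paper: expand $Z_M(S_kG)=Z_{M^k}(G)$ as a polynomial in the distinct eigenvalues $\mu_1,\dots,\mu_\ell$, use the lattice condition to guarantee that the monomials $\mu_1^{x_1}\cdots\mu_\ell^{x_\ell}$ with $\sum_i x_i=|E|$ are pairwise distinct so the Vandermonde system is invertible, recover the coefficients, and re-sum with $f(\mu_i)$ substituted for $\mu_i$ to obtain $Z_{H\Delta H^{\tt T}}(G)$. The points you flag as potential obstacles (collapsing duplicate eigenvalues, and $f$ taking arbitrary real values being handled by the $\COUNT$ formulation) are exactly the points the paper's proof also addresses.
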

\begin{proof}
Note that as a function, if $\lambda_{i} = \lambda_{j}$ then $f$ must map
$f(\lambda_{i}) = f(\lambda_{j})$.  Accordingly, 
 we can define a partition $\mathcal{P} = (P_{1}, \dots, P_{\ell})$ of $[q]$ collecting
 equal values of $\lambda_{i}$ together.
We rename their distinct values as $\{\mu_{1}, \dots, \mu_{\ell}\} $
    such that the $\mu_{i}$ are all distinct, and
    $\lambda_{j} = \mu_{i}$ for all $i \in [\ell]$ and $j \in P_{i}$.
By hypothesis, $(\mu_{1}, \dots, \mu_{\ell})$ satisfies the lattice condition, and $f$ is defined on the set $\{\mu_{i}: i \in [\ell]\}$.
    
	Now note that
	$$Z_{M}(G) = \sum_{\substack{x_{1}, \dots, x_{\ell}\\ \sum x_{i} = |E|}}c_{(x_{i})_{i \leq \ell}} \cdot \mu_{1}^{x_{1}} \cdots \mu_{\ell}^{x_{\ell}}$$
	where
	$$c_{(x_{i})_{i \leq \ell}} = \sum_{\sigma: V \rightarrow [q]} \left( \sum_{\substack{E_{1} \sqcup \dots \sqcup E_{q} = E,\\ \sum_{t \in P_{i}}|E_{t}| = x_{i}}} \left( \prod_{i \in [q]} \prod_{\{u, v\} \in E_{i}} h_{\sigma(u)i}h_{\sigma(v)i} \right) \right).$$
	
	Since $(\mu_{1}, \dots, \mu_{\ell})$ satisfies the lattice condition, 
	for any $(x_{i})_{i \leq \ell}$ and $(y_{i})_{i \leq \ell}$ with $\sum_i x_i = \sum_i y_i = |E|$, and
	$$\mu_{1}^{x_{1}} \cdots \mu_{\ell}^{x_{\ell}} = \mu_{1}^{y_{1}} \cdots \mu_{\ell}^{y_{\ell}},$$
	we have $x_{i} = y_{i}$ for $i \in [\ell]$. 
	Therefore, from the values $Z_{M}(S_{k}G) = Z_{M^k}(G)$ 
	for $k \in \left[\binom{|E| + \ell - 1}{\ell - 1}\right]$, 
	we can form a full rank Vandermonde system of linear equations
	with unknowns $c_{(x_{i})_{i \leq \ell}}$.
	Solving this in polynomial time, 
	we can compute
	$$\sum_{\substack{x_{1}, \dots, x_{\ell}\\ \sum x_{i} = |E|}}c_{(x_{i})_{i \leq \ell}} \cdot f(\mu_{1})^{x_{1}} \cdots f(\mu_{\ell})^{x_{\ell}} = Z_{H \Delta H^{\tt T}}(G)$$
	for any $\Delta = \text{diag}(f(\lambda_{1}), \dots, f(\lambda_{q}))$. 
	Therefore, $\PlEVAL(H\Delta H^{\tt T}) \leq \PlEVAL(M)$.
\end{proof}

We now have the following extension of \cref{theorem: full_latticeHardness}.

\begin{theorem}\label{theorem: full_latticeHardnessExtension}
	If $M$ is a $q \times q$  ($q \ge 3$) full rank, positive real valued, symmetric matrix, such that its set of eigenvalues $\{\lambda_{i}: i \in [q]\}$ (without duplicates as a set) satisfies the lattice condition, then $\PlEVAL(M)$ is $\#$P-hard.
\end{theorem}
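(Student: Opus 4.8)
The plan is to bootstrap from the hardness of a single ``canonical'' hard matrix via the diagonal-plug-in machinery of \cref{lemma: full_stretchingExtension}. Concretely, suppose $M$ is $q\times q$ full rank, positive, symmetric, with distinct eigenvalues $\mu_1,\dots,\mu_\ell$ (collapsing repetitions) satisfying the lattice condition, and write $M = H D H^{\tt T}$ with $H$ orthogonal. By \cref{lemma: full_stretchingExtension}, for \emph{any} function $f$ on the eigenvalue set we can reduce $\PlEVAL(H\Delta H^{\tt T}) \le \PlEVAL(M)$ where $\Delta = \mathrm{diag}(f(\lambda_1),\dots,f(\lambda_q))$. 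The idea is to choose $f$ cleverly so that $H\Delta H^{\tt T}$ is a matrix we already know to be $\#$P-hard — ideally something in the orbit of $\textsc{VC}_q$ or reachable from it by thickening/stretching, so that \cref{corollary: hardness} (or \cref{theorem: full_latticeHardness} in a degenerate form) applies.

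The key steps, in order. First, I would invoke \cref{theorem: full_latticeHardness} directly in the case where $M$ \emph{itself} already has all distinct eigenvalues and $q\ge 3$: then the hypothesis of \cref{theorem: full_latticeHardness} (the lattice condition on the eigenvalues, with full rank and positivity) is met verbatim, and we are done. So the real content is the case of \emph{repeated} eigenvalues, where $\ell < q$, and \cref{theorem: full_latticeHardness} cannot be applied to $M$ as stated (it presumably needs $q$ genuinely distinct eigenvalues or at least $\ell \ge 3$). Second, for the repeated-eigenvalue case, I would use \cref{lemma: full_stretchingExtension} to replace the diagonal $D$ by $\Delta = \mathrm{diag}(f(\mu_1),\dots)$ with $\mu_i \mapsto f(\mu_i)$ chosen to be $\ell$ \emph{algebraically independent} transcendentals (or more simply: chosen to be distinct prime powers $2, 3, 5, \dots$, or distinct primes, so that the new eigenvalues trivially satisfy the lattice condition since $\log 2, \log 3, \log 5,\dots$ are $\mathbb{Q}$-linearly independent). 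Call the resulting matrix $M' = H\Delta H^{\tt T}$; it is still full rank, symmetric, has the same orthogonal $H$, and now has $\ell$ distinct positive eigenvalues. Third — and this is the crux — I must arrange that $M'$ has domain-effective size $\ge 3$ in the sense needed to apply \cref{theorem: full_latticeHardness}; i.e., I need $\ell \ge 3$. If $\ell \ge 3$ we are done by applying \cref{theorem: full_latticeHardness} to $M'$ (whose eigenvalues satisfy the lattice condition by the prime-power choice) and chaining $\PlEVAL(M') \le \PlEVAL(M)$. If $\ell = 2$, the matrix $M'$ effectively collapses to a rank-$2$ / domain-$2$ situation, and I would instead route through the Boolean-domain results: argue that with only two distinct eigenvalues $M$ (after the $H\Delta H^{\tt T}$ substitution and possibly a further stretching/thickening to separate parameters) is equivalent to $\PlEVAL(M'')$ for a $2\times 2$ full rank matrix $M''$ with $x \ne z$ and $y \ne 0$ and $xz \ne \pm y^2$, hence $\#$P-hard by \cref{theorem: full_domain2Hardness}. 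The case $\ell = 1$ (all eigenvalues equal, so $M$ is a scalar multiple of the identity) is impossible for a positive symmetric matrix with $q\ge 2$ unless all off-diagonal entries vanish — but $M$ is positive, so this case does not arise.

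The main obstacle I anticipate is the $\ell = 2$ sub-case: when $M$ has exactly two distinct eigenvalues, \cref{theorem: full_latticeHardness} (which is stated for $q \ge 3$ and presumably really wants at least three distinct eigenvalues to run the Vandermonde/lattice argument in three ``directions'') does not directly apply, and I have to descend to the domain-$2$ dichotomy of \cref{theorem: full_domain2Hardness}. Making that descent rigorous requires checking that the relevant $2\times 2$ effective matrix avoids all four tractable exceptions $(xz = y^2$, $y = 0$, $x = z$, $xz = -y^2 \,\&\, x = -z)$ — this should follow from $M$ being positive (forcing $y > 0$ and the entries to have the right signs) together with full rank and the freedom to first apply a thickening $T_k$ or stretching $S_k$ to break the symmetry $x = z$ if it happens to hold, exactly as in the case analysis of \cref{lemma: full_rankTwoMultipleNonOneC}. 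Everything else is a routine chaining of the reductions $\PlEVAL(M') \le \PlEVAL(M)$ already established earlier in the paper.
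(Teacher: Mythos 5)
Your strategy has a structural flaw that makes the only nontrivial case collapse. The whole content of \cref{theorem: full_latticeHardnessExtension} beyond \cref{theorem: full_latticeHardness} is the case of repeated eigenvalues, and there the lattice condition on the full $q$-tuple fails automatically: if $\lambda_i = \lambda_j$ with $i \ne j$, the integer vector with $n_i = 1$, $n_j = -1$ and all other entries $0$ sums to zero and gives $\prod_k \lambda_k^{n_k} = 1$. So \cref{theorem: full_latticeHardness} can never apply to a matrix with a repeated eigenvalue. Now observe that \cref{lemma: full_stretchingExtension} only allows $f$ to be a \emph{function on the eigenvalue set}, so $\Delta = \mathrm{diag}(f(\lambda_1),\dots,f(\lambda_q))$ inherits exactly the multiplicity pattern of $D$; no choice of $f$ (primes, transcendentals, or otherwise) can split a repeated eigenvalue into distinct ones. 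Hence your $M' = H\Delta H^{\tt T}$ still has repeated eigenvalues whenever $M$ does, and applying \cref{theorem: full_latticeHardness} to $M'$ is exactly as illegitimate as applying it to $M$ --- the argument is circular in the one case that matters. Two further problems: (i) even ignoring multiplicities, $H\Delta H^{\tt T}$ with $\Delta$ a diagonal of primes is in general not entrywise positive (it can have zero or negative entries), so it does not meet the hypotheses of \cref{theorem: full_latticeHardness} or of \cref{lemma: full_thickeningBasic}; controlling the \emph{eigenvalues} of $H\Delta H^{\tt T}$ gives you no control over its \emph{entries}. (ii) The $\ell = 2$ fallback is not a reduction at all: a $q\times q$ positive matrix with two distinct eigenvalues still defines a partition function over domain $[q]$, and there is no equivalence with a $2\times 2$ problem; the spectral decomposition does not shrink the domain.

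The paper's proof uses \cref{lemma: full_stretchingExtension} with a very different $f$, chosen to engineer the \emph{entries} rather than the eigenvalues of the new matrix: $f(\lambda) = \lambda + \kappa$ with $\kappa$ transcendental over $\mathbb{Q}(\{m_{ij}\})$ and $\kappa + m_{ii} > 0$. Then $H\Delta H^{\tt T} = M + \kappa I$ is still positive, its off-diagonal entries are unchanged, and its diagonal entries $m_{ii}+\kappa$ are multiplicatively independent of the off-diagonal entries (by transcendence of $\kappa$), so each diagonal entry owns a private generator. That is precisely the hypothesis of \cref{lemma: full_thickeningBasic}, which interpolates the diagonal to zero and thickens to $\textsc{VC}_q$, whose planar hardness for $q \ge 3$ comes from the Tutte polynomial. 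The lattice condition on the de-duplicated eigenvalue set is used \emph{only} to license the reduction $\PlEVAL(M+\kappa I) \le \PlEVAL(M)$ via \cref{lemma: full_stretchingExtension}; it is never used to certify hardness of the target matrix. If you want to salvage your approach, you must route the repeated-eigenvalue case through \cref{lemma: full_thickeningBasic} (or some other entrywise hardness criterion) rather than through \cref{theorem: full_latticeHardness}.
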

\begin{proof}
    Let $(m_{ij})_{i, j \in [q]}$ be the entries of the matrix $M$. 
    By assumption, we know that these are positive reals.
	We pick some  
	$\kappa \in \mathbb{R}$ such that $\kappa + m_{ii} >0$,
	and is	transcendental to the field 
	${\bf F} = \mathbb{Q}(\{m_{ij}\}_{i, j \in [q]})$.
    Let $\{g_{i}\}_{i \in [d]}$ be a basis of the
    multiplicative free Abelian group generated by
    $\{m_{ij}\}_{i \neq j \in [q]}$ as in the
    proof of \cref{lemma: full_generatingSet}.
    Let $\{f_{i}\}_{i \in [d']}$ be a basis for the
    multiplicative free Abelian group
    generated by $\{\kappa + m_{ii}\}_{i \in [q]}$.
    Finally, we let $\Delta = D + \kappa I$.
    We know from \cref{claim: generatingSetTrue} 
    that the set $\{g_{i}\}_{i \in [d]} \cup \{f_{i}\}_{i \in [d']}$ 
    is a generating set of the entries of $H \Delta H^{\tt T} = M + \kappa I$.
    
    Now for any $i \in [q]$, 
    there exists some $t(i) \in [d']$, 
    such that $e_{iit(i)} > 0$, but $e_{jkt(i)} = 0$ 
    for all $i \neq k$. 
    Therefore, from \cref{lemma: full_thickeningBasic}, 
    we conclude that $\PlEVAL(H\Delta H^{\tt T})$ is $\#$P-hard.
    Moreover, $\Delta$ is of the form
    $\text{diag}(f(\lambda_{1}), \dots, f(\lambda_{q}))$ with 
    $f(\lambda_{i}) = \lambda_{i} + \kappa$.
    So, from \cref{lemma: full_stretchingExtension}, 
    we see that $\PlEVAL(M)$ is $\#$P-hard.
\end{proof}

Clearly, \cref{theorem: full_latticeHardness} is a special case of \cref{theorem: full_latticeHardnessExtension}.
We also have the following theorem.

\begin{theorem}\label{theorem: full_qTimesqNonBipLatticeNonNegative}
Let $M$ be a $q \times q$ ($q \geq 3$)  \emph{non-bipartite}, irreducible, full rank, non-negative real valued symmetric matrix. If its set of absolute values of eigenvalues 
$\{|\lambda_{i}|: i \in [q]\}$ (without duplicates as a set) 
satisfies the lattice condition, then  $\PlEVAL(M)$ is $\#$P-hard. 
\end{theorem}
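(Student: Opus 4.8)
The plan is to reduce the non-bipartite, non-negative case to the positive case already handled by \cref{theorem: full_latticeHardnessExtension}. Since $M$ is non-negative but may have zero entries, the first step is to ``fill in'' the zeros. Because $M$ is irreducible and non-bipartite, a classical Perron--Frobenius-type fact guarantees that the underlying graph $H_M$ (on vertices with edges where $m_{ij}>0$, including loops) is connected and non-bipartite, hence there is an odd integer $k$ for which every entry of $M^k$ is strictly positive; indeed, one can take a single stretching power $S_k M = M^k$ with $k$ odd and large enough. Thus $\PlEVAL(M^k)\le\PlEVAL(M)$ and $M^k$ is a $q\times q$ full rank, \emph{positive} real symmetric matrix.

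Next I would relate the eigenvalues. The eigenvalues of $M^k$ are $\lambda_i^k$. The hypothesis is that $\{|\lambda_i| : i\in[q]\}$, as a set without duplicates, satisfies the lattice condition. I need to conclude that $\{|\lambda_i|^k : i\in[q]\}$ (equivalently $\{|\lambda_i^k|\}$), again as a set without duplicates, satisfies the lattice condition. Raising to a fixed positive integer power is injective on positive reals, so the deduplicated set of $|\lambda_i|^k$ is exactly the image of the deduplicated set of $|\lambda_i|$; and a multiplicative relation $\prod (|\lambda_i|^k)^{a_i}=1$ with $\sum a_i = 0$ is the same as $\prod |\lambda_i|^{k a_i}=1$, which by the lattice condition on the $|\lambda_i|$ forces $k a_i = 0$, hence $a_i=0$ for all $i$. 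So the lattice condition is preserved. One subtlety: $M^k$ has positive eigenvalues only if all $\lambda_i>0$; in general $\lambda_i^k$ could be negative when $\lambda_i<0$ and $k$ odd. But \cref{theorem: full_latticeHardnessExtension} requires $M^k$ to be a positive \emph{matrix} (positive entries), not to have positive eigenvalues — its hypothesis is on the \emph{set of eigenvalues} satisfying the lattice condition, and the lattice condition as defined works with a finite set of nonzero reals regardless of sign (the generating-set machinery of \cref{definition: full_generatingSet} already allows a $\pm$ sign). So I should double check that \cref{theorem: full_latticeHardnessExtension}'s proof only uses positivity of \emph{entries} of $M$, which it does (it invokes \cref{lemma: full_thickeningBasic} on $M+\kappa I$ and the generating-set construction); the eigenvalues are allowed to be any nonzero reals.

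Therefore the argument is: pick odd $k$ with $M^k>0$ entrywise (using irreducibility and non-bipartiteness), observe $\{|\lambda_i^k|\}$ satisfies the lattice condition since $\{|\lambda_i|\}$ does, apply \cref{theorem: full_latticeHardnessExtension} to $M^k$ to get that $\PlEVAL(M^k)$ is $\#$P-hard, and conclude via $\PlEVAL(M^k)=\PlEVAL(S_k M)\le\PlEVAL(M)$. The main obstacle — and the place where care is genuinely needed — is establishing the existence of the odd power $k$ with $M^k$ strictly positive: this requires knowing that the ``loop-augmented'' adjacency structure is primitive in an aperiodic-with-odd-return-walks sense. If $M$ has at least one positive diagonal entry this is immediate (that loop gives odd closed walks of every large length), but if $M$ is non-bipartite yet has all-zero diagonal, one must argue from the existence of an odd cycle in $H_M$ together with connectivity that for all large $k$ (of the right parity) all entries of $M^k$ are positive; since $H_M$ non-bipartite means odd closed walks exist through every vertex for all large lengths, and connectivity gives short walks between any pair, a standard combinatorial walk-counting argument yields positivity of $M^k$ for all sufficiently large $k$ regardless of parity. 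I would also note in passing that $M^k$ remains full rank because $M$ is full rank, so all hypotheses of \cref{theorem: full_latticeHardnessExtension} are met.
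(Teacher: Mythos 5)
Your overall strategy is exactly the paper's: stretch $M$ to a large power so that it becomes entrywise positive (using irreducibility plus an odd cycle), check that full rank and the lattice condition survive, and invoke \cref{theorem: full_latticeHardnessExtension} together with $\PlEVAL(S_kM)\le\PlEVAL(M)$. The walk-counting argument for positivity of $M^k$, the preservation of full rank, and the observation that raising a set of positive reals to a fixed power preserves the lattice condition are all fine.

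However, there is a genuine gap in your choice of an \emph{odd} exponent $k$, and it sits precisely at the ``subtlety'' you flagged and then dismissed. The hypothesis of the theorem is on the set $\{|\lambda_i|\}$ \emph{of absolute values}, deduplicated; it therefore permits $M$ to have a pair of eigenvalues $\lambda_i=-\lambda_j=\mu\ne 0$ (only the Perron root is protected from this). For odd $k$ the deduplicated set of eigenvalues of $M^k$ then contains both $\mu^k$ and $-\mu^k$, and such a set can never satisfy the lattice condition in the sense needed: $(\mu^k)^{2}(-\mu^k)^{-2}=1$ with exponent sum $0$, and correspondingly the Vandermonde argument inside \cref{lemma: full_stretchingExtension} degenerates because $\mu^{x}(-\mu)^{y}=\mu^{x-2}(-\mu)^{y+2}$. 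So the issue is not whether the lattice condition ``works regardless of sign'' as an abstract definition, but that the signed, deduplicated eigenvalue set of $M^k$ is strictly larger than the $k$-th power of the deduplicated absolute-value set, and your transfer argument only controls the latter. The fix is the one the paper makes explicitly: since $M^k$ is entrywise positive for \emph{all} sufficiently large $k$, take $k$ \emph{even}. Then $\lambda_i^k=|\lambda_i|^k>0$, the deduplicated eigenvalue set of $M^k$ is exactly $\{\mu_1^k,\dots,\mu_\ell^k\}$ where $\{\mu_1,\dots,\mu_\ell\}$ is the deduplicated set of absolute values, and your power-map argument correctly yields the lattice condition for $M^k$.
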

\begin{proof}
    The matrix $M$ with non-negative values represents a weighted graph $H$. Since $M$ is irreducible, $H$ is a connected graph. Since $M$ is non-bipartite, $H$  contains a cycle of odd length $t$. Moreover, $H$ trivially has cycles of length $2$. Since $H$ is connected, and $\gcd(t, 2) = 1$, for any large enough integer $n$, there is a path of length $n$ between any $i, j \in [q]$. In other words, for some large enough integer $n$, $M^{n}$ is a full rank, positive valued matrix.  Since this is true for all sufficiently large $n$, we may assume $n$ is even.
    The eigenvalues of $M^n$ are $\lambda_{1}^{n}, \dots, \lambda_{q}^{n}$,
    with possible repetition. Suppose
    $\{\mu_1, \ldots, \mu_\ell\}$ is the set
    $\{|\lambda_{i}|: i \in [q]\}$ after  removing duplicates, 
    then by hypothesis it
    satisfies the lattice condition.
    Then $\{\mu_1^n, \ldots, \mu_\ell^n\}$ is the set
    $\{\lambda_{i}^n: i \in [q]\}$ without duplicates as a set. Indeed, if $\lambda_{i}^n = \lambda_{j}^n$
    then as real numbers $\lambda_{i} = \pm \lambda_{j}$,
    and so $|\lambda_{i}| = |\lambda_{j}|$. Thus only one of 
    $\lambda_{i}^n$ and $\lambda_{j}^n$ appears in $\{\mu_1^n, \ldots, \mu_\ell^n\}$.
   It follows that
    $\{\mu_1^n, \ldots, \mu_\ell^n\}$ also 
     satisfies the lattice condition. Therefore, 
     $\PlEVAL(S_{n}M)$ is $\#$P-hard by \cref{theorem: full_latticeHardnessExtension}. It follows that $\PlEVAL(M)$ is also $\#$P-hard.
\end{proof}

Next we prove the same theorem for the bipartite case. For the bipartite case,
we note that if a $q \times q$ matrix $M$ has full rank, then  $q$ is even and
after a permutation, $M$ has the form    $\left(\begin{smallmatrix} \mathbf{0} & A\\ A^{\tt T} & \mathbf{0} \end{smallmatrix}\right)$, for some matrix $A$ of order $q/2$.
That the  lattice condition implies \#P-hardness,
 as in \cref{theorem: full_qTimesqNonBipLatticeNonNegative}, really only works 
when $q/2 \ge 3$. 
So the following theorem is stated for $q \ge 6$. After the theorem, we give a
complete classification for such bipartite matrices with $q=4$.

\begin{theorem}\label{theorem: full_qTimesqBipartitieLatticeNonNegative}
Let $M$ be a $q \times q$ ($q \geq 6$)  \emph{bipartite}, irreducible, full rank, non-negative real valued symmetric matrix.  If the absolute values of the eigenvalues 
$\{|\lambda_{i}|: i \in [q]\}$,  as a set without duplicates, 
satisfies the lattice condition, then  $\PlEVAL(M)$ is $\#$P-hard. 
\end{theorem}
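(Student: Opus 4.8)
The plan is to mimic the non-bipartite argument of \cref{theorem: full_qTimesqNonBipLatticeNonNegative}, but a new difficulty appears at once: powers of a bipartite $M$ never become positive or even irreducible. Indeed, writing $M = \left(\begin{smallmatrix}\mathbf{0} & A\\ A^{\tt T} & \mathbf{0}\end{smallmatrix}\right)$ after a permutation (where $A$ is $(q/2)\times(q/2)$, and full rank of $M$ forces $A$ to have full rank $q/2$), every even power $M^{2k}$ equals the reducible matrix $(AA^{\tt T})^{k}\oplus(A^{\tt T}A)^{k}$, and every odd power is again bipartite. So the plan is not to power $M$ up to a positive matrix, but to fold $M$ onto one side of its bipartition and invoke \cref{theorem: full_qTimesqNonBipLatticeNonNegative} there.

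Concretely, I would set $N = AA^{\tt T}$ and note first that $\PlEVAL(S_2 M)=\PlEVAL(M^2)\le \PlEVAL(M)$ with $M^2 = N \oplus (A^{\tt T}A)$. The key step is then to verify that $N$ meets all hypotheses of \cref{theorem: full_qTimesqNonBipLatticeNonNegative}: it is $(q/2)\times(q/2)$ with $q/2\ge 3$ (this is exactly where $q\ge 6$ enters); it is symmetric, non-negative, and of full rank $q/2$; it is irreducible, because connectedness of the bipartite graph with biadjacency matrix $A$ implies connectedness of the ``common-neighbour'' graph on one side, whose adjacency matrix is $N$; and it is non-bipartite, because irreducibility forbids zero rows of $A$, hence $N_{ii}=\sum_k A_{ik}^2>0$ for every $i$, and a matrix with strictly positive diagonal cannot be permuted into bipartite form. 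Finally, the distinct absolute values of the eigenvalues of $M$ are precisely the distinct singular values of $A$, and the distinct eigenvalues of $N$ are their squares; since $t\mapsto t^2$ is a bijection of $\mathbb{R}^+$ that merely scales all multiplicative exponent relations among these numbers by the constant factor $2$ — the same observation used for $t\mapsto t^n$ in the proof of \cref{theorem: full_qTimesqNonBipLatticeNonNegative} — the set of distinct eigenvalues of $N$ satisfies the lattice condition because that of $M$ does.

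With these checks in hand, \cref{theorem: full_qTimesqNonBipLatticeNonNegative} gives that $\PlEVAL(N)$ is $\#$P-hard; \cref{lemma: full_connectedHard} applied to $M^2 = N \oplus (A^{\tt T}A)$ then gives that $\PlEVAL(M^2)$ is $\#$P-hard; and the stretching reduction $\PlEVAL(M^2)\le \PlEVAL(M)$ finishes the proof.

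The main obstacle is conceptual rather than computational: one must recognize that the bipartite structure is \emph{destroyed}, not exploited, by powering $M$, and then verify that irreducibility, non-bipartiteness, full rank, and — most delicately — the lattice condition all descend from $M$ to the folded matrix $AA^{\tt T}$. The assumption $q\ge 6$ is precisely what is needed to keep $AA^{\tt T}$ within the reach of \cref{theorem: full_qTimesqNonBipLatticeNonNegative}, which is why the $q=4$ bipartite case must be treated separately.
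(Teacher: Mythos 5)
Your proposal is correct and follows essentially the same route as the paper: square $M$ to obtain the direct sum $AA^{\tt T}\oplus A^{\tt T}A$, check that each block is irreducible, non-bipartite, non-negative, full rank, and inherits the lattice condition via squaring, then apply \cref{theorem: full_qTimesqNonBipLatticeNonNegative} together with \cref{lemma: full_connectedHard} and the stretching reduction. The only difference is cosmetic (you phrase the lattice-condition transfer through the singular values of the off-diagonal block, while the paper speaks of the subset of $\{\lambda_i^2\}$ corresponding to each block), so no further comment is needed.
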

\begin{proof}
    For a $q \times q$ bipartite matrix $M$, its square
    $M^{2}$ is a reducible matrix of the form $M^2 = A \oplus B$. 
    Since $M$ has full rank $q$, $q$ must be even, and $A$ and $B$ are both $q/2 \times q/2$.
    Both 
  $A$ and $B$ are    non-negative real valued  symmetric matrices. As $M$ is irreducible
  and bipartite,
  the underlying graph is connected, and  every pair of vertices in each part is
  connected by path of an even length. Thus $A$ and $B$ are irreducible.
 Both $A$ and $B$ contain self loops and so they are non-bipartite. 
 They have full rank since $\det(M^2) = \det(A) \det(B)$.
 $\{\lambda_{i}^2: i \in [q]\}$ is the union of  eigenvalues of $A$ and $B$.
 Since $\{|\lambda_{i}|: i \in [q]\}$ (after removal of duplicates as a set) 
satisfies the lattice condition, so does the subset of $\{\lambda_{i}^2: i \in [q]\}$
that   corresponds to $A$ (and to $B$), both after removal of duplicates as a set.
As $q \ge 6$, we have $q/2 \ge 3$ and we can conclude that $\PlEVAL(A)$ is $\#$P-hard
by \cref{theorem: full_qTimesqNonBipLatticeNonNegative}. Then $\PlEVAL(M)$ is $\#$P-hard
by \cref{lemma: full_connectedHard}.
\end{proof}

The corresponding case  $q = 4$ in \cref{theorem: full_qTimesqBipartitieLatticeNonNegative}
can be completely classified. 
    In this case,  $M = \left(\begin{smallmatrix} \mathbf{0} & N\\ N^{\tt T} & \mathbf{0} \end{smallmatrix}\right)$,
    where $N = \left(\begin{smallmatrix} a & b \\ c & d \end{smallmatrix}\right)$ 
    is a full rank, non-negative real valued matrix.
    Now, we consider $M^{2} = A \oplus B$, where
    $$A = NN^{\tt T} = \begin{pmatrix}
        a^{2} + b^{2} & ac + bd\\
        ac + bd & c^{2} + d^{2}\\
    \end{pmatrix}, \qquad B = N^{\tt T} N = \begin{pmatrix}
        a^{2} + c^{2} & ab + cd\\
        ab + cd & b^{2} + d^{2}\\
    \end{pmatrix}.$$
    Since $M$ is irreducible, all entries of $A$ and $B$  are positive (as each side of
    the bipartite graph has a vertex connected to both vertices of the other side).
    Therefore, we see that  $\PlEVAL(M^2)$ is $\#$P-hard
    unless both
    $\PlEVAL(A)$ and $\PlEVAL(B)$ are tractable, and by \cref{theorem: full_domain2Hardness}
    this is so iff
    $a^{2} + b^{2} = c^{2} + d^{2}$ 
    and $a^{2} + c^{2} = b^{2} + d^{2}$.
    Since $N$ is a non-negative matrix, this implies that $a = d$ and $b = c$.
    Therefore, as long as $(a, b) \neq (d, c)$, at least one of $\PlEVAL(A)$ or $\PlEVAL(B)$ is $\#$P-hard, and this would imply the \#P-hardness of $\PlEVAL(M)$.
    
    If $(a, b) = (d, c)$, then  $M = X \otimes Y$, where $X = \left(\begin{smallmatrix} 0 & 1\\ 1 & 0 \end{smallmatrix}\right)$, and $Y = \left(\begin{smallmatrix} a & b\\ b & a \end{smallmatrix}\right)$.
    $\PlEVAL(X)$ and $\PlEVAL(Y)$ are tractable from \cref{theorem: full_domain2Hardness}.
    We note that for any planar graph $G = (V, E)$,
    \begin{align*}
        Z_{M}(G)
        &= \sum_{\sigma: V \rightarrow [4]}\prod_{\{u, v\} \in E}m_{\sigma(u)\sigma(v)}\\
        &= \sum_{(\sigma_{1}, \sigma_{2}): V \rightarrow [2] \times [2]}\prod_{\{u, v\} \in E}X_{\sigma_{1}(u)\sigma_{1}(v)}Y_{\sigma_{2}(u)\sigma_{2}(v)}\\
        &= \left(\sum_{\sigma_{1}: V \rightarrow [2]} \prod_{\{u, v\} \in E}X_{\sigma_{1}(u)\sigma_{1}(v)}\right) \left(\sum_{\sigma_{2}: V \rightarrow [2]} \prod_{\{u, v\} \in E}Y_{\sigma_{2}(u)\sigma_{2}(v)}\right)\\
        &= Z_{X}(G) \cdot Z_{Y}(G)
    \end{align*}
    Therefore, $\PlEVAL(M)$ is also polynomial time tractable. 
    
    Note that in this case, 
    the eigenvalues of $M$ are $\{\pm(a+b), \pm(a-b)\}$ 
    for real $a \ne \pm b$,
    with absolute values  $\{|a+b|, |a-b|\}$ after removal of duplicates.  
    One is greater than the other, 
    and so they \emph{do}  satisfy the lattice condition. Thus,
    the formal statement of \cref{theorem: full_qTimesqBipartitieLatticeNonNegative}
    for $q=4$ is false (assuming \#P does not collapse to P.)

\begin{theorem}\label{theorem: full_mesaure0}
    The set of  $q \times q$ ($q \geq 3$) real  symmetric matrices $M$ such that
    $\PlEVAL(M)$ is not $\#$P-hard has measure 0.
\end{theorem}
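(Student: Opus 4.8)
The plan is to show that almost every real symmetric matrix $M$ falls under one of the hardness criteria already established, in particular \cref{theorem: full_latticeHardnessExtension} (and its non-bipartite non-negative analogue is not needed here since we are free to consider all real matrices). First I would reduce to the positive case: the set of $M$ with a zero entry is a finite union of coordinate hyperplanes, hence measure $0$, so we may assume $M$ has no zero entries; similarly the set where $M$ has an entry that is negative can be handled because we only need the eigenvalue structure, so it suffices to argue for the full-measure set of $M$ with $\det M \ne 0$ and with all the ``degenerate'' algebraic coincidences ruled out. Explicitly, I would peel off the following measure-zero bad sets: (i) $\det M = 0$ (a proper algebraic subvariety, hence measure $0$); (ii) $M$ reducible or twinned or bipartite (each described by the vanishing of some minor/polynomial, hence contained in a proper subvariety); (iii) $M$ having two eigenvalues equal in absolute value, i.e. $\lambda_i = \pm\lambda_j$ for some $i\ne j$ — this is the vanishing of $\prod_{i<j}(\lambda_i-\lambda_j)(\lambda_i+\lambda_j)$, a symmetric polynomial in the eigenvalues and hence a polynomial in the entries of $M$ (via the characteristic polynomial coefficients), so again a proper algebraic subvariety.

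Having removed those, for $M$ in the remaining full-measure set the eigenvalues $\lambda_1,\dots,\lambda_q$ are all distinct in absolute value, so ``without duplicates as a set'' the eigenvalues are simply $\lambda_1,\dots,\lambda_q$ themselves. The remaining task is to show that for almost all such $M$, the set $\{\lambda_i\}$ satisfies the lattice condition — i.e. the only integer relation $\sum_i k_i \log|\lambda_i| = 0$ with $\sum_i k_i = 0$ is the trivial one. The key step is that the map $M \mapsto (\lambda_1(M),\dots,\lambda_q(M))$, restricted to a small open neighborhood of a generic $M$, is an open map (or at least has image of positive measure) — this follows from the implicit function theorem since the eigenvalues are simple there, and in fact one can locally move the eigenvalues independently by perturbing $M$. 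Then it suffices to observe that for each fixed nonzero integer vector $(k_1,\dots,k_q)$ with $\sum k_i = 0$, the set of eigenvalue tuples satisfying $\prod |\lambda_i|^{k_i} = 1$ is a measure-zero subset of $\mathbb{R}^q$ (it is the zero set of a nonconstant real-analytic function on the relevant orthant), so its preimage under the eigenvalue map is measure $0$; taking a countable union over all integer vectors $(k_i)$ keeps the bad set measure $0$.

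Finally, for $M$ outside this countable union of measure-zero sets, \cref{theorem: full_latticeHardnessExtension} (after, if necessary, scaling $M$ by a sign or applying $S_n$ as in the proof of \cref{theorem: full_qTimesqNonBipLatticeNonNegative} to reduce to the positive case) yields that $\PlEVAL(M)$ is $\#$P-hard, and the complement of a countable union of measure-zero sets has full measure. Assembling: the set of $M$ for which $\PlEVAL(M)$ is not $\#$P-hard is contained in (finite union of proper subvarieties) $\cup$ (countable union of measure-zero analytic sets) $\cup$ (the bipartite/degenerate exceptions), all of measure $0$.

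The main obstacle I anticipate is the step asserting that the eigenvalue map is sufficiently ``spreading'' — one needs that perturbing the entries of $M$ genuinely moves the tuple $(|\lambda_1|,\dots,|\lambda_q|)$ around an open set of $\mathbb{R}^q$, rather than being confined to some lower-dimensional subset (which would make the pullback argument vacuous). For generic symmetric $M$ with simple spectrum this is true — the spectral map from symmetric matrices to $\mathbb{R}^q/S_q$ is a submersion on the open dense set of matrices with simple spectrum — but the clean way to write it is to note that the orbit of $M$ under conjugation by $O(q)$ already fills out a positive-dimensional set with fixed eigenvalues, while moving $D$ in $M = HDH^{\tt T}$ moves the eigenvalues by the full $q$ degrees of freedom, so locally the eigenvalue map is a genuine projection and pulls measure-zero sets back to measure-zero sets. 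One must also be slightly careful that $M$ real symmetric has real eigenvalues throughout, so $|\lambda_i|$ and $\log|\lambda_i|$ are real-analytic away from $\lambda_i = 0$, which we have arranged by excluding $\det M = 0$.
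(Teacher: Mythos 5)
Your overall skeleton matches the paper's: peel off the measure-zero degenerate sets (singular, reducible, bipartite, coincident $|\lambda_i|$), then show the lattice condition holds outside a countable union of measure-zero hypersurfaces. Your justification that each hypersurface $\prod_i|\lambda_i|^{n_i}=1$ pulls back to a null set of matrices --- via the spectral map being a submersion on the simple-spectrum locus --- is, if anything, more explicit than the paper's one-line assertion that the eigenvalues are continuous and piecewise differentiable functions of the entries.

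There is, however, a genuine gap in how you handle matrices with negative entries. \cref{theorem: full_latticeHardnessExtension} is stated only for \emph{positive} real valued matrices, and \cref{theorem: full_qTimesqNonBipLatticeNonNegative,theorem: full_qTimesqBipartitieLatticeNonNegative} only for non-negative ones; the paper has no lattice-condition hardness criterion that applies directly to a general real symmetric $M$ with negative entries (for general $q$). Your parenthetical claim that the non-negative analogue ``is not needed here since we are free to consider all real matrices'' has the logic backwards, and the repairs you sketch do not close the gap: multiplying $M$ by $-1$ merely flips every sign; $S_nM=M^n$ for even $n$ is positive semidefinite but its off-diagonal entries $\sum_k m_{ik}m_{jk}$ can still be negative, so it need not be entrywise non-negative (let alone positive); and $T_2M=(m_{ij}^2)$ is entrywise positive but has eigenvalues unrelated to those of $M$, so your lattice-condition analysis of the spectrum of $M$ says nothing about it. The paper resolves this differently: it first establishes the reduction $\PlEVAL(|M|)\le\PlEVAL(M)$ by the same Vandermonde/thickening interpolation as in \cref{lemma: full_count} (since $\prod|m_{\sigma(u)\sigma(v)}|=\bigl|\prod m_{\sigma(u)\sigma(v)}\bigr|$, the quantity $\text{\#}_{|M|}(G,y)$ is the sum of the recoverable quantities $\text{\#}_{M}(G,x)$ over those $x$ with $|x|=y$), and then runs the entire measure-zero argument inside the space of non-negative symmetric matrices; since $M\mapsto|M|$ is a finite-to-one piecewise isometry, the preimage of a null set is null. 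You would need this reduction (or an equivalent device) for your argument to go through.
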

\begin{proof}
    Using the same proof idea of \cref{lemma: full_count} we can show that
    $\PlEVAL(|M|)  \leq \PlEVAL(M)$ where $|M|$ denotes the matrix 
    $(|m_{i,j}|)$ obtained from $M$ by taking entry-wise absolute values.
    Consider the set of  $q \times q$ non-negative real symmetric matrices.
    The subset that has rank $<q$ has measure 0. This is also the case for bipartite matrices and reducible
    matrices.
    The set of  lattice conditions  specified by an integer sequence  $(n_{i})_{i \in [d]} \ne {\bf 0}$ 
    in  \cref{definition: full_lattice} is 
    a countable set. Each such  condition defines a 
    hypersurface $n_1 \log(\lambda_{1}^2 ) + \cdots + n_d \log (\lambda_{d}^{2}) = 1$,
    where the eigenvalues $\lambda_{i}^2$ are continuous and  piece-wise differentiable
    functions of the entries of $M$. (If we order $\lambda_{1} \le  \ldots \le \lambda_{n}$,
    we can avoid a measure 0 subset where two eigenvalues are equal, which is specified by
    the vanishing of the discriminant, a polynomial in the entries of $M$.)
    Thus, the subset where the lattice condition fails is also of measure 0.
    It follows that $\PlEVAL(M)$ is  $\#$P-hard for almost all $M$ in the sense of
    Lebesgue measure.\footnote{The number of Turing machines is countable,
    and so there are only a countable number of algorithms. But this observation does not trivialize
    \cref{theorem: full_mesaure0}, since it is possible (and indeed true) that a single TM can solve
    uncountably many problems $\PlEVAL(M)$, by the strict definition of  $\PlEVAL(M)$ for all real $M$.}
\end{proof}

\section{Hardness of \texorpdfstring{$3 \times 3$}{3 x 3} matrices}\label{sec: full_positiveHardness}


Consider a $q \times q$ full rank, positive real valued, symmetric matrix $M$, with a generating set 
$\{g_{t}\}_{t \in [d]}$ 
obtained as in \cref{lemma: full_generatingSet}, and let 
$\mathcal{M}: \mathbb{R}^{d} \rightarrow \mathbb{R}^{q \times q}$ 
be defined as in \cref{lemma: full_thickeningInterpolation}.

\begin{lemma}\label{lemma: full_reductiontoMp}
	For any  nonzero 
	polynomial $f(x_1, \ldots, x_d) \in \mathbb{Z}[x_1,\ldots, x_d]$,
	there exist  nonnegative integers $e_{1}, \ldots, e_{d}$, such that
	$f(x^{e_1}, \ldots, x^{e_d}) \in \mathbb{Z}[x]$ is a nonzero 
	polynomial.
\end{lemma}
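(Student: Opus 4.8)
The plan is to use a dimension-reduction trick: substitute $x_i = x^{e_i}$ for a cleverly chosen tuple of nonnegative integer exponents so that no two distinct monomials of $f$ get collapsed. Write $f(x_1,\dots,x_d) = \sum_{\alpha} c_\alpha x^\alpha$ where $\alpha = (\alpha_1,\dots,\alpha_d)$ ranges over a finite set $S$ of distinct exponent vectors with $c_\alpha \neq 0$. After the substitution, the monomial $x^\alpha$ becomes $x^{\langle e, \alpha\rangle}$ where $e = (e_1,\dots,e_d)$ and $\langle e,\alpha\rangle = \sum_i e_i\alpha_i$. So $f(x^{e_1},\dots,x^{e_d}) = \sum_\alpha c_\alpha x^{\langle e,\alpha\rangle}$, and this is a nonzero polynomial in $\mathbb{Z}[x]$ as soon as the linear functional $\alpha \mapsto \langle e,\alpha\rangle$ is injective on the finite set $S$ (then the top-degree term survives with its nonzero coefficient, and in fact no cancellation occurs at all).

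First I would observe that $S$ is finite, so the finitely many difference vectors $\alpha - \beta$ for distinct $\alpha,\beta \in S$ are finitely many nonzero integer vectors. I need a vector $e$ with nonnegative integer entries avoiding all the hyperplanes $\{v : \langle v, \alpha-\beta\rangle = 0\}$; since each such hyperplane is a proper subspace, their union is a proper subset of $\mathbb{R}^d$, hence also not all of $\mathbb{Z}_{\geq 0}^d$ (an infinite set not contained in finitely many hyperplanes). Concretely, one can take $e_i = N^{i-1}$ for a sufficiently large integer $N$ (larger than any coordinate of any $\alpha \in S$): then $\langle e,\alpha\rangle = \sum_i \alpha_i N^{i-1}$ is just the base-$N$ representation of a nonnegative integer, and distinct $\alpha$ give distinct values by uniqueness of base-$N$ expansions. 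This gives an explicit choice.

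The argument has essentially no obstacle — the only point requiring a moment's care is that the exponents must be \emph{nonnegative} integers (the lemma is later applied to thickening, which only allows positive thickening parameters), and the base-$N$ construction handles this automatically since all $e_i = N^{i-1} \geq 1 > 0$. I would present it in the order: (1) expand $f$ in distinct monomials; (2) fix $N$ larger than every exponent appearing; (3) set $e_i = N^{i-1}$; (4) note the induced map on exponent vectors is injective by base-$N$ uniqueness, so no two monomials merge; (5) conclude $f(x^{e_1},\dots,x^{e_d})$ has the same number of (now univariate) monomials with the same nonzero coefficients, hence is a nonzero polynomial in $\mathbb{Z}[x]$.
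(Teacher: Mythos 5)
Your proof is correct, but it takes a genuinely different route from the paper's. You use the Kronecker substitution: choosing $e_i = N^{i-1}$ with $N$ exceeding every exponent appearing in $f$, so that $\alpha \mapsto \sum_i \alpha_i N^{i-1}$ is injective on the support of $f$ by uniqueness of base-$N$ expansions, and hence \emph{no} monomials collide and no cancellation whatsoever occurs. The paper instead argues by induction on the variables: it first picks an integer point $(p, p_2, \ldots, p_d)$ with $p \neq -1, 0, 1$ at which $f$ does not vanish, and then, one coordinate at a time, replaces $p_t$ by $p^{e_t}$, using the fact that a nonzero univariate polynomial has only finitely many roots while $\{p^e : e \ge 0\}$ is infinite; at the end $f(p, p^{e_2}, \ldots, p^{e_d}) \neq 0$ certifies that the univariate polynomial $f(x, x^{e_2}, \ldots, x^{e_d})$ is nonzero. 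Your argument buys explicitness (the exponents are written down in closed form, and you get the stronger conclusion that the substituted polynomial retains exactly the same nonzero coefficients), whereas the paper's argument is shorter to state but only certifies nonvanishing at a single point and gives no control on which exponents work. Both establish exactly what the lemma asserts, and your exponents are nonnegative (indeed positive) as required, so the proof is complete; the only cosmetic remark is that the degenerate case $d = 0$ (where $f$ is a nonzero constant) should be noted as trivial, as the paper does.
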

\begin{proof}
    If $d=0$, then $f$ is a nonzero integer, and the lemma is trivial.
    If $d=1$, the lemma is proved by taking $e_1=1$.
    Assume $d > 1$. 
    There exist $p, p_2, \ldots, p_d \in \mathbb{Z}$
    such that $f(p, p_2, \ldots, p_d) \not =0$. We may assume $p \ne -1, 0, 1$,
    since $f(x, p_2, \ldots, p_d) \in \mathbb{Z}[x]$ is a nonzero 
    polynomial, and  has only finitely many zeros. Then
    $f(p, x_2, p_3, \ldots, p_d) \in \mathbb{Z}[x_2]$ is a  nonzero univariate polynomial, which has finitely many zeros. Thus,
    for some integer $e_2 \ge 0$, 
    $f(p, p^{e_2}, p_3, \ldots, p_d) \ne 0$.
    Inductively, assume $f(p^1, p^{e_2},  \ldots, p^{e_{t-1}}, p_t, \ldots, p_d) \ne 0$, for some $t$,
    then $f(p^1, p^{e_2},  \ldots, p^{e_{t-1}}, x_t, p_{t+1}, \ldots, p_d) \in \mathbb{Z}[x_t]$ is a  nonzero univariate polynomial,  and thus for some
    integer $e_t \ge 0$, 
    $f(p^1, p^{e_2}, \ldots, p^{e_{t}},  p_{t+1}, \ldots, p_d) \ne 0$.
    Finally,  $f(p^1, p^{e_2},  \ldots, p^{e_d}) \ne 0$, and
    so the  univariate polynomial 
    $f(x^1, x^{e_2},  \ldots, x^{e_d}) \in \mathbb{Z}[x]$
    is nonzero.
\end{proof}
\begin{corollary}\label{cor: reductiontoMp-p-range}
	If $M$ is a $q \times q$ full rank, positive real valued, symmetric matrix, then there exist non-negative integers $e_{1}, \dots, e_{d}$, and  real $\epsilon > 0$, such that $\det\left(\mathcal{M}(p^{e_{1}}, \dots, p^{e_{d}})\right) \neq 0$, for all $1 < p < e^{\epsilon}$.
\end{corollary}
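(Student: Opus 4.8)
The statement follows by combining Lemma~\ref{lemma: full_reductiontoMp} with a simple continuity (or rather, discreteness-of-zeros) argument. First I would observe that $\det\left(\mathcal{M}(p_1,\dots,p_d)\right)$, viewed as a function of the formal variables $p_1,\dots,p_d$, is a \emph{nonzero} integer polynomial $f(p_1,\dots,p_d)\in\mathbb{Z}[p_1,\dots,p_d]$ after clearing the $(-1)^{e_{ij0}}$ signs: this is because evaluating at the actual generating-set values $p_t = g_t$ returns $\det(M)$, which is nonzero by hypothesis. (One must be slightly careful: the entries $\mathcal{M}(\mathbf p)_{ij} = (-1)^{e_{ij0}} p_1^{e_{ij1}}\cdots p_d^{e_{ijd}}$ are genuine monomials in the $p_t$ with $\pm 1$ coefficients since the $e_{ijt}$ are nonnegative integers, by the normalization after Lemma~\ref{lemma: full_generatingSet}; hence $\det$ is a polynomial in $\mathbb{Z}[p_1,\dots,p_d]$, and it is not the zero polynomial because it does not vanish at $\mathbf g = (g_1,\dots,g_d)$.)

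Next I would apply Lemma~\ref{lemma: full_reductiontoMp} to this nonzero $f$ to obtain nonnegative integers $e_1,\dots,e_d$ such that the univariate specialization $g(x) := f(x^{e_1},\dots,x^{e_d}) = \det\left(\mathcal{M}(x^{e_1},\dots,x^{e_d})\right) \in \mathbb{Z}[x]$ is a nonzero polynomial. A nonzero univariate polynomial has finitely many real roots; in particular its roots in the interval $(1,\infty)$ are finite in number, so there is a smallest one, or none at all. In either case there exists $\epsilon>0$ such that $g(p)\neq 0$ for all $p$ in the open interval $(1,1+\epsilon)$ — equivalently (adjusting $\epsilon$) for all $1<p<e^{\epsilon}$. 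This is exactly the claim, since $g(p) = \det\left(\mathcal{M}(p^{e_1},\dots,p^{e_d})\right)$.

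The only place that requires a moment's thought is the very first step — confirming that $\det\left(\mathcal{M}(\mathbf p)\right)$ is literally a polynomial in $\mathbb{Z}[p_1,\dots,p_d]$ and not merely a Laurent polynomial. This hinges on the normalization remark in the excerpt (``By choosing some $c = g_1^{e_1'}\cdots g_d^{e_d'}$ \dots we may assume that $e_{ijt}\geq 0$''), which lets us assume all exponents $e_{ijt}$ are nonnegative; since $\PlEVAL(M)\equiv\PlEVAL(cM)$, there is no loss in passing to this normalized $M$, and then every entry of $\mathcal{M}(\mathbf p)$ is an honest monomial. Once that is in place, the rest is routine: $\det$ of a matrix of monomials is a polynomial, it is nonzero because it evaluates to $\det(M)\neq 0$ at $\mathbf p = \mathbf g$, Lemma~\ref{lemma: full_reductiontoMp} supplies the univariate reduction, and finiteness of real roots gives the interval. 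I do not anticipate any serious obstacle; the corollary is essentially a packaging of Lemma~\ref{lemma: full_reductiontoMp} into the form in which it will be used to set up the Vandermonde/interpolation argument for $\PlEVAL(M(p))$ hardness.
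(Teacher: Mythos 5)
Your proposal is correct and follows essentially the same route as the paper: note that $\det(\mathcal{M}(x_1,\dots,x_d))\in\mathbb{Z}[x_1,\dots,x_d]$ is nonzero because it evaluates to $\det(M)\neq 0$ at $(g_1,\dots,g_d)$, apply \cref{lemma: full_reductiontoMp} to obtain a nonzero univariate polynomial, and use finiteness of its real roots to extract the interval $(1,e^{\epsilon})$. Your extra remark about the normalization ensuring all exponents $e_{ijt}\ge 0$ (so the determinant is a genuine, not Laurent, polynomial) is a point the paper leaves implicit but is correctly handled.
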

\begin{proof}
    $\det(\mathcal{M}(x_{1}, \dots, x_{d})) \in  \mathbb{Z}[x_1, \dots, x_d]$ is a nonzero polynomial since $\det(\mathcal{M}(g_{1}, \dots, g_{d})) = \det(M) \neq 0$. By \cref{lemma: full_reductiontoMp} we have a nonzero univariate polynomial $\det\left(\mathcal{M}(p^{e_{1}}, \dots, p^{e_{d}})\right)$.
    It has at most finitely many zeros, and so for some $\epsilon > 0$, 
    the value is nonzero for all $1 < p < e^{\epsilon}$.
\end{proof}

We will now focus our attention back on 
$3 \times 3$ matrices specifically, and prove the hardness 
of all full rank, positive real valued matrices.
We define the function $M: \mathbb{R} \rightarrow \mathbb{R}^{3 \times 3}$ as $M(p) := \mathcal{M}(p^{e_{1}}, \dots, p^{e_{d}})$, where $e_{1}, \dots, e_{d}$ are as in \cref{lemma: full_reductiontoMp}
and~\cref{cor: reductiontoMp-p-range}. Each entry of $M(p)$
has the form $M(p)_{ij} = p^{x_{ij}}$ for some
non-negative integer $x_{ij}$, and $x_{ij} = x_{ji}$.
We know from \cref{cor: reductiontoMp-p-range} that there exists some $\epsilon > 0$, such that $\det(M(p)) \neq 0$ for all $1 < p < e^{\epsilon}$.

\begin{lemma}\label{lemma: full_determinantNonZero}
    If $M$ is a $3 \times 3$ full rank, positive real valued symmetric matrix, with $M(p)_{ij} = p^{x_{ij}}$ for all $i, j \in [3]$
    and $x_{ij} = x_{ji}$, then $$\lim\limits_{\delta \rightarrow 0} \frac{\det M(e^{\delta}) }{\delta^{2}} = 0 ~~~\implies~~~ \lim\limits_{\delta \rightarrow 0} \frac{\det M(e^{\delta}) }{\delta^{3}} \neq 0.$$
\end{lemma}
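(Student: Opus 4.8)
The plan is to expand $\det M(e^\delta)$ as a function of $\delta$ near $0$ and show that the first two nonvanishing orders cannot both be $\delta^3$ or higher. Write $M(e^\delta)_{ij} = e^{\delta x_{ij}} = 1 + \delta x_{ij} + \tfrac{\delta^2}{2} x_{ij}^2 + O(\delta^3)$. Then $M(e^\delta) = J + \delta A + \tfrac{\delta^2}{2} B + O(\delta^3)$, where $J$ is the all-ones matrix, $A = (x_{ij})$, and $B = (x_{ij}^2)$, all real symmetric $3\times3$ matrices with $A,B$ Hadamard-related. Since $J$ has rank $1$, the leading constant term of $\det M(e^\delta)$ vanishes; the hypothesis $\lim_{\delta\to 0}\det M(e^\delta)/\delta^2 = 0$ says that the coefficients of $\delta^1$ and $\delta^2$ in this expansion also vanish. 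The goal is to deduce that the coefficient of $\delta^3$ is nonzero, i.e. that the expansion does not start at order $\delta^4$ or later.

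The key algebraic step is to extract, from the vanishing of the $\delta^1$ and $\delta^2$ coefficients, strong structural constraints on $A$ (equivalently on the integer exponent matrix $(x_{ij})$). First I would compute these coefficients explicitly in terms of $J$, $A$, $B$ using the multilinearity of $\det$ (the coefficient of $\delta$ is a sum of determinants with one column of $J$ replaced by the corresponding column of $A$; the coefficient of $\delta^2$ likewise splits into ``two columns from $A$'' terms and ``one column from $\tfrac12 B$'' terms). Because $J$ is rank one with every column equal to $\mathbf 1$, most of these determinants collapse: a determinant with two or three columns equal to $\mathbf 1$ is zero. So the $\delta^1$ coefficient is $\sum_k \det(\mathbf 1,\dots,A_k,\dots,\mathbf 1)$ which is the sum of $2\times 2$-type minors; the vanishing of these, together with the vanishing of the $\delta^2$ coefficient, forces $A$ restricted modulo the span of $\mathbf 1$ to have rank at most $1$ — concretely, after subtracting a suitable rank-one ``$\mathbf 1\mathbf v^{\tt T} + \mathbf v\mathbf 1^{\tt T}$'' piece, $A$ must collapse in a way that makes the quadratic coefficient automatically zero only if a further rank-one condition holds. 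The upshot I expect: the two vanishing conditions together are equivalent to $(x_{ij})$ having an affine structure $x_{ij} = u_i + u_j$ (up to the symmetry and integrality) on the support, OR to some eigenvalue of $M(p)$ being identically a power-product of the others — but the full rank hypothesis $\det M(p)\neq 0$ on $(1, e^\epsilon)$ rules out the degenerate sub-case, leaving exactly the situation where the $\delta^3$ coefficient is a nonzero multiple of a Vandermonde-type quantity built from the distinct values among $u_1, u_2, u_3$.

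Concretely, once we know $x_{ij}=u_i+u_j$ on the relevant entries, $M(e^\delta)_{ij} = e^{\delta u_i}e^{\delta u_j}$ would make $M(e^\delta)$ exactly rank one and $\det \equiv 0$, contradicting full rank; so the vanishing of the $\delta^1,\delta^2$ coefficients cannot come entirely from this source, and the residual non-affine part of $(x_{ij})$ must be ``just barely'' nontrivial — supported on the off-diagonal in a single coordinate pattern. I would then show that this residual contributes a term $c\,\delta^3$ with $c$ a product of differences of the $u_i$ (or of the relevant exponents), which is nonzero precisely because the three rows of $M(p)$ are linearly independent for $p$ in the interval. The main obstacle I anticipate is the bookkeeping in the case analysis: pinning down exactly which patterns of the symmetric integer matrix $(x_{ij})$ make the $\delta^2$ coefficient vanish without making $M(e^\delta)$ rank one, and verifying in each surviving pattern that the $\delta^3$ coefficient is a nonzero integer combination of products of exponent-differences. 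This is essentially a finite (if fiddly) classification of $3\times3$ symmetric nonnegative-integer matrices modulo the affine symmetry $x_{ij}\mapsto x_{ij}+a_i+a_j$, combined with the rigidity forced by $\det M(p)\not\equiv 0$.
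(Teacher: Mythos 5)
Your overall skeleton matches the paper's: Taylor-expand $f(\delta)=\det M(e^{\delta})$, note $f(0)=f'(0)=0$ automatically, use the hypothesis to force $f''(0)=0$, extract a structural constraint on the exponent matrix $X=(x_{ij})$, and show the $\delta^{3}$ coefficient is a squared-Vandermonde quantity that is nonzero by full rank. But the decisive middle step is both left unproved and, where you are specific, misidentified. The condition $\tfrac12 f''(0)=\det\bigl(R_{1}-R_{3};\,R_{2}-R_{3};\,\mathbf 1\bigr)=0$ (the $R_i$ are the rows of $X$) forces one row of $X$ to be an affine combination of the other two plus a multiple of $\mathbf 1$, and from this (using that $x_{11}+x_{22}\neq 2x_{12}$, which full rank guarantees) one gets $X=\mathbf u\mathbf u^{\tt T}+(v_i+v_j)_{ij}$: an \emph{additive} part $v_i+v_j$ plus a \emph{multiplicatively rank-one} part $u_iu_j$. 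Your proposal only accounts for the additive piece $x_{ij}=u_i+u_j$ and then guesses that the residual is ``supported on the off-diagonal in a single coordinate pattern,'' to be handled by ``a finite (if fiddly) classification'' of integer exponent matrices. That is not what happens: the residual is the continuous family $(u_iu_j)$ with $\mathbf u\in\mathbb R^{3}$, so there is no finite pattern classification to fall back on, and the plan as stated would not close.

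The step that actually finishes the proof is also absent. The additive part factors out of the determinant as a global scalar $e^{\delta(2\sum_i v_i)}$, leaving $\det\bigl(e^{\delta u_iu_j}\bigr)$, whose Taylor expansion begins at order $\delta^{3}$ with coefficient $\tfrac12\bigl((u_2-u_1)(u_3-u_1)(u_3-u_2)\bigr)^{2}$; this vanishes only if $u_i=u_j$ for some $i\neq j$, in which case two rows of $M(e^{\delta})$ are proportional for \emph{all} $\delta$, contradicting $\det M(p)\neq 0$ on the interval. You gesture at this (``nonzero precisely because the three rows are linearly independent'') but never derive the $\delta^{3}$ coefficient or the proportional-rows consequence. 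As it stands the proposal is a reasonable plan with the correct endpoints, but the two load-bearing computations --- the exact structure forced by $f''(0)=0$ and the explicit order-$\delta^{3}$ coefficient of the rank-one-exponent determinant --- are missing, and your stated route for supplying them (finite classification of integer patterns) would fail.
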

\begin{proof}
   Consider the matrix
    $M(e^{\delta}) = (e^{x_{ij}\delta})_{i, j \in [3]}$.
    Define $X$ to be the $3 \times 3$ matrix with the entries $(x_{ij})_{i, j \in [3]}$, and consider the Taylor series expansion of $f(\delta) = \det M(e^{\delta})$,
    \begin{equation}\label{eqn: full_Taylor}
   f(\delta)= 
   f(0)+ f'(0) \delta 
   + f''(0) \frac{\delta^{2}}{2!} 
   + \big(3g(X) + 6 \det X \big)\frac{\delta^{3}}{3!} + O(\delta^{4}),
    \end{equation}
    where $f(0) = \det M(e^{0}) =0$ as $M(e^{0}) = J$ is the all-1 matrix, and
    $$ f'(0) = \begin{vmatrix}
        x_{11} & x_{12} & x_{13}\\
        1 & 1 & 1\\
        1 & 1 & 1\\
    \end{vmatrix} + \begin{vmatrix}
        1 & 1 & 1\\
        x_{12} & x_{22} & x_{23}\\
        1 & 1 & 1\\
    \end{vmatrix} + \begin{vmatrix}
        1 & 1 & 1\\
        1 & 1 & 1\\
        x_{13} & x_{23} & x_{33}\\
    \end{vmatrix} = 0,$$
    $$ \frac{1}{2}f''(0)  = \begin{vmatrix}
        x_{11} & x_{12} & x_{13}\\
        x_{12} & x_{22} & x_{23}\\
        1 & 1 & 1\\
    \end{vmatrix} + \begin{vmatrix}
        x_{11} & x_{12} & x_{13}\\
        1 & 1 & 1\\
        x_{13} & x_{23} & x_{33}
    \end{vmatrix} + \begin{vmatrix}
        1 & 1 & 1\\
        x_{12} & x_{22} & x_{23}\\
        x_{13} & x_{23} & x_{33}\\
    \end{vmatrix}, \text{ and }$$
    \begin{align*}
        g(X) &= \begin{vmatrix}
            (x_{11})^{2} & (x_{12})^{2} & (x_{13})^{2}\\
            x_{12} & x_{22} & x_{23}\\
            1 & 1 & 1\\
        \end{vmatrix} + \begin{vmatrix}
            1 & 1 & 1\\
            (x_{12})^{2} & (x_{22})^{2} & (x_{23})^{2}\\
            x_{13} & x_{23} & x_{33}\\
        \end{vmatrix} + \begin{vmatrix}
            x_{11} & x_{12} & x_{13}\\
            1 & 1 & 1\\
            (x_{13})^{2} & (x_{23})^{2} & (x_{33})^{2}\\
        \end{vmatrix}\\
        &\qquad + \begin{vmatrix}
            x_{11} & x_{12} & x_{13}\\
            (x_{12})^{2} & (x_{22})^{2} & (x_{23})^{2}\\
            1 & 1 & 1\\
        \end{vmatrix} + \begin{vmatrix}
            1 & 1 & 1\\
            x_{12} & x_{22} & x_{23}\\
            (x_{13})^{2} & (x_{23})^{2} & (x_{33})^{2}\\
        \end{vmatrix} + \begin{vmatrix}
            (x_{11})^{2} & (x_{12})^{2} & (x_{13})^{2}\\
            1 & 1 & 1\\
            x_{13} & x_{23} & x_{33}\\
        \end{vmatrix}.
    \end{align*}
    
    We remark that $f''(0) =0$ if ${\rm rank}~ X \le 1$.
    
    By the Taylor expansion, 
    $$\lim\limits_{\delta \rightarrow 0} \frac{\det M(e^{\delta}) }{\delta^{2}} = \frac{1}{2} f''(0).$$
After some row operations, we have
$$\frac{1}{2} f''(0) = 
        \begin{vmatrix}
            x_{11} - x_{13} & x_{12} - x_{23} & x_{13} - x_{33}\\
            x_{12} - x_{13} & x_{22} - x_{23} & x_{23} - x_{33}\\
            1 & 1 & 1\\
        \end{vmatrix}.$$
        Now we assume  $ f''(0)= 0$.
    So there exist real numbers $(a, b, c, d) \neq \mathbf{0}$, such that $a + b + c = 0$, and
    $$a \begin{pmatrix}
        x_{11} & x_{12} & x_{13}\\
    \end{pmatrix} + b \begin{pmatrix}
        x_{12} & x_{22} & x_{23}\\
    \end{pmatrix} + c \begin{pmatrix}
        x_{13} & x_{23} & x_{33}\\
    \end{pmatrix} + d \begin{pmatrix}
        1 & 1 & 1\\
    \end{pmatrix} = \mathbf{0}.$$
    Since $(a, b, c, d) \neq \mathbf{0}$, this equation
    also gives  $(a, b, c) \neq \mathbf{0}$. Therefore, we may assume without loss of generality that $c = -(a + b) \neq 0$. Let
    $\alpha = \frac{a}{a + b}$ and $\beta = \frac{d}{a + b}$, then 
    $$\begin{pmatrix}
        x_{13} & x_{23} & x_{33}\\
    \end{pmatrix} = \alpha \begin{pmatrix}
        x_{11} & x_{12} & x_{13}\\
    \end{pmatrix} + (1 -\alpha) \begin{pmatrix}
        x_{12} & x_{22} & x_{23}\\
    \end{pmatrix} + \beta \begin{pmatrix}
        1 & 1 & 1
    \end{pmatrix}.$$
This gives the expression $X =  A + \beta B$, 
where 
$$A = N^{\tt T} \begin{pmatrix}
        x_{11} & x_{12}\\
        x_{12} & x_{22}
    \end{pmatrix} N, ~~~ N = \begin{pmatrix} 
        1 & 0 & \alpha \\
        0 & 1 & 1- \alpha
    \end{pmatrix} , ~~~ B = \begin{pmatrix}
        0 & 0 & 1\\
        0 & 0 & 1\\
        1 & 1 & 2\\
    \end{pmatrix}.$$

    We will now make use of the following claim, which we shall prove later:
    \begin{restatable}{claim}{kNecessary}\label{claim: kNecessary}
        If $x_{11} + x_{22} = 2x_{12}$, then $\det M(e^{\delta}) = 0$ for all $\delta > 0$.
    \end{restatable}
    
    Since we have assumed that $M$ is a full rank matrix, we know from \cref{lemma: full_reductiontoMp} that for small enough values of $\delta$, $\det M(e^{\delta}) \neq 0$. Therefore, we have  $x_{11} + x_{22} \neq 2x_{12}$. Next we consider the matrix $A - kJ$, where
    $$k = \frac{x_{11}x_{22} - (x_{12})^{2}}{x_{11} + x_{22} - 2x_{12}}, 
    ~~~~J = \begin{pmatrix}
        1 & 1 & 1\\
        1 & 1 & 1\\
        1 & 1 & 1\\
    \end{pmatrix}.$$
    As $x_{11} + x_{22} \neq 2x_{12}$, the value $k$ is well-defined.
    The following claim will also be proved later:
    \begin{restatable}{claim}{kWorks}\label{claim: kWorks}
        The matrix $A - kJ$ has rank at most one.
    \end{restatable}

    This implies that there exists a vector
    $$\mathbf{u} = \begin{pmatrix}
        u_{1} & u_{2} & u_{3}\\
    \end{pmatrix}^{\tt T},$$
    such that $A = \mathbf{u} \mathbf{u}^{\tt T} + kJ$. Therefore,  $X = \mathbf{u} \mathbf{u}^{\tt T} + kJ + \beta B$. Next, we note that
    \begin{eqnarray}
        \addtocounter{equation}{-1}\refstepcounter{equation}\label{eqn:detM-delta-middle}
       \det M(e^{\delta}) 
        &= & \begin{vmatrix}
            e^{\delta((u_{1})^{2} + k)} & e^{\delta(u_{1}u_{2} + k)} & e^{\delta(u_{1}u_{3} + k + \beta)} \\ \nonumber
            e^{\delta(u_{1}u_{2} + k)} & e^{\delta((u_{2})^{2} + k)} & e^{\delta(u_{2}u_{3} + k + \beta)}\\
            e^{\delta(u_{1}u_{3} + k + \beta)} & e^{\delta(u_{2}u_{3} + k + \beta)} & e^{\delta((u_{3})^{2} + k + 2\beta)}
        \end{vmatrix}\\
        &= & e^{\delta(3k + 2\beta)}\begin{vmatrix}
            e^{\delta(u_{1})^{2}} & e^{\delta u_{1}u_{2}} & e^{\delta u_{1}u_{3}}\\
            e^{\delta u_{1}u_{2} } & e^{\delta(u_{2})^{2}} & e^{\delta u_{2}u_{3} }\\
            e^{\delta u_{1}u_{3} } & e^{\delta u_{2}u_{3}} & e^{\delta(u_{3})^{2}}\\
        \end{vmatrix} \\                                \label{eqn:eduut-1}
        &= & e^{\delta(3k + 2\beta)} \left(g(\mathbf{u} \mathbf{u}^{\tt T}) \frac{\delta^{3}}{2} + O(\delta^{4})\right).   \label{eqn:euut}
    \end{eqnarray}
    Here from \cref{eqn:detM-delta-middle} to (\ref{eqn:euut}) we used the Taylor expansion \cref{eqn: full_Taylor}
    on the rank one matrix $\mathbf{u} \mathbf{u}^{\tt T}$.
    
    Finally, we will use this following claim, which we shall also prove later:
    \begin{restatable}{claim}{gGood}\label{claim: gGood}
        $g(\mathbf{u}\mathbf{u}^{\tt T}) = \left((u_{2} - u_{1})(u_{3} - u_{1})(u_{3} - u_{2})\right)^{2}$.
    \end{restatable}
 
    So, if $g(\mathbf{u}\mathbf{u}^{\tt T}) = 0$, it must be the case that $u_{i} = u_{j}$ for some $i \neq j$. But in that case, we see that $\det M(e^{\delta})= 0$ for all $\delta > 0$, 
    by \cref{eqn:detM-delta-middle}, which we know to be false. Therefore, it must be the case that $g(\mathbf{u}\mathbf{u}^{\tt T}) \neq 0$. Since the leading term of $e^{\delta(3k + 2\beta)}$ is $1$, this implies that when $f''(0) = 0$, then the coefficient of $\delta^{3}$ in $\det M(e^{\delta})$ is $\frac{1}{2}g(\mathbf{u}\mathbf{u}^{\tt T}) \neq 0$. So, $\lim\limits_{\delta \rightarrow 0} \frac{\det M(e^{\delta}) }{\delta^{3}} \neq 0$.
\end{proof}

The Taylor expansion \cref{eqn: full_Taylor} and  \cref{lemma: full_determinantNonZero}
say that $\det M(e^{\delta})$ has  exact order either $\delta^2$
or $\delta^3$. We shall now finish the proof of the claims above.

\kNecessary*
\begin{proof}
    Consider the matrix $M(e^{\delta})$. Note that
    $$\begin{pmatrix}
        e^{\delta(x_{12})}\\
        e^{\delta(x_{22})}\\
        e^{\delta(\alpha x_{12} + (1 - \alpha) x_{22} + \beta)}
    \end{pmatrix} = e^{\delta(x_{12} - x_{11})} \begin{pmatrix}
        e^{\delta(x_{11})}\\
        e^{\delta(x_{12})}\\
        e^{\delta(\alpha x_{11} + (1 - \alpha) x_{12} + \beta)}
    \end{pmatrix}$$
    This implies that the second column of the matrix is a multiple of the first column, for all $\delta > 0$. Therefore, $\det M(e^{\delta})  = 0$ for all $\delta > 0$.
\end{proof}

\kWorks*
\begin{proof}
We have
$$A -kJ = N^{\tt t} \left(\begin{pmatrix}
        x_{11} & x_{12}\\
        x_{12} & x_{22}
    \end{pmatrix} - k J_2\right) N, $$
where $N = \left( \begin{smallmatrix} 
        1 & 0 & \alpha \\
        0 & 1 & 1- \alpha
    \end{smallmatrix} \right)$, and $J_2 =  \left( \begin{smallmatrix} 
        1 & 1\\
        1 & 1
    \end{smallmatrix} \right)$.
    Since $x_{11} + x_{22} \neq 2x_{12}$, $k$ is well defined
    and satisfies  
    $$(x_{11} - k)(x_{22} - k) = (x_{12} - k)^{2}.$$
    By the matrix factorization, we see that $A -kJ$ has rank at most one.
\end{proof}
 
 \gGood*
 \begin{proof}
     Note that
     \begin{align*}
        g(\mathbf{u}\mathbf{u}^{\tt T}) &= \begin{vmatrix}
            (u_{1})^{4} & (u_{1}u_{2})^{2} & (u_{1}u_{3})^{2}\\
            u_{1}u_{2} & (u_{2})^{2} & u_{2}u_{3}\\
            1 & 1 & 1\\
        \end{vmatrix} + \begin{vmatrix}
            1 & 1 & 1\\
            (u_{1}u_{2})^{2} & (u_{2})^{4} & (u_{2}u_{3})^{2}\\
            u_{1}u_{3} & u_{2}u_{3} & (u_{3})^{2}\\
        \end{vmatrix} + \begin{vmatrix}
            (u_{1})^{2} & u_{1}u_{2} & u_{1}u_{3}\\
            1 & 1 & 1\\
            (u_{1}u_{3})^{2} & (u_{2}u_{3})^{2} & (u_{3})^{4}\\
        \end{vmatrix}\\
        &\qquad + \begin{vmatrix}
            (u_{1})^{2} & u_{1}u_{2} & u_{1}u_{3}\\
            (u_{1}u_{2})^{2} & (u_{2})^{4} & (u_{2}u_{3})^{2}\\
            1 & 1 & 1\\
        \end{vmatrix} + \begin{vmatrix}
            1 & 1 & 1\\
            u_{1}u_{2} & (u_{2})^{2} & u_{2}u_{3}\\
            (u_{1}u_{3})^{2} & (u_{2}u_{3})^{2} & (u_{3})^{4}\\
        \end{vmatrix} + \begin{vmatrix}
            (u_{1})^{4} & (u_{1}u_{2})^{2} & (u_{1}u_{3})^{2}\\
            1 & 1 & 1\\
            u_{1}u_{3} & u_{2}u_{3} & (u_{3})^{2}\\
        \end{vmatrix}\\
        &= \begin{vmatrix}
            1 & 1 & 1\\
            u_{1} & u_{2} & u_{3}\\
            u_{1}^{2} & u_{2}^{2} & u_{3}^{2}\\
        \end{vmatrix} \left(-(u_{1})^{2}u_{2} - (u_{2})^{2}u_{3} - (u_{3})^{2}u_{1} + (u_{2})^{2}u_{1} + (u_{3})^{2}u_{2} + (u_{1})^{2}u_{3} \right)\\
        &= \left( (u_{2} - u_{1})(u_{3} - u_{1})(u_{3} - u_{2}) \right)^{2}
    \end{align*}
 \end{proof}
 
 Let $\lambda_i = \lambda_i(p)$ be the eigenvalues of $M(p)$, ordered by
$|\lambda_1|  \le |\lambda_2| \le |\lambda_3|$.
Clearly, for $i \in [3]$, $\lambda_{i}(p)$ are well-defined and continuous functions of $p$ (see Theorem VI.1.4 and Corollary VI.1.6 in pages 154-155 of \cite{bhatia2013matrix}).
As $M(1)$ is the all-1 matrix $J$,
$\lambda_1(1) = \lambda_2(1) = 0$ and $\lambda_3(1) =3$, 
and when $1 < p < e^{\epsilon}$,  $\lambda_{i}(p) \neq 0$ for $i \in [3]$ by \cref{cor: reductiontoMp-p-range}.
Moreover, since $M(p)$ is a positive valued matrix, the Perron
    theorem (see Theorem 8.2.8 in page 526 of \cite{horn2012matrix}) implies that $\left\lvert \lambda_{1}(p) \right\rvert \leq \left\lvert \lambda_{2}(p) \right\rvert < \left\lvert \lambda_{3}(p) \right\rvert$, and we have $\log\left(\left\lvert \lambda_{1}(p) \right\rvert/\left\lvert \lambda_{3}(p) \right\rvert\right) \neq 0$. So,
the following function $t(p)$ is well-defined on  $I_{\epsilon} = (1, e^{\epsilon})$, and is continuous as a function of $p$:
$$t(p) = 
\frac{ 
\log  \left(  \left\lvert \lambda_{2}(p) \right\rvert / \left\lvert \lambda_{3}(p) \right\rvert \right)}
{\log  \left(  \left\lvert \lambda_{1}(p) \right\rvert / \left\lvert \lambda_{3}(p) \right\rvert \right)}.$$
Clearly, $|\lambda_2| = |\lambda_1|^{t(p)} |\lambda_3|^{1-t(p)}$,
and $t(p)$ is unique satisfying this equation.

\begin{lemma}\label{lemma: full_tIrrational}
    Let $M(p)$ be a $3 \times 3$ full rank, positive real valued, symmetric matrix for $p \in I_{\epsilon}$ (where $\epsilon > 0$), with $\left(M(p)\right)_{ij} = p^{x_{ij}}$ for all $i, j \in [3]$. 
    If $t(r)$ is irrational for some $r \in I_{\epsilon}$, then $\PlEVAL(M(r))$ is $\#$P-hard.
\end{lemma}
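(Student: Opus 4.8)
The plan is to show that the irrationality of $t(r)$ forces the eigenvalues of $M(r)$ to satisfy the lattice condition, and then to quote \cref{theorem: full_latticeHardnessExtension}. First I would observe that $M(r)$ is exactly the kind of matrix that theorem applies to: it is $3\times 3$, symmetric (since $x_{ij}=x_{ji}$), has all positive entries (since $r>1$ and $x_{ij}\ge 0$), and has full rank because $r\in I_\epsilon$ and $\det M(p)\ne 0$ there by \cref{cor: reductiontoMp-p-range}. So it suffices to verify the hypothesis of \cref{theorem: full_latticeHardnessExtension}, namely that the set of eigenvalues of $M(r)$, with duplicates removed, satisfies the lattice condition of \cref{definition: full_lattice}.

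Second, I would pin down that the three eigenvalues are pairwise distinct, in fact that $|\lambda_1(r)|<|\lambda_2(r)|<|\lambda_3(r)|$. The second strict inequality is the Perron bound already recorded just before the lemma. For the first, note that if instead $|\lambda_1(r)|=|\lambda_2(r)|$, then the defining formula for $t$ gives immediately $t(r)=\log(|\lambda_2(r)|/|\lambda_3(r)|)\big/\log(|\lambda_1(r)|/|\lambda_3(r)|)=1$, contradicting the assumed irrationality. Hence $|\lambda_1(r)|,|\lambda_2(r)|,|\lambda_3(r)|$ are three distinct positive reals, and therefore $\lambda_1(r),\lambda_2(r),\lambda_3(r)$ are three distinct real numbers as well; in particular the eigenvalue set has exactly three elements.

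The heart of the argument is to show that failure of the lattice condition would make $t(r)$ rational. Suppose the lattice condition fails; then there is a nonzero integer triple $(n_1,n_2,n_3)$ with $n_1+n_2+n_3=0$ and $|\lambda_1(r)|^{n_1}|\lambda_2(r)|^{n_2}|\lambda_3(r)|^{n_3}=1$. Substituting $n_3=-(n_1+n_2)$ and taking logarithms, this becomes
$$ n_1\,\log\!\big(|\lambda_1(r)|/|\lambda_3(r)|\big)+n_2\,\log\!\big(|\lambda_2(r)|/|\lambda_3(r)|\big)=0. $$
Because $|\lambda_1(r)|\ne|\lambda_3(r)|$ and $|\lambda_2(r)|\ne|\lambda_3(r)|$, both logarithms are nonzero; hence if $n_2=0$ the relation would force $n_1=0$ and then $n_3=0$, contradicting nontriviality, so $n_2\ne 0$. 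Dividing, $t(r)=\log(|\lambda_2(r)|/|\lambda_3(r)|)\big/\log(|\lambda_1(r)|/|\lambda_3(r)|)=-n_1/n_2\in\mathbb{Q}$, a contradiction. (If \cref{definition: full_lattice} is phrased in terms of $\log\lambda_i^2=2\log|\lambda_i|$, the same elimination goes through verbatim.) Thus the eigenvalues of $M(r)$ satisfy the lattice condition, and \cref{theorem: full_latticeHardnessExtension} gives that $\PlEVAL(M(r))$ is $\#$P-hard.

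I do not expect a serious obstacle here; the only point requiring care is bookkeeping against the precise form of \cref{definition: full_lattice} — in particular that the integer vectors are restricted by $\sum_i n_i=0$, as is used in \cref{lemma: full_stretchingExtension} — since it is exactly that constraint which makes the lattice condition for a three-element set equivalent to the $\mathbb{Q}$-linear independence of $\log(|\lambda_1(r)|/|\lambda_3(r)|)$ and $\log(|\lambda_2(r)|/|\lambda_3(r)|)$, i.e.\ to $t(r)\notin\mathbb{Q}$. Everything else reduces to the short computation above together with the already-established Perron inequality and the verified hypotheses of \cref{theorem: full_latticeHardnessExtension}.
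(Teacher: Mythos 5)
Your proposal is correct and follows essentially the same route as the paper: argue the contrapositive that a nontrivial integer relation $\prod_i |\lambda_i(r)|^{n_i}=1$ with $\sum_i n_i=0$ forces $t(r)=-n_1/n_2\in\mathbb{Q}$ (the paper eliminates $n_2$ instead of $n_3$ and invokes the uniqueness of $t$ rather than dividing logarithms, but this is the same computation), and then apply \cref{theorem: full_latticeHardnessExtension}. Your additional check that the eigenvalues are pairwise distinct is harmless extra care that the paper omits.
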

\begin{proof}
We have
    $$\left\lvert \lambda_{2}(r) \right\rvert = \left\lvert \lambda_{1}(r) \right\rvert^{t(r)} \left\lvert \lambda_{3}(r) \right\rvert^{1 - t(r)}.$$
    If  the eigenvalues $\lambda_{i}(r)$ do not satisfy the lattice condition, then
    there are integers $n_i$ not all 0, such that  $n_1 + n_2 + n_3 =0$ and $\lambda_{1}(r)^{n_{1}}\lambda_{2}(r)^{n_{2}}\lambda_{3}(r)^{n_{3}} = 1$.
    We have $n_2 = -(n_1+n_3)\ne 0$, for otherwise $n_1 = - n_3 \ne 0$ and
    $|\lambda_{1}(r)| = |\lambda_{3}(r)|$, a contradiction.
    Then,  $$\left\lvert \lambda_{2}(r) \right\rvert = \left\lvert \lambda_{1}(r) \right\rvert^{\frac{n_{1}}{n_{1} + n_{3}}} \left\lvert \lambda_{3}(r) \right\rvert^{\frac{n_{3}}{n_{1} + n_{3}}}.$$
    By the uniqueness, $t(r) = \frac{n_{1}}{n_{1} + n_{3}}$ is rational.
     
    Therefore if $t(r)$ is irrational, then the eigenvalues $(\lambda_{1}, \lambda_{2}, \lambda_{3})$ of $M(r)$  must satisfy the lattice condition, and $\PlEVAL(M(r))$ is $\#$P-hard  by \cref{theorem: full_latticeHardnessExtension}.
\end{proof}

\begin{corollary}\label{corollary: full_tConstant}
    For  $M(p)$ given in \cref{lemma: full_tIrrational}, 
    if $t(p)$ is not a constant for all $p \in I_{\epsilon}$, then $\PlEVAL(M)$ is $\#$P-hard.
\end{corollary}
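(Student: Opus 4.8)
The plan is to combine three ingredients already in place: the interpolation reduction $\PlEVAL(\mathcal{M}(\mathbf{p})) \le \PlEVAL(M)$ from \cref{lemma: full_thickeningInterpolation}, the continuity of $t(p)$ on $I_{\epsilon}$ established just before the statement, and the hardness result \cref{lemma: full_tIrrational}. Since $M(p) = \mathcal{M}(p^{e_1},\ldots,p^{e_d})$ lies in the image of the map $\mathcal{M}\colon \mathbb{R}^d \to \mathbb{R}^{q\times q}$, \cref{lemma: full_thickeningInterpolation} gives $\PlEVAL(M(p)) \le \PlEVAL(M)$ for every $p \in I_{\epsilon}$. Thus it suffices to produce a single $r \in I_{\epsilon}$ for which $\PlEVAL(M(r))$ is $\#$P-hard.

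Next I would invoke the topology of the real line. The function $t\colon I_{\epsilon} \to \mathbb{R}$ is continuous, since the eigenvalues $\lambda_i(p)$ are continuous in $p$ and, by the Perron theorem, $|\lambda_2(p)| < |\lambda_3(p)|$ on $I_{\epsilon}$, so the denominator $\log(|\lambda_1(p)|/|\lambda_3(p)|)$ never vanishes. Therefore the image $t(I_{\epsilon})$ is an interval. If $t$ is not constant on $I_{\epsilon}$, this interval is nondegenerate and hence contains an irrational number; pick $r \in I_{\epsilon}$ with $t(r)$ irrational. This is precisely the use of the intermediate value theorem anticipated in the introduction.

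Finally, I would apply \cref{lemma: full_tIrrational} to this $r$: irrationality of $t(r)$ forces the eigenvalues $(\lambda_1(r),\lambda_2(r),\lambda_3(r))$ of $M(r)$ to satisfy the lattice condition (if they did not, the uniqueness of $t$ in $|\lambda_2| = |\lambda_1|^{t}|\lambda_3|^{1-t}$ would exhibit $t(r)$ as a ratio of integers), and hence $\PlEVAL(M(r))$ is $\#$P-hard by \cref{theorem: full_latticeHardnessExtension}. Composing with $\PlEVAL(M(r)) \le \PlEVAL(M)$ completes the argument.

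There is no serious obstacle: all the analytic and algebraic content has been pushed into \cref{lemma: full_tIrrational} and the continuity discussion preceding the statement, so the corollary is a short deduction. The only point worth spelling out is the elementary fact that a non-constant continuous real-valued function on an interval must assume an irrational value — this is the lone step requiring any justification, and it follows immediately from the intermediate value theorem.
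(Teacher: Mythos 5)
Your proposal is correct and matches the paper's own proof: both use the intermediate value theorem on the continuous, non-constant function $t(p)$ to find $r \in I_{\epsilon}$ with $t(r)$ irrational, then apply \cref{lemma: full_tIrrational} and the reduction $\PlEVAL(M(r)) \leq \PlEVAL(M)$. Your extra care in spelling out why the reduction and the continuity hold is fine but not a different argument.
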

\begin{proof}
    By the intermediate value theorem for continuous functions, if $t(p)$ is not a constant function
    within $I_{\epsilon}$, then there is some $r \in I_{\epsilon}$ such that $t(r)$ is irrational. Thus $\PlEVAL(M(r))$ is $\#$P-hard. Since $\PlEVAL(M(r)) \leq \PlEVAL(M)$, this also implies that $\PlEVAL(M)$ is $\#$P-hard.
\end{proof}

\begin{remark}
    Here, it is important to note that our choice of $r$ in \cref{corollary: full_tConstant} need not be rational. In fact, it may even be the case that $r$ is transcendental.
    (See \cref{sec: full_appendixB}.) Therefore, we may not assume that $M(r)$ has rational or even algebraic values. However, as we have seen in \cref{sec: full_modelComputation}, this does not cause any issues with our model of computation, as we can continue to represent $Z_{M(r)}(G)$ as a polynomial sized tuple of integers, in terms of $\COUNT(M(r))$.
\end{remark}

\begin{lemma}\label{lemma: full_tPconvergence}
    Let  $M(p)$  be as given in \cref{lemma: full_tIrrational}. 
    Then $\PlEVAL(M)$ is $\#$P-hard, unless $t(p) = 1$ or $t(p) = \frac{1}{2}$ for all $p \in I_{\epsilon}$.
\end{lemma}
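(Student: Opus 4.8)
The plan is to split on whether $t(p)$ is constant on $I_\epsilon$. If $t(p)$ is non-constant, then $\PlEVAL(M)$ is already $\#$P-hard by \cref{corollary: full_tConstant}, so we are done. Hence it suffices to assume $t(p)\equiv c$ is constant on $I_\epsilon$ and to prove that then $c\in\{\tfrac12,1\}$; since a constant function equals its limit, this amounts to showing $\lim_{p\to 1^+}t(p)\in\{\tfrac12,1\}$.

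For the limit I would substitute $p=e^{\delta}$ and analyze the three eigenvalues as $\delta\to 0^+$. Recall $M(e^{0})=J$, the all-$1$ matrix, with eigenvalues $0,0,3$. The Perron eigenvalue $\lambda_3(e^{\delta})$ is a \emph{simple} eigenvalue of $J$, so by standard analytic perturbation theory it is real-analytic in $\delta$ near $0$, with $\lambda_3(e^{\delta})\to 3$; in particular $\log|\lambda_3(e^{\delta})|\to\log 3$. For the other two, $\lambda_1(e^{\delta})+\lambda_2(e^{\delta})=\operatorname{tr}M(e^{\delta})-\lambda_3(e^{\delta})$ is real-analytic in $\delta$ and vanishes at $\delta=0$, while $\lambda_1(e^{\delta})\lambda_2(e^{\delta})=\det M(e^{\delta})/\lambda_3(e^{\delta})$ is real-analytic of the same vanishing order as $\det M(e^{\delta})$ (since $\lambda_3(e^{\delta})$ is analytic and nonzero near $0$), which by \cref{lemma: full_determinantNonZero} together with the Taylor expansion \cref{eqn: full_Taylor} (which already forces order $\ge 2$) is exactly $2$ or $3$. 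Let $a$ and $b$ be the leading exponents of $\lambda_1(e^{\delta})$ and $\lambda_2(e^{\delta})$ in their Puiseux expansions at $\delta=0$ (these exist by the classical theory of eigenvalues of an analytic matrix family, and are finite because $\det M(e^{\delta})\neq 0$ on $(0,\epsilon)$ by \cref{cor: reductiontoMp-p-range}); then $0<b\le a$, and since leading exponents add under multiplication, $a+b=\operatorname{ord}_{\delta}(\lambda_1\lambda_2)=\operatorname{ord}_{\delta}\det M(e^{\delta})\in\{2,3\}$. As $\log|\lambda_1(e^{\delta})|=a\log\delta+O(1)$ and $\log|\lambda_2(e^{\delta})|=b\log\delta+O(1)$ with $\log\delta\to-\infty$, while $\log|\lambda_3(e^{\delta})|\to\log 3$, we obtain $t(e^{\delta})\to b/a$.

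It then remains to show that $(a,b)$ can only be $(1,1)$, $(\tfrac32,\tfrac32)$ or $(2,1)$. If $a=b$, then $2a=a+b\in\{2,3\}$ gives $a=b\in\{1,\tfrac32\}$, so $b/a=1$. If $a>b$, then the dominant term of $\lambda_1(e^{\delta})+\lambda_2(e^{\delta})$ is the leading term of $\lambda_2(e^{\delta})$ (no cancellation is possible), so this sum has order exactly $b$; but it is real-analytic and not identically zero (else $\lambda_1=-\lambda_2$, contradicting $a>b$), hence its order $b$ is a positive integer. Combined with $a=(a+b)-b$ and $a>b\ge 1$, this forces $a+b=3$, so $b=1$, $a=2$, and $b/a=\tfrac12$. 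In every case $\lim_{p\to 1^+}t(p)=b/a\in\{\tfrac12,1\}$, which proves the lemma.

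The crux is really the bound supplied by \cref{lemma: full_determinantNonZero}: that $\det M(e^{\delta})$ has vanishing order at most $3$ is exactly what rules out exotic eigenvalue splittings such as $(a,b)=(\tfrac52,\tfrac12)$ or $(\tfrac32,\tfrac12)$, which would produce forbidden limits $\tfrac15$ or $\tfrac13$. The secondary point is that, although the difference $\lambda_1(e^{\delta})-\lambda_2(e^{\delta})$ may ramify (behave like $\delta^{1/2}$), the \emph{sum} $\lambda_1(e^{\delta})+\lambda_2(e^{\delta})$ is honestly analytic in $\delta$ and therefore has an integer vanishing order; this is what pins $b=1$ rather than $b=\tfrac12$ in the asymmetric case $a>b$.
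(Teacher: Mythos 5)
Your proof is correct, and its skeleton coincides with the paper's: both reduce via \cref{corollary: full_tConstant} to computing $\lim_{p\to 1^+}t(p)$ in the constant case, both split according to whether the two small eigenvalues vanish at the same rate, and both combine \cref{lemma: full_determinantNonZero} (which pins the vanishing order of $\lambda_1\lambda_2$ to $2$ or $3$) with an integrality argument that forces the asymmetric split to be orders $(2,1)$. The difference lies in the machinery used for the eigenvalue asymptotics. You invoke analytic perturbation theory --- the simple Perron branch $\lambda_3$ is analytic near $\delta=0$, the two small branches admit Puiseux expansions with leading exponents $a\ge b>0$ --- and you derive the integrality of $b$ from the analyticity of $\lambda_1+\lambda_2=\operatorname{tr}M(e^{\delta})-\lambda_3(e^{\delta})$. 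The paper instead works exclusively with symmetric functions of the eigenvalues, which are genuine polynomials or rational functions of $p$: the dichotomy between $|\lambda_1|/|\lambda_2|$ staying bounded below and tending to $0$ is read off the Laurent expansion of $S(p)=\frac{\lambda_1\lambda_2}{\lambda_3}+\frac{\lambda_2\lambda_3}{\lambda_1}+\frac{\lambda_3\lambda_1}{\lambda_2}=\frac{s_2^2-2s_1s_3}{s_3}$, and the integrality of the order of $\lambda_2$ comes from $s_2$ being a polynomial in $p$. Your route is somewhat more direct but imports Rellich/Puiseux theory as an external input, whereas the paper's needs only continuity of the eigenvalues plus elementary manipulation of $s_1,s_2,s_3$; both are sound, and both ultimately rest on the same two facts, namely that the determinant has order exactly $2$ or $3$ and that the slower-vanishing small eigenvalue has integer order.
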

\begin{proof}
    Recall that
    $\lambda_{1}(1) = 0$, $\lambda_{2}(1) = 0$ and $\lambda_{3}(1) = 3$,
    and $\lambda_{i}(p)$ are continuous as functions of $p$.
    By definition $|\lambda_{1}(p)|/|\lambda_{2}(p)| \leq 1$.
    We claim that  either $\lim\limits_{p \rightarrow 1^{+}}\frac{|\lambda_{1}(p)|}{|\lambda_{2}(p)|} = 0$, or this ratio stays above some $c > 0$. To see that, let
    $$S(p) = \frac{\lambda_{1}\lambda_{2}}{\lambda_{3}} + \frac{\lambda_{2}\lambda_{3}}{\lambda_{1}} + \frac{\lambda_{3}\lambda_{1}}{\lambda_{2}}.$$
    $S(p)$ is a symmetric rational function. Indeed, $S(p) = \frac{s_{2}^{2} - 2s_{1}s_{3}}{s_{3}}$, 
    where
    $$s_1 = \lambda_1 + \lambda_2 + \lambda_3, ~~~s_2 = \lambda_1\lambda_2 + \lambda_2\lambda_3 + \lambda_3 \lambda_1,~~~s_3 = \lambda_1 \lambda_2 \lambda_3$$
    are the elementary symmetric polynomials of $\lambda_{i}$,
    which are polynomials in the entries of $M(p)$.
    
    Note that $S(p) = \frac{\lambda_{1}^2\lambda_{2}^2 + \lambda_{2}^2 \lambda_{3}^2 + \lambda_{3}^2\lambda_{1}^2} {\lambda_{1}\lambda_{2} \lambda_{3}}$ cannot be identically 0, for otherwise as a polynomial in $p$,  the numerator is identically 0, which would imply that 
    $\lambda_{1}\lambda_{2} = \lambda_{2} \lambda_{3} = \lambda_{3}\lambda_{1}$ are identically 0 functions
    in $p$.  But $\lambda_{3} \rightarrow 3 \ne 0$ as $p\rightarrow 1^+$, and so $\lambda_{1}, \lambda_{2}$ are identically 0. However, this contradicts $\det M(p) \ne 0$
    for $p \in I_\epsilon$.

    Expanding $S(p)$
    as a Laurent series in $(p-1)$, we have
    $S(p) = c_{k}(p-1)^{k} + c_{k + 1}(p-1)^{k + 1} + \dots$, 
    where $k\in \mathbb{Z}$, and $c_{k} \neq 0$.
    Since ${\lambda_{1}\lambda_{2}}/{\lambda_{3}} \rightarrow 0$
    as $p \rightarrow 1^+$,
    and $|{\lambda_{3}}| \left|\frac{\lambda_{1}}{\lambda_{2}}\right|$ stays bounded above,
    it follows that $ k <0$ iff $S(p) \rightarrow \infty$ iff $\lambda_{1}/\lambda_{2} \rightarrow 0$, and $k \ge 0$ iff 
    $|\lambda_{1}/\lambda_{2}| \ge c > 0$, for all $p \in I_{\epsilon}$.

    Suppose $|\lambda_{1}/\lambda_{2}| \ge c > 0$ for all $p \in I_{\epsilon}$,
    then 
    $$t(p) =  1 -
    \frac{ 
    \log  \left(  \left\lvert \lambda_{1}(p) \right\rvert / \left\lvert \lambda_{2}(p) \right\rvert \right)}
    {\log  \left(  \left\lvert \lambda_{1}(p) \right\rvert / \left\lvert \lambda_{3}(p) \right\rvert \right)} \rightarrow 1.$$
    Therefore the only possibility that $t(p)$ is a constant on  $I_{\epsilon}$
    is that it is constant 1. If $t(p)$ is not  constant $1$, 
    then $\PlEVAL(M)$ is $\#$P-hard by \cref{corollary: full_tConstant}.
    
    Now suppose
    $\lim\limits_{p \rightarrow 1^{+}}\frac{|\lambda_{1}(p)|}{|\lambda_{2}(p)|} = 0$.
    As we already noted, $s_2$ is a polynomial in the entries of $M(p)$,
    which we can express as a polynomial in $(p-1)$. It is 
    not identically 0.
    This can be seen by $s_2 = \lambda_{2} ( \lambda_{1} + \lambda_{3} ( 1 + 
    \frac{ \lambda_{1}}{\lambda_{2}}) )$, where
    the second factor has the same limit as $\lambda_{3} \rightarrow 3$. So if $s_2$ were identically 0
    we would have $\lambda_{2}$ identically 0,
    contradicting $\det M(p) \ne 0$
    for $p \in I_\epsilon$.
    Let $s_2 = a_{j} (p-1)^{j} + a_{j+1}(p-1)^{j+1} + \ldots$ be in increasing power terms, 
    where $a_{j} \ne 0$ is the first nonzero term.
    It has the same order as $\lambda_{2}$ when $p \rightarrow 1^+$,
    since $s_2 = \lambda_{2} ( \lambda_{1} + \lambda_{3} ( 1 + 
    \frac{ \lambda_{1}}{\lambda_{2}}) )$, where $\lambda_{1}\rightarrow 0$, $\frac{ \lambda_{1}}{\lambda_{2}} \rightarrow 0$
    and $\lambda_{3} \rightarrow 3$.
    So $\lambda_{2}$ also has the exact order $j$.
   The important point is  that   this exact order $j$ is an integer.
    
    From \cref{lemma: full_determinantNonZero}, we know that
    $\lambda_{1}\lambda_{2}\lambda_{3}$ is either
    of exact order $\Theta((p-1)^{2})$ or $\Theta((p-1)^{3})$.
    Since $\lambda_{3} \rightarrow 3$ the same is true for
    the product 
    $\lambda_{1}\lambda_{2}$.
    As both $\lambda_{1}, \lambda_{2}  \rightarrow 0$, and we are
    in the case $|\lambda_{1}|/|\lambda_{2}| \rightarrow 0$,
    the only possibility is $j=1$, i.e.,  $\lambda_{1} = \Theta((p-1)^2)$
    and $\lambda_{2} = \Theta((p-1)^1)$.
    
    In this case, we see that
    $\lim\limits_{p \rightarrow 1^+}t(p) = \frac{1}{2}$.
    Therefore, once again by \cref{corollary: full_tConstant}, if $t(p)$ is not a constant  $\frac{1}{2}$ for $p \in I_{\epsilon}$, then $\PlEVAL(M)$ is $\#$P-hard.
\end{proof}

\begin{lemma}\label{lemma: full_tOneGood}
Let  $M(p)$  be given in \cref{lemma: full_tIrrational}.    If $t(p) = 1$ for all $p \in I_{\epsilon}$, then $\PlEVAL(M)$ is $\#$P-hard.
\end{lemma}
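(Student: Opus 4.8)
Recall the setup: $M(p)$ has entries $M(p)_{ij} = p^{x_{ij}}$ with $x_{ij} = x_{ji}$ nonnegative integers, it has full rank on $I_\epsilon$, and we are now in the case $t(p) = 1$ for all $p \in I_\epsilon$. By definition of $t(p)$, this means $|\lambda_2(p)| = |\lambda_1(p)|^{1} |\lambda_3(p)|^{0} = |\lambda_1(p)|$ for all $p \in I_\epsilon$; combined with the Perron ordering $|\lambda_1| \le |\lambda_2| < |\lambda_3|$ this forces $|\lambda_1(p)| = |\lambda_2(p)|$ identically on $I_\epsilon$. So the plan is to extract strong structural consequences from the identity $|\lambda_1(p)| = |\lambda_2(p)|$ and then reduce to a case already handled (either a twinned/reducible matrix, or \cref{theorem: full_latticeHardnessExtension}, or \cref{lemma: full_thickeningBasic} via some gadget).

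\textbf{Step 1: translate $|\lambda_1|=|\lambda_2|$ into a polynomial identity.} Since $\lambda_1, \lambda_2$ are real (as $M(p)$ is a real symmetric matrix), $|\lambda_1| = |\lambda_2|$ means $\lambda_1 = \pm\lambda_2$. If $\lambda_1 = \lambda_2$ on a subinterval, then $M(p)$ has a repeated eigenvalue, so its discriminant (a polynomial in $p$, hence in the entries) vanishes identically on $I_\epsilon$ and therefore everywhere; if $\lambda_1 = -\lambda_2$ on a subinterval then $\lambda_1 + \lambda_2$ vanishes identically, i.e. $s_1 = \lambda_1 + \lambda_2 + \lambda_3 = \lambda_3$ and $s_2 = \lambda_1\lambda_2 + \lambda_3(\lambda_1+\lambda_2) = \lambda_1\lambda_2 = -\lambda_1^2 \le 0$ identically, and moreover $\lambda_1\lambda_2 = s_3/\lambda_3$ forces a polynomial relation $s_3 = -\lambda_1^2 \lambda_3$, equivalently $s_1 s_3 + s_2^2 = \lambda_3^2 \cdot(\text{something})$ — more cleanly, $\lambda_1 = -\lambda_2$ on an interval makes the characteristic polynomial factor as $(\mu - \lambda_3)(\mu^2 - \lambda_1^2)$, so $s_2 = -\lambda_1^2$ and $s_1 s_2 = s_3$, i.e. $\operatorname{tr}(M)\cdot\big(\tfrac{\operatorname{tr}(M)^2 - \operatorname{tr}(M^2)}{2}\big) = \det(M)$ as an identity in $p$. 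In either subcase we get a nontrivial algebraic constraint, and by continuity (and since $\lambda_1,\lambda_2$ are continuous) one of the two alternatives $\lambda_1 \equiv \lambda_2$ or $\lambda_1 \equiv -\lambda_2$ must hold on all of $I_\epsilon$ — or possibly each holds on part of $I_\epsilon$, in which case at the boundary point both hold, forcing $\lambda_1 = \lambda_2 = 0$ there, contradicting $\det M(p) \ne 0$ except at the single point $p=1$; so exactly one alternative holds throughout $I_\epsilon$.

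\textbf{Step 2: exploit the constraint to reach known hardness.} In the case $\lambda_1 \equiv -\lambda_2$, the characteristic polynomial has the form $(\mu - \lambda_3)(\mu^2 - \lambda_1^2)$, so the set of distinct eigenvalues as values is $\{\lambda_3, \lambda_1, -\lambda_1\}$, and the \emph{absolute values} $\{|\lambda_1|, |\lambda_3|\}$ contains only two distinct elements, one strictly larger than the other — hence $\{|\lambda_i|\}$ trivially satisfies the lattice condition (a two-element set of distinct positive reals always does). Then I would square: $S_2 M(p) = M(p)^2$ has eigenvalues $\lambda_3^2, \lambda_1^2, \lambda_1^2$, i.e. distinct values $\{\lambda_1^2, \lambda_3^2\}$, and $M(p)^2$ is still positive valued (entries of $M(p)$ positive). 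Unfortunately a $3\times 3$ matrix with only two distinct eigenvalues has rank issues only if an eigenvalue is $0$; here $\lambda_1 \ne 0$ on $I_\epsilon$, so $M(p)^2$ is full rank, positive, symmetric, with eigenvalue set $\{\lambda_1^2, \lambda_3^2\}$ satisfying the lattice condition, and \cref{theorem: full_latticeHardnessExtension} applies directly to give \#P-hardness of $\PlEVAL(M(p)^2)$, hence of $\PlEVAL(M(p))$, hence of $\PlEVAL(M)$. In the case $\lambda_1 \equiv \lambda_2$, the matrix $M(p)$ has a repeated eigenvalue and the distinct eigenvalue set is $\{\lambda_1, \lambda_3\}$ (two distinct positive values by Perron, since $\lambda_1 = \lambda_2$ and $|\lambda_1| < |\lambda_3|$, and $\lambda_1 \ne 0$), which satisfies the lattice condition outright; so again \cref{theorem: full_latticeHardnessExtension} applies directly to $M$ and gives \#P-hardness.

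\textbf{Main obstacle.} The genuinely delicate point is Step 1: ruling out that the two sign-alternatives toggle across $I_\epsilon$, and more importantly making sure the repeated-eigenvalue case $\lambda_1 \equiv \lambda_2$ is actually consistent with $M(p)$ being positive-entried and full rank — a priori one might worry that $t(p)\equiv 1$ is vacuous. I expect the resolution is exactly as above: continuity of the $\lambda_i$ plus $\det M(p)\ne 0$ on $I_\epsilon$ pins down a single alternative, and in both surviving cases the distinct-absolute-value set has size exactly two, which automatically satisfies the lattice condition, so \cref{theorem: full_latticeHardnessExtension} closes the argument without any further Vandermonde/interpolation work. The only subtlety to write carefully is that "a two-element set of distinct positive reals satisfies the lattice condition" — this is immediate from \cref{definition: full_lattice} since a single nontrivial integer relation $n_1 \log\lambda_1^2 + n_2\log\lambda_3^2 = 0$ with $\lambda_1 \ne \lambda_3$ would force $\lambda_1^2/\lambda_3^2$ to be a root of unity in $\mathbb{R}_{>0}$, i.e. $1$, contradiction.
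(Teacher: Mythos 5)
Your proposal is correct and its core is exactly the paper's argument: from $t(p)\equiv 1$ one gets $|\lambda_1(p)|=|\lambda_2(p)|$ on $I_\epsilon$, and passing to $M(p)^2$ yields a full-rank, positive, symmetric matrix whose distinct eigenvalues $\{\lambda_1^2,\lambda_3^2\}$ (with $\lambda_1^2<\lambda_3^2$ by Perron) trivially satisfy the lattice condition, so \cref{theorem: full_latticeHardnessExtension} gives \#P-hardness of $\PlEVAL(M(p)^2)\le\PlEVAL(M)$. The paper skips your Step 1 entirely---by always squaring it never needs to decide whether $\lambda_1\equiv\lambda_2$ or $\lambda_1\equiv-\lambda_2$, which also sidesteps your slightly inaccurate assertion that $\lambda_1$ is \emph{positive} in the repeated-eigenvalue case (Perron only forces $\lambda_3>0$; the argument still works because only $|\lambda_1|\neq|\lambda_3|$ is needed).
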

\begin{proof}
    From $t(p) = 1$ we have $\left\lvert \lambda_{2}(p) \right\rvert = \left\lvert \lambda_{1}(p) \right\rvert$. Consider the matrix $M(p)^{2}$. Since $M(p)$ is a positive matrix, so is $M(p)^{2}$. Moreover, its eigenvalues are $\lambda_{1}(p)^{2} = \lambda_{2}(p)^{2} < \lambda_{3}(p)^{2}$. Therefore, $\PlEVAL(M(p)^{2})$ is $\#$P-hard, as a consequence of \cref{theorem: full_latticeHardnessExtension}. Since $\PlEVAL(M(p)^{2}) \leq \PlEVAL(M(p)) \leq \PlEVAL(M)$, it also means that $\PlEVAL(M)$ is $\#$P-hard.
\end{proof}

\begin{lemma}\label{lemma: full_tHalfImpossible}
    Let  $M(p)$  be given in \cref{lemma: full_tIrrational}.  
    Then  $t(p)$
    is not a constant $\frac{1}{2}$ on the interval $I_{\epsilon}$.
\end{lemma}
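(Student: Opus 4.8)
The plan is to assume $t(p)\equiv\tfrac12$ on $I_\epsilon$ and derive an impossible numerical identity among the parameters of $M(p)$; the key point will be that the hypothesis forces the leading Taylor coefficient (at $p=1$) of $\det M(p)$ to be a \emph{product} $d_{12}d_{13}d_{23}$ while the leading coefficient of $s_2$ is the \emph{sum} $d_{12}+d_{13}+d_{23}$, and the characteristic‑polynomial relation then pins sum against product in a way that violates AM--GM.

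First I would turn the hypothesis into algebra. If $t(p)\equiv\tfrac12$ then $|\lambda_2(p)|^2=|\lambda_1(p)\lambda_3(p)|$; since $\lambda_2$ is real and the $\lambda_i(p)$ are continuous and nonzero on the connected set $I_\epsilon$, this gives $\lambda_1(p)\lambda_3(p)=\varepsilon\,\lambda_2(p)^2$ for a fixed $\varepsilon\in\{+1,-1\}$. Squaring kills the sign: the eigenvalues $\mu_i:=\lambda_i(p)^2$ of $M(p)^2$ (all positive, since $M(p)$ has full rank) satisfy $\mu_2^2=\mu_1\mu_3$. Writing $\sigma_j$ for the elementary symmetric functions of $\mu_1,\mu_2,\mu_3$ and $s_j$ for those of $\lambda_1,\lambda_2,\lambda_3$ (the coefficients of the characteristic polynomial of $M(p)$, hence polynomials in $p$), one has $\sigma_1=s_1^2-2s_2$, $\sigma_2=s_2^2-2s_1s_3$, $\sigma_3=s_3^2$. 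From $\mu_2^2=\mu_1\mu_3$ we get $\mu_2^3=\mu_1\mu_2\mu_3=\sigma_3$; substituting $\mu_2$ into its characteristic equation and using $\mu_2^3=\sigma_3$ yields $\mu_2(\sigma_1\mu_2-\sigma_2)=0$, and since $\mu_2\neq0$ and $\sigma_1=\mathrm{tr}(M(p)^2)>0$ we get $\mu_2=\sigma_2/\sigma_1$, whence $\sigma_2^3=\sigma_1^3\sigma_3$, i.e.
\[ (s_2^2-2s_1s_3)^3=(s_1^2-2s_2)^3\,s_3^2 \]
as an identity of polynomials in $p$ (it holds on the interval $I_\epsilon$, hence identically).

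Next I would extract the local structure at $p=1$. Put $p=e^{\delta}$ and let $\delta\to0^+$. Since $t\equiv\tfrac12$ forces $\lim_{p\to1^+}t(p)=\tfrac12$, the case analysis in the proof of \cref{lemma: full_tPconvergence} places us in the subcase $\lambda_1=\Theta((p-1)^2)$, $\lambda_2=\Theta(p-1)$, $\det M(p)=\Theta((p-1)^3)$; equivalently $f''(0)=0$, where $f(\delta)=\det M(e^{\delta})$. Hence the structural conclusion in the proof of \cref{lemma: full_determinantNonZero} applies: after relabeling $[3]$, $X=(x_{ij})=\mathbf{u}\mathbf{u}^{\tt T}+kJ+\beta B$ for distinct reals $u_1,u_2,u_3$ and some $k,\beta\in\mathbb{R}$, with $B=\left(\begin{smallmatrix}0&0&1\\0&0&1\\1&1&2\end{smallmatrix}\right)$. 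Two consequences I would record: (a) $d_{ij}:=x_{ii}+x_{jj}-2x_{ij}=(u_i-u_j)^2>0$ for every pair $i<j$ (the $kJ$ and $\beta B$ terms each contribute $0$ to $x_{ii}+x_{jj}-2x_{ij}$); and (b) by \cref{eqn:euut} together with \cref{claim: gGood}, the leading term of $f(\delta)$ is $\tfrac12\,g(\mathbf{u}\mathbf{u}^{\tt T})\,\delta^3=\tfrac12\,d_{12}d_{13}d_{23}\,\delta^3$.

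Finally I would compare lowest‑order terms in the polynomial identity at $\delta=0$. Since $M(1)=J$, $s_1=3+O(\delta)$; since $m_{ii}m_{jj}-m_{ij}^2=e^{\delta(x_{ii}+x_{jj})}-e^{2\delta x_{ij}}=d_{ij}\delta+O(\delta^2)$, we get $s_2=D\delta+O(\delta^2)$ with $D:=d_{12}+d_{13}+d_{23}>0$; and $s_3=\tfrac12\,d_{12}d_{13}d_{23}\,\delta^3+O(\delta^4)$. Thus $s_1^2-2s_2=9+O(\delta)$ and $s_2^2-2s_1s_3=D^2\delta^2+O(\delta^3)$ (the $\delta^2$ term survives, as $2s_1s_3=O(\delta^3)$), so the right side of the identity has lowest term $729\cdot\tfrac14(d_{12}d_{13}d_{23})^2\,\delta^6$ and the left side $D^6\delta^6$. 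Equating $\delta^6$‑coefficients forces $D^6=\tfrac{729}{4}(d_{12}d_{13}d_{23})^2=\bigl(\tfrac{27}{2}d_{12}d_{13}d_{23}\bigr)^2$, and since $D^3>0$ this gives $(d_{12}+d_{13}+d_{23})^3=\tfrac{27}{2}\,d_{12}d_{13}d_{23}$. But AM--GM gives $(d_{12}+d_{13}+d_{23})^3\ge27\,d_{12}d_{13}d_{23}>\tfrac{27}{2}\,d_{12}d_{13}d_{23}$ since the $d_{ij}$ are positive --- a contradiction, completing the proof. The main obstacle, and the only place the precise form of $M(p)$ is used, is obtaining the structure $X=\mathbf{u}\mathbf{u}^{\tt T}+kJ+\beta B$ with $d_{ij}=(u_i-u_j)^2$; once this is in hand, the passage to $M(p)^2$, the polynomial identity, and the Taylor bookkeeping are routine, and AM--GM closes the argument.
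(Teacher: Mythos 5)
Your proof is correct, but after the common opening move it takes a genuinely different route from the paper. Both arguments begin the same way: from $t(p)\equiv\tfrac12$ one gets $\mu_2^2=\mu_1\mu_3$ for the eigenvalues $\mu_i=\lambda_i^2$ of $M(p)^2$, hence the polynomial identity $\sigma_1^3\sigma_3=\sigma_2^3$ (the paper reaches it through the factorization $F(p)=s_1^3s_3-s_2^3$; you reach it through the characteristic equation of $M(p)^2$ --- both are valid, and your rewriting in terms of the $s_j$ of $M(p)$ itself is just a change of variables). From there the paper stays purely algebraic: unique factorization in $\mathbb{R}[p]$ forces $\det M(p)$ to be a perfect cube $g^3$, and a count of $\pm1$ monomials in $\det M(p)$ versus the at least eight unavoidable monomials of $g^3$ gives the contradiction. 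You instead localize at $p=1$: you observe that $t\equiv\tfrac12$ forces (via the case analysis inside \cref{lemma: full_tPconvergence}) $\det M(e^{\delta})=\Theta(\delta^3)$, hence $f''(0)=0$, so the structure $X=\mathbf{u}\mathbf{u}^{\tt T}+kJ+\beta B$ from \cref{lemma: full_determinantNonZero} is available; the identities $d_{ij}=(u_i-u_j)^2$ (correct --- both $J$ and $B$ are annihilated by $x_{ii}+x_{jj}-2x_{ij}$, as one checks entrywise for $B$) and $g(\mathbf{u}\mathbf{u}^{\tt T})=d_{12}d_{13}d_{23}$ then pin the leading coefficients of $s_2$ and $s_3$ as $\sum_{i<j}d_{ij}$ and $\tfrac12\prod_{i<j}d_{ij}$, and matching $\delta^6$-coefficients in the identity yields $(\sum d_{ij})^3=\tfrac{27}{2}\prod d_{ij}$, which AM--GM (with the $d_{ij}$ strictly positive because the $u_i$ are distinct) refutes. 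Your version buys a quantitative and arguably more transparent contradiction that reuses the analytic machinery already built for \cref{lemma: full_determinantNonZero}; the paper's version is self-contained given only the monomial form $\det M(p)=\sum\pm p^{\bullet}$ and does not need the rank-one decomposition, the distinctness of the $u_i$, or any limit analysis. Both are complete proofs.
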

\begin{proof}
    Let $\mu_i = \lambda_i^2 = (\lambda_i(p))^2$,
    we define the function
    $$ F(p) = (\mu_1^2 - \mu_2 \mu_3) (\mu_2^2 - \mu_3 \mu_1) (\mu_3^2 - \mu_1 \mu_2).$$
    Note that $F(p)$ is a symmetric polynomial of the eigenvalues $\mu_i$ of
    $M(p)^2$, and in fact 
    \[F(p)= s_{1}^{3}s_{3} - s_{2}^{3},\]
    where $s_i$ are the elementary symmetric polynomials of $\{\mu_1, \mu_2, \mu_3\}$,
    \[s_1 = \mu_1 + \mu_2 + \mu_3, ~~~s_2 = \mu_1\mu_2 + \mu_2\mu_3 + \mu_3 \mu_1,~~~s_3 = \mu_1 \mu_2 \mu_3.\]
    As coefficients of $\det(xI - M(p)^2)$,
    they are polynomials in the entries of $M(p)$.
    
    Suppose for a contradiction that  $t(p) = \frac{1}{2}$ on
    the interval $I_{\epsilon}$, then $\mu_{2}^2 = \mu_{1}\mu_{3}$ in that interval. Then $F(p) = 0$ for all $p \in I_{\epsilon}$. Since $F(p)$ is a polynomial in $p$,
    we have a polynomial identity
    $$s_{1}^{3}s_{3} = s_{2}^{3}.$$
    As $p \rightarrow 1^+$, both $s_1$ and $s_3$ are nonzero, and so
    all three are nonzero polynomials.
    By the unique factorization of polynomials $s_{1} \mid s_{2}$,
    and so $s_{2} = s_{1}f$ for some nonzero polynomial $f(p)$.
    It follows that $f^3 = s_3 = \det (M(p))^{2}$.
    The exact order of any irreducible polynomial $q$
    in $f^3$ is ${\rm ord}_q(f^3) = 3 \cdot \text{ord}_q(f) \equiv 0 \bmod 3$,
    which is also $2 \cdot {\rm ord}_q(\det (M(p)))$. Thus 
    $\det(M(p))$ is a cubic power of a polynomial, $\det(M(p)) = g^{3}$.

    Now
    \[
        g(p)^3
        = \det(M(p))
        = p^{x_{11} + x_{22} + x_{33}} + p^{x_{12} + x_{23} + x_{13}} + p^{x_{12} + x_{23} + x_{13}} - p^{x_{11} + 2x_{23}} - p^{x_{33} + 2x_{12}} - p^{x_{22} + 2x_{13}}.\]
    In the expression for
    $\det(M(p))$ there are three positive and three negative terms. If any cancellation occurs, an equal number of positive and negative terms are
    cancelled, hence the nonzero polynomial  $\det(M(p))$ 
    has either 2 or 4 or 6 terms  with equal number having $+ 1$ and $-1$ coefficients
    after cancellation. So $g(p)$ is not a monomial.
    We may write  $g(p) = c_{1}p^{x_{1}} + \dots + c_{k}p^{x_{k}}$ with $x_{1} > \dots > x_{k}$ and nonzero integers $c_{1}, \dots, c_{k}$, with $k \ge 2$.
    Then  $g(p)^{3}$ has the following terms 
    which have distinct degrees and cannot be cancelled:
        $$c_{1}^{3}p^{3x_{1}},  ~~~3c_{1}^{2}c_{2}p^{2x_{1} + x_{2}}, ~~~3c_{k}^{2}c_{k - 1}p^{2x_{k} + x_{k - 1}}, ~~~c_{k}^{3}p^{3x_{k}}.$$
    In terms of  monomial terms with $\pm 1$ coefficients there are at least
    8 terms. These cannot be matched by at most 6 monomial terms with $\pm 1$ coefficients.
    This contradiction proves the lemma.
\end{proof}

\begin{theorem}\label{theorem: full_positiveHard}
    If $M$ is a $3 \times 3$ full rank, positive real valued, symmetric matrix, then $\PlEVAL(M)$ is $\#$P-hard.
\end{theorem}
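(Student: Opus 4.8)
The plan is to assemble the analytic machinery developed in \cref{lemma: full_tIrrational,lemma: full_tPconvergence,lemma: full_tOneGood,lemma: full_tHalfImpossible}. First I would reduce $\PlEVAL(M)$ to the parametrized family $\PlEVAL(M(p))$: by \cref{lemma: full_thickeningInterpolation} together with the choice of exponents $e_1,\dots,e_d$ from \cref{cor: reductiontoMp-p-range}, the matrix $M(p) = \mathcal{M}(p^{e_1},\dots,p^{e_d})$ satisfies $\PlEVAL(M(p)) \le \PlEVAL(M)$ for every real $p$, and moreover $\det M(p)\ne 0$ throughout an interval $I_\epsilon = (1, e^\epsilon)$. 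Hence it suffices to produce some $r \in I_\epsilon$ (or an infinitesimal behaviour near $p=1^+$) from which $\#$P-hardness follows.

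Next I would invoke \cref{lemma: full_tPconvergence}: either $\PlEVAL(M)$ is already $\#$P-hard, in which case we are done; or the function $t(p)$ (defined via $|\lambda_2(p)| = |\lambda_1(p)|^{t(p)}|\lambda_3(p)|^{1-t(p)}$, well-defined because the Perron theorem gives the strict separation $|\lambda_2(p)| < |\lambda_3(p)|$ on $I_\epsilon$) is identically $1$ or identically $\tfrac12$ on $I_\epsilon$. In the first remaining case, \cref{lemma: full_tOneGood} closes the argument: from $|\lambda_2| = |\lambda_1|$ one passes to $M(p)^2$, whose eigenvalues satisfy $\lambda_1^2 = \lambda_2^2 < \lambda_3^2$, so the lattice condition holds for the two-element set of absolute values and \cref{theorem: full_latticeHardnessExtension} applies. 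The second case cannot occur, by \cref{lemma: full_tHalfImpossible}. Combining these, $\PlEVAL(M)$ is $\#$P-hard in all cases.

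The main obstacle lies entirely upstream of this assembly, in pinning down the behaviour of $t(p)$ as $p \to 1^+$. The delicate ingredient is \cref{lemma: full_determinantNonZero}: the Taylor expansion of $\det M(e^\delta)$ at $\delta = 0$ forces this determinant to have exact order $\delta^2$ or $\delta^3$ and never higher; combined with the fact that $s_2 = \mu_2\bigl(\lambda_1 + \lambda_3(1 + \lambda_1/\lambda_2)\bigr)$ has the same \emph{integer} order in $(p-1)$ as $\lambda_2$, this pins $\lambda_2 = \Theta(p-1)$ and $\lambda_1 = \Theta((p-1)^2)$ in the regime $|\lambda_1|/|\lambda_2|\to 0$ (giving $t\to\tfrac12$), while otherwise $|\lambda_1|/|\lambda_2|$ stays bounded below and $t\to 1$. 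The other nontrivial step is ruling out $t\equiv\tfrac12$ in \cref{lemma: full_tHalfImpossible}: the polynomial identity $s_1^3 s_3 = s_2^3$ would force $\det M(p)$ to be a perfect cube $g(p)^3$ of a non-monomial, and a monomial-counting contradiction—a cube of a non-monomial has at least $8$ terms of extremal degree with $\pm1$ coefficients, but $\det M(p)$ has at most $6$ terms total—finishes it. Once those two facts are in hand, the present theorem is just the bookkeeping described above.
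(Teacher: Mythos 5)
Your proposal is correct and follows essentially the same route as the paper: define $M(p)$ on $I_\epsilon$ via \cref{cor: reductiontoMp-p-range}, apply \cref{lemma: full_tPconvergence} to reduce to the cases $t\equiv 1$ or $t\equiv\tfrac12$, dispose of the first by \cref{lemma: full_tOneGood} and exclude the second by \cref{lemma: full_tHalfImpossible}. The paper's proof is exactly this three-way case analysis, so no further comment is needed.
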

\begin{proof}
    We define $M(p)$ for $p \in I_{\epsilon}$ for some $\epsilon > 0$, as 
    was done after \cref{cor: reductiontoMp-p-range}. From \cref{lemma: full_tHalfImpossible}, $t(p)$ is not  a constant $\frac{1}{2}$ for all $p \in I_{\epsilon}$. On the other hand, if $t(p) = 1$ for all $p \in I_{\epsilon}$,  from \cref{lemma: full_tOneGood}, we know that $\PlEVAL(M)$ is $\#$P-hard. Finally, if $t(p)$ is not constant  $1$ or $\frac{1}{2}$
    for all $p \in I_{\epsilon}$, then  $\PlEVAL(M)$ is $\#$P-hard from \cref{lemma: full_tPconvergence}.
\end{proof}
\section{Dichotomy for \texorpdfstring{$3 \times 3$}{3 x 3} matrices}\label{sec: full_finalDichotomy}

From \cref{sec: full_positiveHardness}, if  $M$ is $3 \times 3$ full rank, positive real valued, symmetric matrix, then $\PlEVAL(M)$ is $\#$P-hard. We will now use \cref{theorem: full_positiveHard} to establish a dichotomy for all $3 \times 3$ matrices.

\begin{lemma}\label{lemma: full_nonNegativeHardness}
    If $M$ is a $3 \times 3$ full rank, non-negative real valued, symmetric matrix that is irreducible, then $\PlEVAL(M)$ is $\#$P-hard.
\end{lemma}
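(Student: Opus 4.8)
The plan is to reduce the irreducible non-negative case to the positive case already handled in \cref{theorem: full_positiveHard}, by applying an appropriate power of $M$ together with the transformations available in the preceding sections. Since $M$ is $3 \times 3$, non-negative, and irreducible, the underlying weighted graph $H_M$ is connected on $3$ vertices. The first step is to split into cases according to whether $H_M$ is bipartite or not. If $H_M$ is non-bipartite, then as in the proof of \cref{theorem: full_qTimesqNonBipLatticeNonNegative} it contains an odd cycle and (via $2$-cycles) therefore walks of every sufficiently large length between any pair of vertices, so $M^n$ is a $3 \times 3$ full rank, strictly positive, symmetric matrix for all large $n$. Choosing such an $n$ (with $\det(M^n) = (\det M)^n \ne 0$ since $M$ has full rank), \cref{theorem: full_positiveHard} gives that $\PlEVAL(M^n) = \PlEVAL(S_n M)$ is $\#$P-hard, and since $\PlEVAL(S_n M) \le \PlEVAL(M)$ we conclude $\PlEVAL(M)$ is $\#$P-hard.

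The remaining case is when $H_M$ is bipartite. But a full rank $3 \times 3$ bipartite matrix is impossible: a bipartite symmetric matrix on an odd number of vertices has the form (after permutation) $\left(\begin{smallmatrix}\mathbf{0} & A \\ A^{\tt T} & \mathbf{0}\end{smallmatrix}\right)$ with $A$ not square, forcing a rank deficiency (concretely, for $q=3$ one part has a single vertex and that row/column, restricted to its own part, is zero, which together with symmetry makes the matrix singular). Hence the bipartite subcase does not arise under the hypothesis that $M$ has full rank, and the non-bipartite argument above covers everything.

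I would also double-check the degenerate possibility that $H_M$ is non-bipartite only because of self-loops: if some diagonal entry $m_{ii} > 0$ then $H_M$ has a $1$-cycle, which already (with connectedness) produces walks of all large lengths, so $M^n > 0$ for large $n$ exactly as before; and if all diagonal entries vanish then $H_M$ is a connected loopless graph on $3$ vertices, which is either the path $P_3$ (bipartite, hence full rank impossible as above) or the triangle $K_3$ (non-bipartite, handled above). So in every surviving configuration some power $M^n$ is strictly positive with full rank.

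The main obstacle is simply making sure the reduction to the strictly positive case is airtight: one must verify that for irreducible non-bipartite $M$ there is genuinely an $n$ with $M^n$ having \emph{all} entries positive (not merely nonzero after cancellation — but since $M$ is entrywise non-negative there is no cancellation, so positivity of an entry of $M^n$ is equivalent to the existence of a walk of length $n$, which follows from connectedness plus $\gcd$ of cycle lengths equal to $1$), and that full rank is preserved under taking powers (immediate from multiplicativity of the determinant). Once these routine points are checked, the lemma follows directly from \cref{theorem: full_positiveHard} via the stretching reduction $\PlEVAL(S_n M) \le \PlEVAL(M)$.
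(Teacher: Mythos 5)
Your proposal is correct and matches the paper's proof: the paper likewise rules out the bipartite case by noting a $3\times 3$ bipartite matrix cannot have full rank (it would be twinned), and then reruns the walk argument of \cref{theorem: full_qTimesqNonBipLatticeNonNegative} to get a strictly positive full-rank power $M^n$, concluding via \cref{theorem: full_positiveHard} and $\PlEVAL(S_nM)\le\PlEVAL(M)$. Your extra checks (self-loops as odd cycles, multiplicativity of the determinant) are correct routine verifications of the same argument.
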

\begin{proof}
    As we saw in \cref{sec: full_twinned}, a $3 \times 3$ matrix can only be bipartite if it is a twinned matrix. Since $M$ is full rank, it is therefore non-bipartite.
    The proof of this lemma is then the same as that of \cref{theorem: full_qTimesqNonBipLatticeNonNegative}, and is a consequence of \cref{theorem: full_positiveHard}.
\end{proof}

\begin{lemma}\label{lemma: full_rankTwoMixed}
	If $M$ is a $3 \times 3$ rank two, real valued, symmetric matrix that is irreducible and not twinned, then $\PlEVAL(M)$ is $\#$P-hard.
\end{lemma}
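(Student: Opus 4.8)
The plan is to reduce the rank-two irreducible non-twinned case to situations already handled — either to a full-rank $3\times 3$ positive matrix (via \cref{theorem: full_positiveHard}), or to a hard $2\times 2$ Boolean-domain matrix (via \cref{theorem: full_domain2Hardness}), or to the reducible/twinned analysis of \cref{sec: full_booleanDomain}. The starting observation is that a rank-two $3\times 3$ symmetric real matrix $M$ has a null vector, so up to congruence $M = C^{\tt T} M' C$ for some $2\times 3$ matrix $C$ of rank $2$ and a $2\times 2$ nonsingular symmetric $M'$; equivalently, one row (column) of $M$ is a linear combination of the other two. The twinned matrices are exactly those where this dependency is degenerate in the sense that one row is an actual scalar multiple of another (equation~(\ref{eqn: full_twinnedMatrix})), and reducible matrices are the ones with a zero off-diagonal block (equation~(\ref{eqn: full_disconnectedMatrix})). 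So "irreducible and not twinned" should force the dependency to be genuinely three-term.

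First I would set up the structure theorem: write the third row of $M$ as $\mathbf r_3 = \alpha \mathbf r_1 + \beta \mathbf r_2$ with, by symmetry, the same relation on columns, and since $M$ is not twinned we may assume both $\alpha,\beta \neq 0$ (if one is zero, $M$ is twinned or has a repeated row). This gives $M = C^{\tt T} M' C$ with $C = \left(\begin{smallmatrix} 1 & 0 & \alpha \\ 0 & 1 & \beta \end{smallmatrix}\right)$ and $M'$ the top-left $2\times 2$ block, which is nonsingular because $M$ has rank two. Then a stretching or thickening operation transforms $M$ into $M^k = C^{\tt T} M'' C$ with $M'' = M' (\operatorname{diag}\text{ of } C C^{\tt T})^{?} M'$ — more precisely $S_k$ turns $M$ into $M^k$ and one computes $M^k = C^{\tt T}\big(M'(CC^{\tt T})\big)^{k-1} M' C$, still of the congruence form with an updated nonsingular $2\times 2$ middle matrix. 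The point of applying such operations is to move the diagonal entries and the off-diagonal entry of the effective $2\times 2$ matrix into "generic position" so that \cref{theorem: full_domain2Hardness} applies — i.e., avoid the four tractable exceptions $xz=y^2$, $y=0$, $x=z$, and ($xz=-y^2$ and $x=-z$) — and simultaneously keep irreducibility.

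The key steps, in order: (1) establish the congruence $M = C^{\tt T} M' C$ with $\alpha\beta \neq 0$ and $\det M' \neq 0$; (2) show that some word in thickening/stretching applied to $M$ produces a matrix $N = C^{\tt T} \widetilde M' C$ (for possibly a different such $C$) whose associated $2\times 2$ data $\widetilde M' = \left(\begin{smallmatrix} x' & y' \\ y' & z' \end{smallmatrix}\right)$ satisfies $y'\neq 0$, $x'z'\neq (y')^2$, $x'\neq z'$, and not ($x'z'=-(y')^2$ and $x'=-z'$); (3) check that $N$ remains irreducible so that the relevant $2\times 2$ reduction lemmas (analogues of \cref{lemma: full_trivialCase}, or direct application of \cref{theorem: full_domain2Hardness} after a further reduction) apply, giving \#P-hardness of $\PlEVAL(N)$, hence of $\PlEVAL(M)$. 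The case analysis here will mirror Cases 1–4 in the proof of \cref{lemma: full_rankTwoMultipleNonOneC}: when the naive $2\times 2$ matrix is already hard we are done; otherwise powers $T_2$, $S_2$, $S_2 T_2$, $T_3$ generate new diagonal/off-diagonal combinations that escape the tractable locus, using that $\det \neq 0$ is preserved and that the Vandermonde-type identities among the entries needed for tractability cannot all persist under these operations unless $M$ was twinned or reducible to begin with.

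The main obstacle I expect is step (2): ruling out the possibility that \emph{every} matrix in the thickening/stretching orbit of $M$ lands in one of the four tractable exceptional loci of \cref{theorem: full_domain2Hardness} simultaneously. This is the same phenomenon that forces the multi-case split in \cref{lemma: full_rankTwoMultipleNonOneC} — e.g. the locus $x=z$ (the Ising model) is stable under thickening, so one must pass to stretching or to a composition to break it, and one must verify that breaking it does not re-enter another exceptional locus. One subtlety specific to the not-twinned, rank-two setting is that the off-diagonal entry $y'$ of the effective $2\times 2$ matrix could vanish after an operation even though $M$ is not twinned; I would handle this by noting that $y' = 0$ for the effective matrix of $S_k M$ or $T_k M$ translates into a polynomial condition on the three-term dependency coefficients $(\alpha,\beta)$ and the entries of $M'$ that, if it held for all $k$, would force $M$ twinned — contradiction. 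Carrying this bookkeeping through all the cases, while tedious, is routine linear algebra once the congruence normal form is fixed; the genuinely delicate point is organizing the cases so that \emph{some} operation is guaranteed to work, which is exactly the role the four operations $T_2, T_3, S_2, S_2T_2$ play above.
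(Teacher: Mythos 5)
There is a genuine gap in step (3) of your plan. For a \emph{twinned} matrix the congruence $M = C^{\tt T} M' C$ with $C = \left( \begin{smallmatrix} 1 & c & 0\\ 0 & 0 & 1 \end{smallmatrix} \right)$ yields an actual reduction to a domain-two problem because every column of $C$ is a scalar multiple of a standard basis vector: the two proportional states can be merged into one state carrying the degree-dependent weight $1+c^{\deg(v)}$, which is exactly what $\PlEVAL(M',\mathcal{D})$ and \cref{theorem: full_degreeBoundedHardness} exploit. In your setting $C = \left( \begin{smallmatrix} 1 & 0 & \alpha\\ 0 & 1 & \beta \end{smallmatrix} \right)$ with $\alpha\beta\neq 0$, and the third column is a genuine combination $\alpha\mathbf{e}_1+\beta\mathbf{e}_2$. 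A vertex in state $3$ then contributes $\prod_{e}\bigl(\alpha\, m_{1,\sigma(\cdot)}+\beta\, m_{2,\sigma(\cdot)}\bigr)$, which does not factor as a scalar (depending only on the degree) times a state-$1$ or state-$2$ contribution; equivalently, the induced vertex signature $\sum_i \mathbf{c}_i^{\otimes \deg(v)}$ is a non-diagonal symmetric tensor, not a vertex weight. So exhibiting $N=C^{\tt T}\widetilde{M}'C$ with $\widetilde{M}'$ outside the tractable loci of \cref{theorem: full_domain2Hardness} gives no reduction $\PlEVAL(\widetilde{M}')\le\PlEVAL(N)$, and there is no ``analogue of \cref{lemma: full_trivialCase}'' to invoke. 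Your careful bookkeeping in step (2) about escaping the four exceptional loci is therefore aimed at a target that does not yield hardness.

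The paper's proof goes in the opposite direction and is much shorter: it uses the observation that thickening can \emph{raise} the rank. With $M=N^{\tt T}\left(\begin{smallmatrix}x&y\\y&z\end{smallmatrix}\right)N$ and $xz\neq y^2$ (rank exactly two), a direct computation gives $\det(T_2M)=2\alpha^2\beta^2(xz-y^2)^3\neq 0$, so $T_2M$ is a full-rank, non-negative, irreducible $3\times 3$ matrix, and \cref{lemma: full_nonNegativeHardness} (resting on \cref{theorem: full_positiveHard}) applies immediately. You mention the full-rank route in passing in your opening sentence but never pursue it; that determinant identity is the missing idea, and without it (or some substitute for the broken step (3)) the argument does not close.
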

\begin{proof}
    If $M$ is a rank two matrix that is not twinned, it must be of the form
    $$M = N^{\tt T} \begin{pmatrix}
        x & y\\
        y & z
    \end{pmatrix} N, ~~~ \mbox{where}~~~N = \begin{pmatrix} 
        1 & 0 & \alpha \\
        0 & 1 & \beta
    \end{pmatrix},$$
	for some $\alpha \neq 0$ and $\beta \neq 0$.
	
    If $xz = y^{2}$, 
    then $M$ has rank at most one,  contrary to assumption.
    So, $xz \neq y^{2}$.
    Now, we consider $T_{2}M$. We have
	$$T_{2}M = \begin{pmatrix}
		x^{2} & y^{2} & \alpha^{2} x^{2} + \beta^{2} y^{2} + 2\alpha\beta xy\\
		y^{2} & z^{2} & \alpha^{2} y^{2} + \beta^{2} z^{2} + 2\alpha\beta yz\\
		\alpha^{2} x^{2} + \beta^{2} y^{2} + 2\alpha\beta xy & \alpha^{2} y^{2} + \beta^{2} z^{2} + 2\alpha\beta yz & \left(\alpha^{2}x + 2\alpha\beta y + \beta^{2}z\right)^{2}\\
	\end{pmatrix}.$$
    Miraculously,
    $$\det(T_{2}M) = 2\alpha^{2}\beta^{2}\left(xz - y^{2}\right)^{3}. $$
    Thus, $T_{2}M$ is a full rank matrix. Since $M$ is irreducible, it also follows that $T_{2}M$ is irreducible. Moreover, since $T_{2}M$ is non-negative valued, we see from \cref{lemma: full_nonNegativeHardness} that $\PlEVAL(T_{2}M)$ is $\#$P-hard. The lemma follows
    from  $\PlEVAL(T_{2}M) \leq \PlEVAL(M)$.
\end{proof}

\begin{lemma}\label{lemma: full_finalHardness}
	If $M$ is a $3 \times 3$ full rank, real valued, symmetric matrix that is irreducible, then $\PlEVAL(M)$ is $\#$P-hard.
\end{lemma}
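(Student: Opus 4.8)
The plan is to reduce $\PlEVAL(M)$, via thickening and stretching, to a hardness result already established --- principally \cref{lemma: full_nonNegativeHardness}, that a $3\times 3$ full rank, non-negative, irreducible symmetric matrix gives a $\#$P-hard problem. Since $M$ has rank $3$, it is neither twinned nor bipartite (both force rank at most $2$ for $3\times 3$ symmetric matrices), so none of the degenerate cases of \cref{sec: full_twinned} applies. If $M$ is non-negative we are done by \cref{lemma: full_nonNegativeHardness}. Otherwise put $N:=T_2M$, whose entries are $(m_{ij}^2)$: it is non-negative, satisfies $\PlEVAL(N)\le\PlEVAL(M)$, and has exactly the same zero pattern as $M$, hence is irreducible and not of the bipartite form of \cref{definition: full_bipartiteMatrices}. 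If $\det N\neq 0$ then $N$ is $3\times 3$, full rank, non-negative and irreducible, so $\PlEVAL(N)$, and therefore $\PlEVAL(M)$, is $\#$P-hard by \cref{lemma: full_nonNegativeHardness}.

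It remains to handle $\det N=0$, i.e.\ ${\rm rank}\,N\in\{1,2\}$. Suppose ${\rm rank}\,N=2$. If $N$ is not twinned, then $N$ is a rank-two, real, symmetric, irreducible, non-twinned $3\times 3$ matrix, so $\PlEVAL(N)$ is $\#$P-hard by \cref{lemma: full_rankTwoMixed}. If $N$ is twinned, it has the form \cref{eqn: full_twinnedMatrix}, and none of the five tractable exceptions of \cref{theorem: full_rankTwoMultipleC} can occur: the cases $y=0$, ``$c=0\ \&\ x=z$'', and ``$c=0\ \&\ x=-z\ \&\ xz=-y^2$'' each make $N$ reducible, the case $x=z=0$ makes $N$ bipartite, and $xz=y^2$ forces ${\rm rank}\,N\le 1$ --- all contradicting the properties of $N$ inherited from $M$. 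Hence $\PlEVAL(N)$, and so $\PlEVAL(M)$, is $\#$P-hard.

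The crux is ${\rm rank}\,N=1$, where thickening has collapsed $M$ to a useless matrix. Here $N=\mathbf v\mathbf v^{\tt T}$ for a positive vector $\mathbf v$ (positivity from irreducibility of $N$), so $m_{ij}=\pm\sqrt{v_iv_j}$ and $M=DED$ with $D={\rm diag}(\sqrt{v_1},\sqrt{v_2},\sqrt{v_3})$ and $E=(\epsilon_{ij})$ a symmetric $\pm1$ matrix; from $\det M=v_1v_2v_3\det E\neq 0$ we get $\det E\neq 0$, and a $3\times 3$ symmetric $\pm1$ matrix with $\det E\neq 0$ has $\det E=\pm4$, leaving only finitely many patterns up to permutation and an overall sign. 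I now \emph{stretch} once: $M^2=DWD$, where $W:=ED^2E$ is symmetric with the \emph{constant positive} diagonal $W_{ii}=v_1+v_2+v_3=:V$ and $\det W=(\det E)^2v_1v_2v_3\neq 0$. Thickening again, $T_2(M^2)=D^2(T_2W)D^2$ is non-negative, has positive diagonal (hence is not bipartite), has the same zero pattern as $T_2W$, and obeys $\PlEVAL(T_2M^2)\le\PlEVAL(M^2)\le\PlEVAL(M)$. For each of the finitely many patterns $E$ one checks, using $\det(T_2W)=V^6-V^2(w_{12}^4+w_{13}^4+w_{23}^4)+2w_{12}^2w_{13}^2w_{23}^2$ (where the off-diagonal entries $w_{ij}$ of $W$ are linear in $\mathbf v$), that $\det(T_2W)\neq 0$ for all $\mathbf v>\mathbf 0$ and that no two of $w_{12},w_{13},w_{23}$ vanish simultaneously, so that $T_2W$ --- hence $T_2(M^2)$ --- is full rank and irreducible; then $\PlEVAL(M)$ is $\#$P-hard by \cref{lemma: full_nonNegativeHardness}. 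As an illustration, for $M=\left(\begin{smallmatrix}1&1&1\\1&1&-1\\1&-1&1\end{smallmatrix}\right)$ one gets $M^2=\left(\begin{smallmatrix}3&1&1\\1&3&-1\\1&-1&3\end{smallmatrix}\right)$ and $T_2(M^2)=\left(\begin{smallmatrix}9&1&1\\1&9&1\\1&1&9\end{smallmatrix}\right)$, which is positive and full rank.

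The main obstacle is exactly this rank-$1$ sub-case: plain thickening gives nothing, so one must first stretch to $M^2$ and then control the sign cancellations that arise in $M^2$. The substance of the argument is verifying that the composite operation ``$T_2$ after $S_2$'' (applied once, or iterated, or checked separately for each of the few admissible $\pm1$ patterns $E$) never again produces a rank-deficient non-negative matrix, so that \cref{lemma: full_nonNegativeHardness} can finally be invoked.
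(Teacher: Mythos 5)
Your treatment of the cases where $T_2M$ has full rank or rank two is exactly the paper's argument (reduce to \cref{lemma: full_nonNegativeHardness}, \cref{lemma: full_rankTwoMixed}, or \cref{theorem: full_rankTwoMultipleC} after ruling out the tractable exceptions), and it is correct. Where you genuinely diverge is the rank-one sub-case. The paper extracts the sign matrix $\mathcal{M}(\mathbf{1}) \in \{\pm 1\}^{3\times 3}$ via the generating-set interpolation of \cref{lemma: full_thickeningInterpolation}, observes that no two of its rows are proportional (else $M$ would be singular), squares it to get a matrix with diagonal $3$ and off-diagonal entries $\pm 1$, and invokes \cref{lemma: full_thickeningBasic}. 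You instead write $M = DED$ with $E$ a nonsingular symmetric $\pm 1$ matrix, pass to $T_2(M^2)$, and aim back at \cref{lemma: full_nonNegativeHardness}. Your route avoids \cref{lemma: full_thickeningBasic} entirely in this sub-case, which is a legitimate alternative.

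However, the pivotal step of your route is left unjustified. You assert that ``for each of the finitely many patterns $E$ one checks \ldots that $\det(T_2W) \neq 0$ for all $\mathbf{v} > \mathbf{0}$.'' This is not a finite check: for each fixed $E$, $\det(T_2W)$ is a degree-$6$ homogeneous polynomial in $(v_1,v_2,v_3)$, and you must prove it has no zeros on the open positive orthant. Writing $x = w_{12}^2/V^2$ etc., the quantity $1 - x^2 - y^2 - z^2 + 2xyz$ \emph{can} be negative for $x,y,z \in [0,1)$ (e.g.\ $x=y=0.9$, $z=0.5$), so the claim does not follow from $|w_{ij}| < V$ alone and genuinely needs an argument. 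The claim is in fact true, and there is a clean way to see it that bypasses the pattern analysis altogether: $T_2(M^2) = M^2 \circ M^2$ is the Hadamard square of the positive definite matrix $M^2$ (positive definite because $M$ is real symmetric of full rank), and the Schur product theorem gives $M^2 \circ M^2 \succ 0$, hence full rank. Your irreducibility claim (no two of $w_{12}, w_{13}, w_{23}$ vanish simultaneously) does reduce to a finite analysis of pairs of non-constant, non-proportional sign vectors in $\{\pm1\}^3$ and is correct. With the Schur-product justification supplied, your proof is complete; as written, that one step is a gap.
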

\begin{proof}
    We consider the matrix $T_{2}M$. We note that this is a non-negative valued matrix. Since $M$ is irreducible, neither is $T_{2}M$. If $T_{2}M$ is a full rank matrix, the hardness of $\PlEVAL(T_{2}M)$, and therefore, the hardness of $\PlEVAL(M)$ follows from \cref{lemma: full_nonNegativeHardness}.
    
     Next, suppose $T_{2}M$ has rank two. If $T_{2}M$ is not twinned, then the hardness of $\PlEVAL(T_{2}M)$, and therefore, the hardness of $\PlEVAL(M)$ follows from \cref{lemma: full_rankTwoMixed}. Therefore, let us consider the case where $T_{2}M$ is twinned. From \cref{theorem: full_rankTwoMultipleC}, we can see that $\PlEVAL(T_{2}M)$ can be polynomially tractable only if $T_{2}M$ is rank one, or if $T_{2}M$ is reducible, or if $T_{2}M$ is of the form
    $$T_{2}M = \begin{pmatrix}
        0 & 0 & y\\
        0 & 0 & z\\
        y & z & 0\\
    \end{pmatrix}$$
    upto permutations, in which case, it must be the case that $M$ is also of the form above, and must be a rank two matrix. Since none of these are true, it follows from \cref{theorem: full_rankTwoMultipleC} that $\PlEVAL(T_{2}M)$ is $\#$P-hard, and therefore, so is $\PlEVAL(M)$.
    
    Finally, suppose $T_{2}M$ has rank one.
    If $T_{2}M$ has any zero entries, it can be easily checked  that $T_{2}M$, and therefore $M$ is reducible. So,  $T_{2}M$ has no zero entries. In this case, we consider $\mathcal{M}(\mathbf{1})$. The entries of this matrix may be denoted by $\sigma_{ij} \in \{+1, -1\}$. Since $T_{2}M$ is a rank one matrix, all three rows are multiples of each other.
    If we further have that two rows of $\mathcal{M}(\mathbf{1})$ are multiples of each other, then  $M$ itself is not full rank. Therefore, no two rows of $\mathcal{M}(\mathbf{1})$ are multiples of each other. Now, we consider $S_{2}\mathcal{M}(\mathbf{1}) = \mathcal{M}(\mathbf{1})^{2}$. Clearly,
    $\left(\mathcal{M}(\mathbf{1})^{2}\right)_{ii} = \sigma_{i1}^{2} + \sigma_{i2}^{2} + \sigma_{i3}^{2} = 3$ for $i \in 3$. Moreover, if $i \neq j$, since no two rows are multiples of each other, we see that $\left(\mathcal{M}(\mathbf{1})^{2}\right)_{ij} = \sigma_{i1}\sigma_{j1} + \sigma_{i2}\sigma_{j2} + \sigma_{i3}\sigma_{j3} \in \{+1, -1\}$,
    as exactly one pair cancels. So, the set $\{3\}$ is
    a generating set for the entries of $\mathcal{M}(\mathbf{1})^{2}$, and the hardness of $\PlEVAL(\mathcal{M}(\mathbf{1})^{2})$ follows from \cref{lemma: full_thickeningBasic}. Since $\PlEVAL(\mathcal{M}(\mathbf{1})^{2}) \leq \PlEVAL(\mathcal{M}(\mathbf{1})) \leq \PlEVAL(M)$, we see that $\PlEVAL(M)$ is also $\#$P-hard.
\end{proof}

Combining all our results, we have our final dichotomy:

\begin{theorem}\label{theorem: full_mainTheorem}
	If $M$ is a $3 \times 3$ real valued, symmetric matrix, then $\PlEVAL(M)$ is $\#$P-hard, unless $M$ is of one of the following forms upto a permutation
	of rows and columns and in which case, $\PlEVAL(M)$ is tractable in polynomial time:
	\begin{enumerate}
		\item
		$$M = \begin{pmatrix}
			x^{2} & xy & xz\\
			xy & y^{2} & yz\\
			xz & yz & z^{2}\\
		\end{pmatrix} = {\bf u} {\bf u}^{\tt t}.$$
		
		\item 
		$$M = \begin{pmatrix}
			x & y & 0\\
			y & z & 0\\
			0 & 0 & t\\
		\end{pmatrix}$$
		such that
    		\[   \mbox{(1)~~} xz = y^2,~~~~
        	\mbox{(2)~~} y = 0,~~~~
        	\mbox{(3)~~} x = z, ~~~~\mbox{or}~~~~
        	\mbox{(4)~~} xz = -y^2 ~~\&~~ x = -z. \]
	
		\item 
		$$M = \begin{pmatrix}
			0 & 0 & x\\
			0 & 0 & y\\
			x & y & 0\\
		\end{pmatrix}.$$
	\end{enumerate}
\end{theorem}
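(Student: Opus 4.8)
The plan is to prove the dichotomy by a case analysis, first on whether $M$ is reducible, and then, in the irreducible case, on the rank of $M$; in each branch I would invoke one of the hardness results already established and check that the exceptional (tractable) matrices it produces coincide, up to a permutation of rows and columns, with one of the three displayed forms. The analytic core is entirely contained in \cref{lemma: full_finalHardness} (full rank irreducible matrices), so what remains for the final theorem is essentially a careful assembly, using also \cref{lemma: full_rankTwoMixed}, \cref{theorem: full_rankTwoMultipleC}, the $2\times 2$ dichotomy \cref{theorem: full_domain2Hardness}, and \cref{lemma: full_trivialCase}.

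First I would dispose of the reducible case. If $M$ is reducible then, after a permutation of rows and columns, it has the form $(\ref{eqn: full_disconnectedMatrix})$, which is exactly form~(2) with $t=w$ the $1\times 1$ block. By \cref{lemma: full_trivialCase}, $\PlEVAL(M)\equiv\PlEVAL(M')$ with $M'=\left(\begin{smallmatrix}x&y\\y&z\end{smallmatrix}\right)$, and \cref{theorem: full_domain2Hardness} says this is $\#$P-hard unless $M'$ satisfies one of the conditions (1)--(4) listed there, which are precisely the conditions attached to form~(2); the $1\times 1$ block contributes only the factor $t^{|E|}$ and imposes no constraint. Conversely every matrix of form~(2) meeting one of (1)--(4) is tractable by these same two results. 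This settles all reducible $M$.

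Next, suppose $M$ is irreducible. If $\operatorname{rank}M\le 1$ then $M=\pm\mathbf u\mathbf u^{\tt T}$ for some vector $\mathbf u$; the $+$ case is exactly form~(1), and since $\PlEVAL(-N)\equiv\PlEVAL(N)$ and $Z_{\mathbf u\mathbf u^{\tt T}}(G)=\prod_{v\in V}\bigl(u_1^{\deg v}+u_2^{\deg v}+u_3^{\deg v}\bigr)$ is polynomial-time computable, both cases are tractable. If $\operatorname{rank}M=3$ then, being full rank, $M$ cannot be bipartite, and \cref{lemma: full_finalHardness} gives $\#$P-hardness. If $\operatorname{rank}M=2$: when $M$ is not twinned, \cref{lemma: full_rankTwoMixed} gives $\#$P-hardness; when $M$ is twinned, \cref{theorem: full_rankTwoMultipleC} applies, and I would check that for an \emph{irreducible} twinned matrix of \emph{rank} $2$ the only exception among its five cases that can occur is ``$x=z=0$'' (the case $xz=y^2$ would force rank $\le 1$, and $c=0$ would force reducibility), and that ``$x=z=0$'' is exactly form~(3) after relabelling the twinned parameters, which is tractable by \cref{theorem: full_rankTwoMultipleC}. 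Hence every irreducible $M$ is $\#$P-hard unless it is of form~(1) or form~(3), completing the dichotomy together with the reducible case.

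I do not expect a serious obstacle: the theorem is a bookkeeping assembly of the earlier results, and the only point needing genuine care is verifying that the split ``reducible'' / ``irreducible of rank $1,2,3$'' is exhaustive and that in each subcase the list of exceptional matrices produced by the invoked result matches, up to permutation and the harmless global sign on rank-$1$ matrices, one of forms~(1)--(3). Concretely this reduces to the observations that a full rank $3\times 3$ matrix is never bipartite, and that an irreducible bipartite $3\times 3$ matrix is precisely a twinned matrix with vanishing diagonal block, hence always of form~(3) and always tractable.
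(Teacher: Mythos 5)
Your proposal is correct and follows essentially the same route as the paper's proof: a case split on reducibility and rank, discharging each branch via \cref{lemma: full_trivialCase} with \cref{theorem: full_domain2Hardness}, \cref{theorem: full_rankTwoMultipleC}, \cref{lemma: full_rankTwoMixed}, and \cref{lemma: full_finalHardness}, and matching the surviving exceptions to forms (1)--(3). If anything you are slightly more careful than the paper on the rank-one case, where you explicitly account for $M=-\mathbf{u}\mathbf{u}^{\tt T}$ via $\PlEVAL(-N)\equiv\PlEVAL(N)$, a point the paper's proof glosses over.
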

\begin{proof}
    The listed forms are all polynomial-time tractable.
    If $M$ has rank $\le 1$, then $M$ has form $1$. Now we assume
    $M$ has rank $\ge 2$.
    If $M$ is reducible, by \cref{lemma: full_trivialCase}
    and \cref{theorem: full_domain2Hardness}, $\PlEVAL(M)$ is $\#$P-hard unless $M$ has form $2$. Below, we assume $M$ is irreducible.
    If $M$ is a rank two twinned matrix, by \cref{theorem: full_rankTwoMultipleC}  $\PlEVAL(M)$ is $\#$P-hard unless it has form  $3$ (i.e. $M$ is bipartite). If $M$ has rank two and not twinned, by \cref{lemma: full_rankTwoMixed}, $\PlEVAL(M)$ is $\#$P-hard. So now we assume $M$ is a full rank matrix. Then, by \cref{lemma: full_finalHardness}  $\PlEVAL(M)$ is $\#$P-hard.
\end{proof}

\printbibliography

\appendix

\section{Appendix A}\label{sec: full_appendixA}

We will  briefly sketch a proof of \cref{theorem: full_degreeBoundedHardness}. 
The main ingredients of the theorem are the 
edge gadget $\mathcal{P}_{n, p}$, and the 
vertex gadget $\mathcal{R}_{d, n, p}$. 
The edge gadget $\mathcal{P}_{n, p}$ is just 
$S_{2}T_{p}S_{n}(e)$, where $e$ is an edge. 
The vertex gadget $\mathcal{R}_{d, n, p}$ is constructed 
by replacing each edge on a simple cycle with $d$ vertices 
with the edge gadget $\mathcal{P}_{n, p}$, 
and adding a dangling edge to each of the $d$ vertices 
$F_{1}, \dots, F_{d}$ on the original $d$-cycle.

\begin{figure}[ht]
	\centering
	\scalebox{0.8}{\input{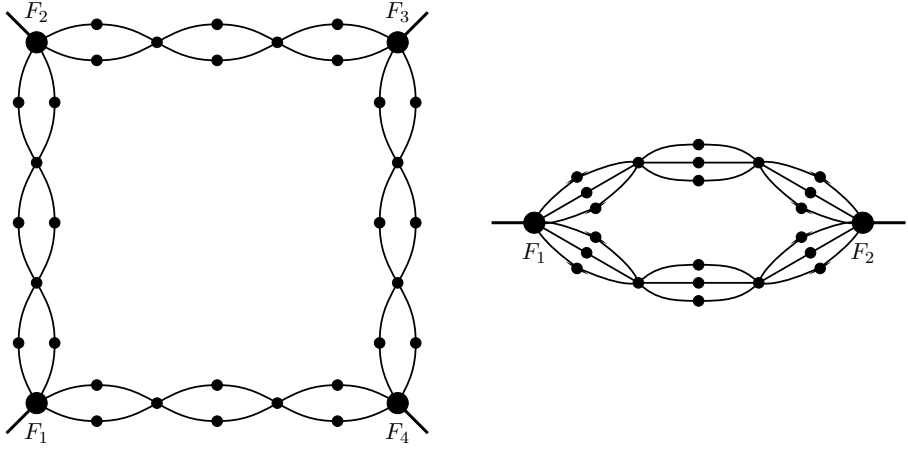}}
	\caption{The gadgets $\mathcal{R}_{4, 3, 2}$ and $\mathcal{R}_{2, 3, 3}$}
	\label{fig:degreeGadget}
\end{figure}

Now, consider a planar graph $G = (V, E)$. 
For any $u \in V$, consider the edges incident on $u$. 
Since the graph $G$ is planar, 
we we may arbitrarily number one of these edges as $1$, 
and then go clockwise through all the other edges incident on $u$
starting from the edge labelled $1$, 
and label these edges $2, 3, \dots, \deg(u)$. 
So, the labelling function 
$\ell_{u}: \{\{u, v\} \in E: v \in V\} \rightarrow [\deg(u)]$ 
is well defined. 
Now, we construct a graph $G_{n, p}$ as follows: 
We replace each vertex $u \in V$ 
with the gadget $\mathcal{R}_{\deg(u), n, p}$ 
(which we can label as $\mathcal{R}_{\deg(u), n, p}(u)$). 
We know that there are $d$ dangling edges on this gadget, 
incident on each of the vertices $F_{1}(u), \dots, F_{\deg(u)}(u)$. 
Now, we consider each edge $\{u, v\} \in E$, 
and for each such edge, we connect together, 
the dangling edges in the graph $G_{n, p}$, 
between the vertices $F_{\ell_{u}(\{u, v\})}(u)$ and $F_{\ell_{v}(\{u, v\})}(v)$.
We note that by this construction, 
the graph $G_{n, p}$ is planar, for all $n, p$.
We will construct $G_{n, p}$ for various choices of $n, p$, 
and then interpolate to prove \cref{theorem: full_degreeBoundedHardness}.

First, an appropriate $p$ is chosen such that 
$B = T_{p}(M'D^{[2]}M')$ is non-degenerate. 
Such a $p$ exists as a result of the following technical lemma, 
whose proof can be found in \cite{govorov2020dichotomy}.

\begin{lemma}\label{lemma: full_govorovTechnical}
     Let $A$ and $D$ be $n \times n$ matrices, 
     where $A$ is real symmetric with all columns nonzero 
     and pairwise linearly independent, 
     and $D$ is positive diagonal. 
     Then for all sufficiently large positive integers $p$, 
     the matrix $B = T_{p}(ADA)$ is non-degenerate.
\end{lemma}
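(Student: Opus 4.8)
The plan is to show that the identity permutation contributes a strictly dominant term in the Leibniz expansion of $\det B$ once the entries of $M := ADA$ are raised to a large power $p$, so that the determinant cannot vanish by cancellation.

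First I would recast $M = ADA$ as a Gram matrix. Writing $D = \mathrm{diag}(d_1,\dots,d_n)$ with all $d_k>0$ and using $A^{\tt T} = A$, we get
\[ M_{ij} = (ADA)_{ij} = \sum_{k=1}^n d_k A_{ik} A_{kj} = \sum_{k=1}^n d_k A_{ik} A_{jk} = \langle b_i, b_j\rangle, \]
where $b_i := (\sqrt{d_k}\,A_{ik})_{k\in[n]}\in\mathbb{R}^n$ is a nonzero scaling of the $i$-th column (equivalently row) of $A$ and $\langle\cdot,\cdot\rangle$ is the standard inner product. Since $\sqrt{D}$ is invertible and the columns of $A$ are nonzero and pairwise linearly independent, the vectors $b_1,\dots,b_n$ are nonzero and pairwise linearly independent too. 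Hence $M_{ii} = \|b_i\|^2 > 0$, and the Cauchy--Schwarz inequality is strict because $b_i$ and $b_j$ are not proportional:
\[ M_{ij}^2 = \langle b_i,b_j\rangle^2 < \|b_i\|^2\,\|b_j\|^2 = M_{ii}M_{jj} \qquad (i\neq j). \]

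Next, for $\sigma\in S_n$ put $c_\sigma := \prod_{i=1}^n M_{i\sigma(i)}$, so that $\det B = \det T_p(M) = \sum_{\sigma\in S_n}\mathrm{sgn}(\sigma)\,c_\sigma^{\,p}$. The key claim is $|c_\sigma| < c_{\mathrm{id}} = \prod_i M_{ii}$ for every $\sigma\neq\mathrm{id}$. If some entry $M_{i\sigma(i)}$ is zero then $c_\sigma = 0 < c_{\mathrm{id}}$; otherwise decompose $\sigma$ into disjoint cycles, and for each nontrivial cycle $(i_1\,i_2\,\cdots\,i_\ell)$ apply the displayed bound to every factor:
\[ \Bigl(\prod_{a=1}^{\ell} M_{i_a i_{a+1}}\Bigr)^{2} = \prod_{a=1}^{\ell} M_{i_a i_{a+1}}^{2} < \prod_{a=1}^{\ell} M_{i_a i_a} M_{i_{a+1} i_{a+1}} = \prod_{a=1}^{\ell} M_{i_a i_a}^{2} \]
(indices mod $\ell$; each diagonal entry in the cycle is counted exactly twice on the right). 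Multiplying over all cycles of $\sigma$ — fixed points contribute the exact factor $M_{ii}^2$ — and using that $\sigma$ has at least one nontrivial cycle, we obtain $c_\sigma^{\,2} < c_{\mathrm{id}}^{\,2}$, hence $|c_\sigma| < c_{\mathrm{id}}$ since $c_{\mathrm{id}}>0$.

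Finally, set $r := \max_{\sigma\neq\mathrm{id}}|c_\sigma| < c_{\mathrm{id}}$. Then
\[ \bigl|\det T_p(M) - c_{\mathrm{id}}^{\,p}\bigr| \le \sum_{\sigma\neq\mathrm{id}}|c_\sigma|^{\,p} \le (n!-1)\,r^{\,p}, \]
so $\det T_p(M) \ge c_{\mathrm{id}}^{\,p}\bigl(1 - (n!-1)(r/c_{\mathrm{id}})^{\,p}\bigr) > 0$ as soon as $p > \log(n!-1)/\log(c_{\mathrm{id}}/r)$; the case $n=1$ is trivial since $\det B = M_{11}^p > 0$. Thus $B = T_p(ADA)$ is non-degenerate for all sufficiently large $p$. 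I expect the only delicate point to be the cycle computation establishing the strict inequality $|c_\sigma| < c_{\mathrm{id}}$ — in particular the bookkeeping that each diagonal entry of a cycle appears with multiplicity exactly two and the separate handling of cycles containing a zero entry; the rest is routine. (An alternative route would identify $T_p(M)$ with the Gram matrix of the tensor powers $b_i^{\otimes p}$ and invoke that powers of pairwise non-proportional linear forms are linearly independent once $p\ge n-1$, but the dominant-term argument above is more self-contained.)
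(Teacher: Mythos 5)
Your proof is correct. Note that the paper does not prove \cref{lemma: full_govorovTechnical} itself but defers to \cite{govorov2020dichotomy}; your argument --- interpreting $ADA$ as a Gram matrix of the pairwise non-proportional scaled rows $b_i$, invoking strict Cauchy--Schwarz to get $M_{ij}^2 < M_{ii}M_{jj}$ for $i \neq j$, and then showing the identity permutation strictly dominates the Leibniz expansion of $\det T_p(M)$ for large $p$ --- is essentially the standard argument used in that reference, and the cycle bookkeeping you flag as delicate is carried out correctly.
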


After an appropriate $p$ is chosen, 
we note that the gadget $\mathcal{P}_{n, p}$ effectively simulates edge weights given by the matrix
$$L^{(n)} = BD^{[2p]}B \cdots BD^{[2p]}B = B(D^{[2p]}B)^{n - 1} = \big(D^{[2p]}\big)^{-\frac{1}{2}} \Big(\big(D^{[2p]}\big)^{\frac{1}{2}}B\big(D^{[2p]}\big)^{\frac{1}{2}} \Big)^{n}\big(D^{[2p]}\big)^{-\frac{1}{2}}$$

Since $\tilde{B} = \left(D^{[2p]}\right)^{\frac{1}{2}}B\left(D^{[2p]}\right)^{\frac{1}{2}}$ 
is a real symmetric matrix, we know that it can be diagonalized.
Therefore, there exists a real valued, orthonormal symmetric matrix $S$,
and a real valued, diagonal matrix $J$, 
such that $\tilde{B} = SJS^{T}$. So, we see that
$$L^{(n)} = \left(D^{[2p]}\right)^{-\frac{1}{2}}\tilde{B}^{n}\left(D^{[2p]}\right)^{-\frac{1}{2}} = \left(D^{[2p]}\right)^{-\frac{1}{2}}SJ^{n}S^{T}\left(D^{[2p]}\right)^{-\frac{1}{2}}$$
From \cref{lemma: full_govorovTechnical}, 
we know that $B$ is non-degenerate, and therefore, so is $\tilde{B}$.
Therefore, $J = \text{diag}(\lambda_{1}, \lambda_{2})$, 
where $\lambda_{i} \neq 0$ for all $i \in [2]$. 
So, we see that for any $i, j \in [2]$, 
${L^{(n)}}_{ij} = a_{ij1}\lambda_{1}^{n} + a_{ij2}\lambda_{2}^{n}$,
where $a_{ij1}, a_{ij2}$ depend on $S$ and $D^{[2p]}$, 
but not on $J$ or $n$.

Now, we finally consider $Z_{M', \mathcal{D}}(G_{n, p})$. 
For convenience, 
let $V' = \bigcup_{u \in V}\{F_{1}(u), \dots, F_{\deg(u)}(u)\}$ 
denote the subset of the vertices in $G_{n, p}$ 
that lie on the original $d$ cycle of $\mathcal{R}_{d, n, p}(u)$, 
and let $E' = \big\{\{F_{\ell_{u}(\{u, v\})}(u), F_{\ell_{v}(\{u, v\})}(v)\}: \{u, v\} \in E \big\}$ 
be a subset of the edges in $G_{n, p}$ that were created 
by connecting the dangling edges of the vertex gadgets. 
Then, we see that
$$Z_{M', \mathcal{D}}(G_{n, p}) = \sum_{\sigma: V' \rightarrow [2]}\left(\prod_{\{u, v\} \in E'}M'_{\sigma(u), \sigma(v)}\right)\left(\prod_{u \in V}\prod_{i = 1}^{\deg(u)}D^{[2p + 1]}_{\sigma(F_{i}(u))} \cdot {L^{(n)}}_{\sigma(F_{i}(u)), \sigma(F_{i + 1}(u))}\right)$$
After rearranging the terms, 
we see that
$$Z_{M', \mathcal{D}}(G_{n, p}) = \sum_{x_{1}, x_{2}: x_{1} + x_{2} = 2|E|}c_{(x_{1}, x_{2})}\left(\lambda_{1}^{x_{1}}\lambda_{2}^{x_{2}}\right)^{n}$$
where the constants $c_{(x_{1}, x_{2})}$ 
do not depend on $\lambda_{i}$ or $n$. 
By computing $Z_{M', \mathcal{D}}(G_{n, p})$ for $n \in [2|E|]$, 
we obtain a Vandermonde system of linear equations. 
If there exist any $(x_{1}, x_{2}) \neq (y_{1}, y_{2})$ such that $\lambda_{1}^{x_{1}}\lambda_{2}^{x_{2}} = \lambda_{1}^{y_{1}}\lambda_{2}^{y_{2}}$, 
then we can remove the column of the matrix corresponding to
$\left(\lambda_{1}^{y_{1}}\lambda_{2}^{y_{2}}\right)^{n}$, 
and combine the two coefficients 
$c_{(x_{1}, x_{2})}$ and $c_{(y_{1}, y_{2})}$ 
into a new coefficient $c_{(x_{1}, x_{2}), (y_{1}, y_{2})}$, 
which represents the sum of the two previous coefficients. 
By repeating this process, we obtain a 
full rank Vandermonde system of linear equations, 
which can be solved. 
After solving this system of equations, we can compute
\begin{align*}
    X
    &= \sum_{x_{1}, x_{2}: x_{1} + x_{2} = 2|E|}c_{(x_{1}, x_{2})}\\
    &= \sum_{x_{1}, x_{2}: x_{1} + x_{2} = 2|E|}c_{(x_{1}, x_{2})}(1)^{x_{1}}(1)^{x_{2}}\\
    &= \sum_{\sigma: V' \rightarrow [2]}\left(\prod_{\{u, v\} \in E'}M'_{\sigma(u), \sigma(v)}\right)\left(\prod_{u \in V}\prod_{i = 1}^{\deg(u)}D^{[2p + 1]}_{\sigma(F_{i}(u))}L^{(0)}_{\sigma(F_{i}(u)), \sigma(F_{i + 1}(u))}\right)
\end{align*}
where
\begin{align*}
    L^{(0)}
    &= \left(D^{[2p]}\right)^{-\frac{1}{2}}SIS^{T}\left(D^{[2p]}\right)^{-\frac{1}{2}}\\
    &= \left(D^{[2p]}\right)^{-\frac{1}{2}}(I)\left(D^{[2p]}\right)^{-\frac{1}{2}}\\
    &= \left(D^{[2p]}\right)^{-1}
\end{align*}

Since $L^{(0)}$ is a diagonal matrix, 
any permutation $\sigma: V' \rightarrow [2]$ 
contributes to the sum $X$ 
only if $\sigma(F_{i}(u)) = \sigma(F_{i + 1}(u))$ 
for all $i \in [\deg(u)]$, for all $u \in V$. 
In that case, we see that
$$X = \sum_{\sigma': V \rightarrow [2]}\left(\prod_{\{u, v\} \in E}M'_{\sigma'(u), \sigma'(v)}\right)\left(\prod_{u \in V}\left(\frac{D^{[2p + 1]}_{\sigma'(u)}}{D^{[2p]}_{\sigma'(u)}}\right)^{\deg(u)}\right)$$

In effect, we can compute $X = Z_{M', \hat{\mathcal{D}}}(G)$, 
where $\hat{\mathcal{D}} = \{\hat{D}^{[r]}\}_{r \in \mathbb{N}}$, 
such that
$$\hat{D}^{[r]} = \begin{pmatrix}
    \left(\frac{1 + c^{2p + 1}}{1 + c^{2p}}\right)^{r} & 0\\
    0 & 1\\
\end{pmatrix}$$

In other words, we see that 
$\hat{\mathcal{D}}^{[r]} = \big(\hat{D}\big)^{r}$, 
where 
$$\hat{D} = \begin{pmatrix}
    \left(\frac{1 + c^{2p + 1}}{1 + c^{2p}}\right) & 0\\
    0 & 1\\
\end{pmatrix}$$

So, we see that we can compute
\begin{align*}
    X
    &= \sum_{\sigma': V \rightarrow [2]}\left(\prod_{\{u, v\} \in E}M'_{\sigma'(u), \sigma'(v)}\right)\left(\prod_{u \in V}\left(\hat{D}_{\sigma'(u)}\right)^{\deg(u)} \right)\\
    &= \sum_{\sigma': V \rightarrow [2]}\left(\prod_{\{u, v\} \in E}M'_{\sigma'(u), \sigma'(v)} \left(\hat{D}_{\sigma'(u)}\right)\left(\hat{D}_{\sigma'(v)}\right)\right)\\
    &= Z_{N_{p}}(G)
\end{align*}
where
$$N_{p} = \begin{pmatrix}
    \left(\frac{1 + c^{2p + 1}}{1 + c^{2p}}\right)^{2}M'_{11} & \left(\frac{1 + c^{2p + 1}}{1 + c^{2p}}\right) M'_{12}\\
    \left(\frac{1 + c^{2p + 1}}{1 + c^{2p}}\right)M'_{12} & M'_{22}
\end{pmatrix}$$ 
as in \cref{theorem: full_degreeBoundedHardness}.

$$\therefore \PlEVAL(N_{p}) \equiv \PlEVAL(M', \hat{\mathcal{D}}) \leq \PlEVAL(M', \mathcal{D})$$

This completes the proof of \cref{theorem: full_degreeBoundedHardness}.

\section{Appendix B}\label{sec: full_appendixB}

\begin{theorem}\label{thm:mapping-alg-to-rational}
There is a monotonic increasing
continuous function $F: (0,1)
\rightarrow (0,1)$ such that 
$F(a)$ is rational iff $a \in (0, 1)$ is an 
algebraic number.
\end{theorem}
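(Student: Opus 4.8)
The plan is to obtain $F$ as the canonical continuous extension of a Cantor back-and-forth isomorphism between two countable dense orders, so that continuity comes essentially for free from density of the image, rather than being engineered by hand.

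First I would fix notation: let $A=\{\,a\in(0,1):a\text{ algebraic}\,\}$ and $Q=\mathbb{Q}\cap(0,1)$. Both sets are countably infinite and dense in $(0,1)$, and each of them, viewed as a linear order, is dense-in-itself (between any two of its points lies a third, since $\mathbb{Q}\subseteq A$ and $\mathbb{Q}$ is dense) and has neither a least nor a greatest element (the would-be extrema $0$ and $1$ are not attained). By Cantor's classical theorem on countable dense linear orders without endpoints, there is an order isomorphism $\phi\colon A\to Q$.

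Next I would define $F\colon(0,1)\to\mathbb{R}$ by $F(x)=\sup\{\,\phi(a):a\in A,\ a<x\,\}$. The defining set is nonempty and bounded above by $1$; moreover $\{a\in A:a<x\}$ is a \emph{proper} initial segment of $A$, so its image $\phi(\{a<x\})$ is a proper initial segment of $Q$, whence its supremum is strictly less than $1$ (and symmetrically strictly greater than $0$), so $F$ maps $(0,1)$ into $(0,1)$. Monotonicity is immediate from the definition, and strict monotonicity follows by inserting two points $a<a'$ of $A$ strictly between $x<x'$, which gives $F(x)\le\phi(a)<\phi(a')\le F(x')$. The key computation is that $F$ restricts to $\phi$ on $A$: for $a\in A$ one has $\{\,\phi(b):b\in A,\ b<a\,\}=\{\,q\in Q:q<\phi(a)\,\}$ because $\phi$ is an order isomorphism, and this set has supremum $\phi(a)$ since $Q$ is dense; hence $F(a)=\phi(a)$ and therefore $F(A)=Q$.

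Then continuity is automatic: an increasing function with a jump at some $x_0$ omits from its range an open subinterval of $(0,1)$ (minus at most the single point $F(x_0)$), but such a subinterval contains a rational number distinct from $F(x_0)$, contradicting $Q=F(A)\subseteq F((0,1))$. Thus $F$ is a strictly increasing continuous injection of $(0,1)$ with $F(A)=Q$; being continuous and strictly increasing its image is an open interval containing $Q$, hence all of $(0,1)$, so $F$ is an increasing homeomorphism of $(0,1)$. Finally the characterization: if $a\in(0,1)$ is algebraic then $F(a)\in F(A)=Q$ is rational; conversely if $F(x)$ is rational then $F(x)\in Q=F(A)$, so $F(x)=F(a)$ for some $a\in A$, and injectivity forces $x=a$, so $x$ is algebraic. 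The only ingredient here that is not a routine check is Cantor's back-and-forth theorem, which is classical, so I expect no real obstacle; the one place to be careful is verifying that the $\sup$-extension genuinely extends $\phi$ and stays inside the \emph{open} interval, which is exactly where density of $Q$ and properness of initial segments are used. (An alternative would be to build $F$ directly by a back-and-forth that pins down rational values on an enumeration of $A$ while interpolating piecewise linearly, but the extension-of-an-order-isomorphism argument is shorter and makes continuity transparent.)
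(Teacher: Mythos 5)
Your proposal is correct and follows essentially the same route as the paper: Cantor's theorem on countable dense linear orders without endpoints yields an order isomorphism from the algebraic numbers in $(0,1)$ onto the rationals in $(0,1)$, which is then extended by suprema to a strictly increasing continuous function agreeing with the isomorphism on the algebraics. The only cosmetic difference is that you deduce continuity from the observation that a monotone function whose range contains the dense set $\mathbb{Q}\cap(0,1)$ cannot have a jump, whereas the paper verifies continuity directly with an $\epsilon$-argument; both are sound.
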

\begin{proof}
Consider  $X = 
 \overline{\mathbb{Q}} \cap (0,1)$
and $Y = \mathbb{Q}
 \cap (0,1)$, where
$\mathbb{Q}$ and $\overline{\mathbb{Q}}$
are  the rational
and algebraic numbers, respectively.
By Cantor's density theorem (see Theorem 4.9 in page 83 of ~\cite{hrbacek1999introduction}),
any two countable unbounded dense orders
are order isomorphic.
This means that there is a
1-1 onto
map $f$ from $X$ to $Y$ that is order preserving.

Now we can extend $f$ to be a
continuous function
$F: (0,1) \rightarrow (0,1)$ as follows:
For any $r \in (0,1)$,
let $X_r = \{x \in X : x < r\}$
and $X^r =  \{x \in X : x > r\}$.
Both $X_r$ and $X^r$ are bounded
subsets of $X$.
So the following $\sup$ and $\inf$  both exist and are finite.
We claim they are equal:
\begin{equation}\label{eqn:inf-sup}
\sup_{x \in X_r} f(x)
= \inf_{x \in  X^r} f(x).
\end{equation}
Since $f$ preserves order,
clearly $\sup_{x \in X_r} f(x)
\le \inf_{x \in  X^r} f(x)$.
Suppose they are unequal. Then
by $Y$ being dense,
there exist $y, y' \in Y$ such that
$\sup_{x \in X_r} f(x) < y < y' <
\inf_{x \in  X^r} f(x)$.
Consider $a = f^{-1}(y) \in X$.
If $a < r$, then there exists some
$x \in X_r$, such that $x >a$.
Then $f(x) > f(a) =y > \sup_{x \in X_r} f(x)$, a contradiction.
Hence $a \ge r$.
Similarly, $a' = f^{-1}(y') \in X$, and $a' \le r$.
But then $y' = f(a') \le f(a) =y$,
a contradiction.

Now we define $F(r)$
to be the common value in
\cref{eqn:inf-sup}.
We claim that this $F$ is an extension of the order
isomorphism $f$.
Let $r \in X$. For any
$\epsilon > 0$, there exists
$y \in Y$ such that
$f(r) < y < f(r) + \epsilon$.
Then $f^{-1}(y) \in X^r$ 
and $F(r) = \inf_{x \in  X^r} f(x) \le y < f(r) + \epsilon$.
Similarly, $F(r) > f(r) - \epsilon$.
Since $\epsilon > 0$
is arbitrary,   $F(r) = f(r)$.

Now we show that $F$ is continuous.
For any $r \in (0, 1)$ and $\epsilon >0$,
there exist $x \in X_r$
and $x' \in X^r$, such that
$F(r) -\epsilon < f(x) <
F(r) < f(x') < F(r) +\epsilon $.
We claim that for every
$z \in (x, x')$,
\[|F(z) - F(r)| < \epsilon.\]
This is because by the definition of inf and sup,
$f(x) \le F(z) \le f(x')$.
Hence
\[F(r) -\epsilon < f(x) \le F(z) 
\le  f(x') < F(r) +\epsilon.
\]
Finally, it is easy to see
that $F$ is strictly increasing.
Suppose $x < x'$. There are
$a, a' \in X$, such that
$x < a < a' < x'$.
Then $F(x) \le f(a) < f(a') \le F(x')$.
We conclude that if $F(x)$
is rational, then $x$ is
algebraic. Indeed,
$a = f^{-1}(F(x))$
is defined and is algebraic,
and $F(a) = f(a) = F(x)$.
Thus $x=a$ is algebraic.
Conversely, if $x$ is algebraic, as $F$ is an extension of $f$,
$F(x) = f(x)$, which is algebraic by the given property of $f$.
\end{proof}

\end{document}